\documentclass{article}
\usepackage{graphicx} 
\usepackage[a4paper, margin=1in]{geometry} 
\RequirePackage{etex}
\usepackage[english]{babel}
\usepackage{subcaption}
\usepackage{mathrsfs}
\usepackage[all]{xy}
\usepackage{color}
\usepackage{url}
\usepackage{indent first}
\usepackage[labelfont=bf,labelsep=period,justification=raggedright]{caption}
\usepackage{quantikz}
\usepackage{amsthm}
\usepackage{amsmath}
\usepackage{amsfonts}
\usepackage{amssymb}
\usepackage{tikz}
\usetikzlibrary{arrows.meta, positioning, bending}
\usepackage{float}
\usepackage{xcolor}
\usepackage{subcaption}
\theoremstyle{plain}
\newtheorem{Theorem}{Theorem}[section]
\newtheorem{cor}[Theorem]{Corollary}

\newtheorem{prop}[Theorem]{Proposition}
\newtheorem{lemma}[Theorem]{Lemma}

\newtheorem{defn}[Theorem]{Definition}

\newtheorem{remark}[Theorem]{Remark}
\newtheorem{ex}[Theorem]{Example}

\usepackage{csquotes}
\makeatletter
\renewcommand{\theTheorem}{%
  \thesection
  \ifnum\value{subsection}>0
    .\arabic{subsection}%
  \fi
  .\arabic{Theorem}}
\makeatother
\usepackage{graphicx} 
\usepackage[utf8]{inputenc}
\usepackage[ backend=biber]{biblatex}
\usepackage{abstract} 
\usepackage[colorlinks=true, linkcolor=blue, citecolor=blue, urlcolor=blue]{hyperref}
\usepackage{authblk} 
\renewbibmacro{in:}{}
\title{Construction of Partial Join Graphs with Perfect State Transfer
in Shunt Decomposition-Based Quantum Walks}
                  
\author[1]{Banita Katuwal\thanks{Corresponding author. Email: banitakatuwal@sssihl.edu.in}}
\author[1]{Y~Lakshmi Naidu\thanks{Email: ylakshminaidu@sssihl.edu.in}}
\author[1]{Srinath M~S\thanks{Email: srinathms@sssihl.edu.in}}
\author[2]{Supriyo Dutta\thanks{Email:dosupriyo@gmail.com}}

\affil[1]{Department of Mathematics and Computer Science\\
Sri Sathya Sai Institute of Higher Learning\\
Andhra Pradesh, India -- 515001}

\affil[2]{Department of Mathematics\\
National Institute of Technology Agartala\\
Jirania, West Tripura, India -- 799046}
\date{}
\begin{document}

\maketitle
\begin{abstract}
In this paper, we define directed partial join graphs with signed couplings and construct
discrete-time quantum-walk transition operators for these graphs using
the shunt-decomposition framework. The resulting transition operators
apply to several important graph families, including complete graphs
with loops, circulant partial joins, complete bipartite graphs, tensor
powers of complete bipartite graphs,  all with signed couplings. For each family,
we identify the corresponding structure of the transition operator and
derive  necessary and sufficient conditions for periodicity  and perfect state transfer (PST), when one of the two directed regular graphs admits PST. Based on these results, we identify two types of state transfer; internal PST, which occurs between vertices within the same graph, and coupling PST, which occurs between two components of join graphs. We further develop a double-cover construction for directed partial join
graphs and derive conditions for periodicity and PST
when the associated transition operators do not necessarily commute.
Using this construction, we establish PST results for double covers of
complete graphs with loops. In particular, we provide an example in
which the complete graph \(K_n\) does not exhibit PST for \(n\geq4\),
whereas a suitable partial join of \(K_n\) exhibits PST when
\(n=2^m\), \(m\geq2\). Hence, these results extend the class of graph families
admitting PST in shunt-decomposition-based quantum walks
and provide a unified framework for studying quantum state transfer in
graph joins, products, and covers. \\~\\

\noindent\textit{Keywords:} quantum walks, perfect state transfer, join graphs, shunt decomposition walks, graph spectra\\

\noindent\textit{MSC 2020 subject classifications:} 05C50; 81Q99
\end{abstract}
\section{Introduction}
Quantum walks are the quantum analogues of classical random walks~\cite{portugal2013quantum}. It has 
emerged as one of the most versatile framework in quantum computing schemes~\cite{childs2009universal,lovett2010universal,childs2013universal},
quantum algorithm design~\cite{portugal2013quantum,childs2003exponential,Ambainis2003},  quantum cryptography~\cite{Vlachou2015},  quantum key
distribution~\cite{vlachou2018quantum}, and many more. A classical random walk describes the probabilistic
evolution of a particle hopping between the vertices of a graph. A quantum
walk describes the evolution of a quantum state through a unitary process
determined by the structure of the underlying graph. 
 
Quantum walks are broadly classified into two models:
continuous-time quantum walks (CTQWs)~\cite{CoutinhoGodsil2016,FarhiGutmann1998} and  discrete-time
quantum walks (DTQWs)~\cite{Ambainis2003,GodsilZhan2019}. In a CTQW, the evolution is governed by a
Hamiltonian, typically derived from the adjacency matrix or the Laplacian
matrix of the graph. In contrast, a DTQW evolves through the repeated
application of a unitary coin operator followed by a shift operator,
producing a discrete-time unitary evolution. Several  models of
DTQWs have been developed, each motivated by different physical settings
and mathematical frameworks. These include the arc-reversal, two-reflection, sedentary,  vertex-face, and 
shunt-decomposition models~\cite{GodsilZhan2019,Godsil_Zhan_2023,zhan2021quantum}.
 
Among these, we focus on the shunt-decomposition model, also known as the shunt-decomposition walk~\cite[Chapter 7]{Godsil_Zhan_2023}. This model was first introduced by Aharonov et al.~\cite{Aharonov2001} and was later reformulated in a combinatorial framework by Godsil and Zhan~\cite{GodsilZhan2019}. It provides an elegant combinatorial description of quantum walks in which the evolution alternates between a coin operator and a shift operator that moves the walker along outgoing arcs belonging to a prescribed shunt (or arc class). This model has been extensively studied for its spectral properties and mixing behavior, including characterizations of uniform average mixing when the Grover coin is employed~\cite{Godsil_Zhan_2023}. Owing to its simple yet expressive structure, this model is particularly well suited for quantum circuit implementations~\cite{WingBocanegra2023}. Furthermore, it has found applications in a variety of areas of quantum information processing, including quantum search algorithms~\cite{WingBocanegra2025}, quantum channels~\cite{Katuwal2026}, and other quantum computing tasks~\cite{sato2024circuit}.  Beyond their algorithmic applications, quantum walks provide a natural framework for modeling the transport of quantum information across networks of interacting qubits~\cite{portugal2013quantum}. Such networks can be represented by simple, undirected graphs, where vertices correspond to qubits and edges represent interactions between them. This graph-theoretic viewpoint has established quantum walks as a powerful tool for studying quantum communication protocols~\cite{panda2023quantum} and information transfer in quantum networks~\cite{mulken2011continuous}.

A fundamental concept in quantum communication assisted by quantum walks
is PST~\cite{christandl2004perfect}, which
describes the lossless transfer of a quantum state between two vertices
of a graph. A quantum walk on a graph $G$ is said to exhibit PST from
a vertex $a$ to a vertex $b$, where $a,b\in V(G)$, if there exists a
time at which a quantum state initially localized at $a$ evolves to the
state localized at $b$, up to a global phase. In the discrete-time
setting~\cite{Godsil_Zhan_2023}, this condition can be written as
\(
U_G^{k}e_a=\gamma e_b, \quad \left|\langle e_b|U^{k}|e_a\rangle\right|=1,
\quad |\gamma|=1,
\)
for some positive integer $k$, where $U_G$ is the evolution operator of
the quantum walk and $e_a,e_b$ are the basis states associated with the
vertices $a$ and $b$, respectively. 
In the continuous-time setting~\cite{godsil2012state}, the evolution is
generated by the adjacency matrix $A$ of $G$, and PST from $a$ to $b$
at time $t\in\mathbb{R}$ is characterized by
\(
\left|\left\langle e_b,e^{-iAt}e_a\right\rangle\right|=1.
\)
 PST on simple graphs is a rare phenomenon, as only special classes of graphs admit it. Nevertheless, when it occurs, PST provides an ideal mechanism for transmitting quantum information with unit fidelity and without requiring active control during the evolution.  A closely related concept is periodicity. A quantum walk is periodic at a vertex $a$ if the quantum state returns to $a$, up to a global phase, after a finite evolution time~\cite{Godsil2011}. Since a walk exhibiting PST between two vertices must eventually return to its initial state, periodicity is a necessary condition for PST~\cite{ChanZhan2023,Godsil2011}. Accordingly, the study of PST often begins with an analysis of the periodicity of the underlying quantum walk.

A substantial body of research has identified graph families admitting PST
in both continuous- and discrete-time quantum walks~\cite{angeles2009perfect,godsil2012state,CoutinhoGodsil2016,angeles2009perfect,ChanZhan2023, Katuwal2026,coutinho2016perfect,Katuwal2026pst}. 
In the continuous-time quantum walk framework, considerable attention
has been devoted to graph constructions based on joins and graph
products, as these operations provide systematic methods for generating
larger graph families. In Hamiltonian-based models, PST is governed primarily by the spectral properties of the
adjacency or Laplacian matrices, including eigenvalue spacing,
commensurability, and phase alignment.

The study of PST on join graphs began with the work of Bose
\emph{et al.}~\cite{Bose2009}, who showed that the graph obtained by
removing a single edge from the complete graph,
\(
K_n\setminus e \;\cong\; \overline{K_2}\vee K_{n-2},
\)
admits Laplacian PST between the two non-adjacent
vertices whenever $n\equiv0\pmod{4}$. This construction demonstrated
that a simple graph join can induce PST even though the complete graph
itself does not exhibit this property. Subsequently, Angeles-Canul
\emph{et al.}~\cite{angeles2009perfect} generalized this result by
establishing sufficient conditions for adjacency-based PST in the
unweighted joins $X\vee Y$, where $X\in\{\overline{K_2},K_2\}$ and $Y$
is a regular graph. They further identified several families of
join graphs admitting PST, including circulant joins $G+_{C}G$, integral
circulant graph families
\(
\mathrm{ICG}_n\big(\{2,n/2^{b}\}\cup Q\big),
\)
where $b\in\{1,2\}$, $n$ is a multiple of $16$, and $Q$ is a subset of the
odd divisors of $n$; $n$-fold self-joins $G^{+n}$; $\mathrm{ICG}_n(\{1,n/2\})^{+m}$;
joins of known integral circulant and hypercubic PST graphs,
$\bigsqcup_{k=1}^{m}G_k$.  

Since then, numerous join and product graph constructions admitting PST have been developed. To the best of our knowledge, the graph families summarized in Table~\ref{tab:known_join_pst} represent the principal join and product constructions known to admit PST in the continuous-time quantum walk framework. More recently, Kirkland and Monterde~\cite{kirkland2026quantum} demonstrated that the join operation itself can induce PST even when neither of the constituent graphs admits PST individually. In addition to these works, several other studies have investigated PST on graph joins and products from different perspectives~\cite{godsil2025perfect, coutinho2016perfect,godsil2021sedentary}, further enriching the theory and expanding the range of graph families known to exhibit PST.

\begin{table}[h]
\centering

\renewcommand{\arraystretch}{0.98}
\scalebox{0.85} {
\begin{tabular}{|c| p{11cm} |c|}
\hline
\textbf{No.} & \textbf{Graph / Construction} & \textbf{Refs} \\
\hline
3&Double cone over a $k$-regular graph with suitably weighted apex edge and/or apex self-loops; Weighted Cartesian products (Hamming graphs); PST between every pair of vertices and between uniform superpositions on arbitrary hypercube subcubes& ~\cite{angeles2009quantum}\\
\hline
4&Weighted double cones $K_2+G$ (connected graphs);
double half-cones $K_1+G\circ G+K_1$ with circulant connections;
cylindrical cones $K_1+G+\overline{K_n}+G+K_1$ (shown not to admit PST)&~\cite{ge2011perfect}\\
\hline
5&Unweighted double cone $\overline{K_2}+G$ (Laplacian); PST between the apex vertices iff $|V(G)|\equiv 2 \pmod{4}$; no PST for $K_2+G$&~\cite{alvir2016perfect}\\
\hline
6& Semi-Cayley graphs $SC(G,R,L,G)$, which is the joins of Cayley graphs $\mathrm{Cay}(G,R)$ and $\mathrm{Cay}(G,L)$ over an abelian group $G$, where $R$ and $L$ are the inverse-closed connection sets of the two Cayley graphs & ~\cite{arezoomand2023perfect}\\
\hline 
7&Signed join $(-K_2)+G$ ($G$ $(n,3)$-regular); signed complete graphs; double covers of signed graphs; and exterior powers&~\cite{2481614.2481624}\\
\hline
\end{tabular}
}
\caption{Known join graphs constructions admitting PST in the continuous-time quantum walk framework.}
\label{tab:known_join_pst}
\end{table}

Although PST on join graphs is well understood in the continuous-time setting, the corresponding discrete-time case remains largely unexplored. Unlike continuous-time quantum walks, where PST is determined by the spectral properties of adjacency or Laplacian matrices, discrete-time PST depends on the eigenphase structure of the unitary evolution operator. In particular, transfer occurs only when the relevant eigenphases synchronize while transitions between different join components are suppressed. Moreover, standard graph joins often introduce dimensional inconsistencies in coin-and-shift quantum walk models. The partial join construction considered here overcomes this limitation by preserving degree compatibility through controlled interconnections between component graphs. As a result, it substantially expands the class of graphs admitting PST in discrete-time quantum walks and provides new insights into the interplay between graph structure and quantum information transport.

The remainder of this paper is organized as follows. Section~\ref{sec2}
presents the preliminaries of the shunt-decomposition-based
discrete-time quantum-walk framework and reviews the notions of
periodicity and PST. Section~\ref{sec3}
introduces several classes of partial join graphs with signed coupling based on their
block structure. Section~\ref{sec4} constructs circulant partial joins
and discusses directed partial joins and tensor product graphs, whose
block-form adjacency matrices fall into different cases.
Section~\ref{sec5} constructs the transition operator for directed
partial join graphs and derives conditions under which the transition
matrix $U_G$ associated with a directed partial join of two
$d$-regular directed graphs is unitary. Section~\ref{sec6} derives
spectral conditions for periodicity and PST in directed partial joins
when all block transition matrices are identical. These results are
then applied to several graph families with even $n$, including the
complete graph with loops $K_n^{\circlearrowleft}$, the complete bipartite graph
$K_{n,n}$, the tensor product graphs $(K_{n,n})^{\otimes n}$ and
$K_2^{\otimes n}$, and the circulant join graph $C_n(S)$ with
$S=\{1,\ldots,n/2\}$, all with signed coupling.
Section~\ref{sec7} considers the third case, in which the block
transition matrices of the directed partial join do not necessarily
commute. We investigate the corresponding double-cover construction
and derive conditions for periodicity and PST in this setting. By
combining these results with those obtained in Section~\ref{sec6}, we
establish conditions for periodicity and PST in double covers of
complete graphs with loops. Section~\ref{sec8} presents an example in which the complete graph
\(K_n\), \(n\geq4\), does not exhibit PST, whereas a suitable partial
join of \(K_n\) exhibits PST. Finally, Section~\ref{sec9} concludes the
paper with a summary of the main results.

\section{Shunt Decomposition Walks and Perfect State Transfer(PST)}\label{sec2}

A \emph{shunt} on a directed graph $G$ is a permutation on its vertex set in which every vertex is mapped to one of its out-neighbors. A \emph{shunt decomposition} is a collection of such shunts whose corresponding arc sets partition the arc set of $G$. If $G$ is $d$-regular directed graph (that is, every vertex has exactly $d$ in-neighbors and $d$ out-neighbors), then a shunt decomposition consists of exactly $d$ shunts.

Shunt decompositions provide a natural framework for constructing discrete-time quantum walks, since they decompose the adjacency matrix into a sum of permutation matrices that define the shift operator. Specifically, if $A$ is the adjacency matrix of a $d$-regular directed graph $G$, then there exist permutation matrices $P_1, \ldots, P_d \in \mathbb{C}^{n \times n}$ satisfying
\begin{equation}
A = \sum_{j=1}^{d} P_j,
\end{equation}
where $n = |V(G)|$. The existence of such a decomposition follows from the below lemma.
\begin{lemma}~\cite[Lemma~7.1.1]{Godsil_Zhan_2023}\label{lemma 2.1}
Let \(G\) be a \(d\)-regular directed graph. Then \(G\) admits a shunt decompositions.  
\end{lemma}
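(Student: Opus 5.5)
The plan is to reduce the statement to the classical theorem of K\"onig (equivalently Hall's marriage theorem), which says that a $d$-regular bipartite (multi)graph decomposes into $d$ perfect matchings. To a $d$-regular directed graph $G$ on vertex set $V$ with $|V|=n$, we associate the bipartite graph $B(G)$ with parts $V^{+}=\{v^{+}: v\in V\}$ and $V^{-}=\{v^{-}: v\in V\}$. For each arc $(u,v)$ of $G$ we put one edge $u^{+}v^{-}$; loops $(u,u)$ become edges $u^{+}u^{-}$. Since every vertex has out-degree $d$ and in-degree $d$, the graph $B(G)$ is $d$-regular bipartite. Its biadjacency matrix is exactly the adjacency matrix $A$ of $G$.

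The key dictionary is that a perfect matching $M$ of $B(G)$ is the same thing as a permutation $\sigma$ of $V$ with $(u,\sigma(u))$ an arc for every $u$. Indeed, each $u^{+}$ is matched to exactly one $v^{-}$, and each $v^{-}$ is hit exactly once, so $\sigma$ is a bijection. Such a $\sigma$ is precisely a shunt, and the arcs it uses are the arcs corresponding to the edges of $M$. Hence a partition of the edges of $B(G)$ into $d$ perfect matchings gives $d$ shunts whose arc sets partition the arcs of $G$. Equivalently, it gives $A=\sum_{j=1}^{d}P_j$ with permutation matrices $P_j$.

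It remains to prove that a $d$-regular bipartite graph has a perfect matching; we argue by induction on $d$. For $d=0$ there is nothing to do. For $d\ge1$, verify Hall's condition: for $S\subseteq V^{+}$, the $d|S|$ edges leaving $S$ all land in $N(S)$, which can absorb at most $d|N(S)|$ edges. So $|N(S)|\ge|S|$, and Hall's theorem yields a perfect matching $M_1$. Removing $M_1$ leaves a $(d-1)$-regular bipartite graph, and the induction hypothesis finishes the proof.

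The main point needing care is the Hall step, which is standard. The second subtlety is making sure loops and (if allowed) repeated arcs are handled correctly in $B(G)$. Loops cause no trouble because $u^{+}$ and $u^{-}$ lie in different parts. Multiple arcs simply become parallel edges, and Hall's argument works verbatim for bipartite multigraphs. Alternatively, one can phrase the whole argument via Birkhoff--von Neumann applied to the doubly stochastic matrix $A/d$. That route requires the extra remark that an integer matrix with all row and column sums equal to $d$ can be peeled off one permutation matrix at a time, which is again the Hall step.
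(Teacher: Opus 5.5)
Your proof is correct and complete. The reduction to the $d$-regular bipartite graph $B(G)$ with biadjacency matrix $A$, the correspondence between perfect matchings and shunts, and the Hall-plus-induction step (including your treatment of loops and parallel arcs) are all sound. The paper gives no argument of its own and only cites Godsil--Zhan, Lemma~7.1.1, which, as far as I recall, relies on the same K\"onig-type fact you prove: a nonnegative integer matrix with all row and column sums equal to $d$ is a sum of $d$ permutation matrices.
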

In a discrete-time quantum walk on a $d$-regular directed graph $G$ with
$n=|V(G)|$ vertices, the state space is the Hilbert space
\(
\mathcal{H}=\mathbb{C}^{n}\otimes\mathbb{C}^{d},
\)
where the first factor represents the vertex space and the second
factor represents the $d$-dimensional coin space.
Using the natural identification
\(
\mathbb{C}^{d}\otimes\mathbb{C}^{n}
\cong
\mathbb{C}^{n}\otimes\mathbb{C}^{d},
\)
the evolution operator can be written as
\begin{equation}\label{transition_shift}
U=SC,
\end{equation}
where the shift and coin operators defined as
\begin{equation}\label{shift_matrix}
S=
\sum_{j=1}^{d}P_j\otimes E_{jj}, \quad
C=
\sum_{v\in V(G)}E_{vv}\otimes C_v.
\end{equation}
Here, $P_j$ is the permutation matrix associated with the $j$-th
shunt in the shunt decomposition of $G$. The matrices $E_{jj}$ and
$E_{vv}$ denote the corresponding matrix units. In particular,
$E_{jj}$ is the $d\times d$ matrix with a $1$ in the $(j,j)$-entry and
zeros elsewhere, while $E_{vv}$ is the $n\times n$ matrix with a $1$ in
the $(v,v)$-entry and zeros elsewhere. The matrix $C_v$ is the coin
operator acting on the $d$-dimensional coin space associated with
vertex $v$. Thus, the $j$-th coin state is shifted according to the permutation
$P_j$, resulting in a deterministic shift along the corresponding
shunt, as discussed in~\cite[\textit{Sec.}~2.2]{GodsilZhan2019}.
In the case of a uniform coin, we take
\(
C_v=C_0
\quad\text{for all }v\in V(G).
\)
Consequently,
\begin{equation}
C
=
\sum_{v\in V(G)}E_{vv}\otimes C_0
=
I_n\otimes C_0,
\end{equation}
and hence
\begin{equation}
U
=
\left(
\sum_{j=1}^{d}P_j\otimes E_{jj}
\right)
(I_n\otimes C_0).
\end{equation}
The state of the walk evolves in discrete time according to
\(
\psi_k=U^k x,
\)
where $x\in\mathcal{H}$ is the initial state. Following~\cite{ChanZhan2023}, we say that there is PST from a state $x\in\mathcal{H}$ to a state
$y\in\mathcal{H}$ if there exists a positive integer $k$ such that
\begin{equation}
U^k x=\gamma y,
\end{equation}
for some $\gamma\in\mathbb{C}$ with $|\gamma|=1$. Equivalently, PST
from $x$ to $y$ occurs at time $k$ if
\begin{equation}
\left|\langle U^k x,y\rangle\right|=1.
\end{equation}
We say that the  graph  $G$
is said to be \emph{periodic} if there exists a positive integer $\tau$
such that
\(
U^\tau=I.
\)
The smallest such positive integer $\tau$ is called the
\emph{period of the graph} $G$. For a state $x\in\mathcal{H}$, we say that $G$ is \emph{periodic at $x$} if
there exist a positive integer $\tau$ and a scaler
$\gamma\in\mathbb{C}$ with $|\gamma|=1$ such that
\begin{equation}
U^\tau x=\gamma x.
\end{equation} The smallest such positive integer $\tau$ is called the
\emph{period of $G$ at $x$}~\cite{Godsil_Zhan_2023}.

\section{Directed Partial Join Graphs with Signed Coupling}\label{sec3}

\begin{defn}
Let $G_1$ and $G_2$ be two vertex-disjoint $d$-regular directed graphs. The \emph{join} $G_1 \vee G_2$ is the graph with vertex set $V(G_1) \cup V(G_2)$ and edge set
\[
E(G_1) \cup E(G_2) \cup \{(u,v) : u \in V(G_1),\ v \in V(G_2)\}.
\]
In this configuration, every vertex in $G_1$ is adjacent to every vertex in $G_2$.
\end{defn}

Let $G = G_1 \vee G_2$ denote the standard join of graphs $G_1$ and $G_2$ with $|V(G_1)| = n_1$, $|V(G_2)| = n_2$. If
\(
A_{G_1} \in \mathbb{R}^{n_1 \times n_1},
\quad
A_{G_2} \in \mathbb{R}^{n_2 \times n_2}
\)
are the adjacency matrices of $G_1$ and $G_2$, respectively, then the adjacency matrix of the join graph $G$ is
\begin{equation}
A_G =
\begin{pmatrix}
A_{G_1} & J_{n_1 \times n_2} \\[4pt]
J_{n_2 \times n_1} & A_{G_2}
\end{pmatrix},
\end{equation}
where $J_{n_1 \times n_2}$ denotes the all-ones matrix of order $n_1 \times n_2$, and $J_{n_2 \times n_1} = J_{n_1\times n_2}^{\top}$ is the all-ones matrix of order $n_2 \times n_1$ \cite{GodsilRoyle2001}. When $n_1 = n_2 = n$, we write simply $J_n$ for the $n \times n$ all-ones matrix.

For a self-join graph, consider as an example the \emph{looped} complete graph $K_{2n}^{\circlearrowleft}$, that is, $K_{2n}$ together with a loop added at every vertex. Then
\begin{equation}
K_{2n}^{\circlearrowleft} \;\cong\; K_n^{\circlearrowleft} \vee K_n^{\circlearrowleft},
\end{equation}
where $K_n^{\circlearrowleft}$ denotes the complete graph on $n$ vertices with a loop added at every vertex (see Fig.~\ref{Fig 2}). In other words, the self-join of $K_n^{\circlearrowleft}$ with itself is isomorphic to the complete graph on $2n$ vertices with loops at every vertex. The adjacency matrix is given by
\begin{equation}\label{eq12}
A\big(K_{2n}^{\circlearrowleft}\big) =
\begin{pmatrix}
J_n & J_n \\
J_n & J_n
\end{pmatrix}.
\end{equation}
Here each block has size $n \times n$, and \emph{all four blocks are equal}, each being the all-ones matrix $J_n$. In this setting, the adjacency matrix of the join reduces to the all-ones matrix $J_{2n}$.

Next, consider the loopless complete bipartite graph $K_{n,n}$, where
\begin{equation}\label{loopless complete bipartite graph $K_{n,n}$}
A\big(K_{n,n}\big) =
\begin{pmatrix}
0 & J_n \\
J_n & 0
\end{pmatrix}.
\end{equation}
Here the two diagonal blocks of the adjacency matrix are zero, while the two off-diagonal blocks are both equal to the all-ones matrix $J_n$. Equivalently, $K_{n,n}$ is the full join of two independent sets of $n$ vertices each.

\medskip

Now we adopt the definition of partial joins of graphs \cite{stiebitz1993colouring}. Instead of joining every vertex of $G_1$ to every vertex of $G_2$, we restrict the interconnections to a prescribed edge set chosen so that each vertex is connected to a fixed number of vertices in the other graph.

\begin{figure}[t]
\centering
\begin{tikzpicture}[
scale=1.3, transform shape,
every node/.style={circle, draw, fill=white, inner sep=1.5pt, minimum size=15pt, font=\small},
AtoB/.style={blue, semithick, ->, >=stealth, dash pattern=on 2pt off 1.2pt},
BtoA/.style={red, semithick, ->, >=stealth, dash pattern=on 2pt off 1.2pt},
Gedge/.style={thick, <->, >=stealth}
]

\node (a1) at (0, 1) {$a_1$};
\node (a2) at (0, -1) {$a_2$};

\draw[Gedge] (a1) -- (a2);
\draw[>=stealth, <->] (a1) edge[loop left] ();
\draw[>=stealth, <->] (a2) edge[loop left] ();

\node (b1) at (3, 1) {$b_1$};
\node (b2) at (3, -1) {$b_2$};

\draw[Gedge] (b1) -- (b2);
\draw[>=stealth, <->] (b1) edge[loop right] ();
\draw[>=stealth, <->] (b2) edge[loop right] ();


\draw[AtoB] (a1) to[bend left=10] (b1);
\draw[BtoA] (b1) to[bend left=10] (a1);

\draw[AtoB] (a2) to[bend right=10] (b2);
\draw[BtoA] (b2) to[bend right=10] (a2);

\draw[AtoB] (a1) to[bend left=15] (b2);
\draw[BtoA] (b2) to[bend left=15] (a1);

\draw[AtoB] (a2) to[bend right=15] (b1);
\draw[BtoA] (b1) to[bend right=15] (a2);

\end{tikzpicture}
\caption{The full join $K_2^\circlearrowleft \vee K_2^\circlearrowleft$ , shown with bidirectional internal edges and loops, representing a complete graph $K_4^\circlearrowleft$ with loops.}
\label{Fig 2}
\end{figure}
\begin{defn}\label{def:dpj-reg}
Let $G_1=(V_1,E_1)$ and $G_2=(V_2,E_2)$ be vertex-disjoint
$d$-regular directed graphs, each on $n$ vertices. Let
$d_1,d_2\geq 0$ with $d_1,d_2\leq n$. The
\emph{directed partial join}
\(
G=G_1\overset{(d_1,d_2)}{\vec{\vee}}G_2
\)
is the directed graph with vertex set
\(
V(G)=V_1\cup V_2
\)
and edge set
\[
E(G)=E_1\cup E_2\cup E(\mathcal{J}_1)\cup E(\mathcal{J}_2),
\]
where $\mathcal{J}_1 $ is a directed bipartite coupling  graph from \(G_1\) to \(G_2\) on $V_1\cup V_2$ that adds \emph{exactly} $d_1$ arcs from each $a_i\in V_1$ to distinct vertices in $V_2$, and $\mathcal{J}_2 $ from \( G_2\) to \(G_1\) that adds \emph{exactly} $d_2$ arcs from each $b_i \in V_2$ to distinct vertices in $V_1$. 
With the vertices ordered as
\(
V_1=\{a_1,\ldots,a_n\},
\quad
V_2=\{b_1,\ldots,b_n\},
\)
the adjacency matrix of $G$ has the block form
\begin{equation}
A(G)
=
\begin{pmatrix}
A(G_1) & A(\mathcal{J}_1)\\
A(\mathcal{J}_2) & A(G_2)
\end{pmatrix},
\end{equation}
where
\(
A(\mathcal{J}_1)_{ij}=1
\iff
(a_i,b_j)\in E(\mathcal{J}_1),
\)
and
\(
A(\mathcal{J}_2)_{ij}=1
\iff
(b_i,a_j)\in E(\mathcal{J}_2).
\)
Equivalently,
\(
A(G)=A_{\mathcal G}+A_{\mathcal J},
\)
where
\begin{equation}
A_{\mathcal G}
=
\begin{pmatrix}
A(G_1)&0\\
0&A(G_2)
\end{pmatrix},
\qquad
A_{\mathcal J}
=
\begin{pmatrix}
0&A(\mathcal{J}_1)\\
A(\mathcal{J}_2)&0
\end{pmatrix}.
\end{equation}
In particular, if  $d_1=d_2=d$, we call $G$ a \emph{$2d$-regular directed partial join graph}, and $\mathcal{J}$ the \emph{$d$-regular directed coupling graph} of $G$.
\end{defn}
A {signed graph} is a graph in which every edge is assigned one
of two signs
\(
+, -.
\)
Formally, a signed graph is usually written as
\(
\Sigma=(G,\sigma),
\)
where $G=(V,E)$ is the underlying graph and
\(
\sigma:E\to\{+1,-1\}
\)
is the {sign function} that assigns a sign to each edge.
If
\(
\sigma(e)=+1
\quad\text{for every }e\in E,
\)
then $\Sigma$ is called an {all-positive signed graph}. If
\(
\sigma(e)=-1
\quad\text{for every }e\in E,
\)
then $\Sigma$ is called an {all-negative signed graph}.
A signed graph is called {homogeneous} if it is either
all-positive or all-negative. Otherwise, it is called
{heterogeneous}, meaning that it contains both positive and
negative edges~\cite{acharya2016lict}.

\begin{defn}\cite[Definition 1]{zhang2022polarity}\label{def:signed-bipartite}
A bipartite graph has two separate vertex sets, each of which only has
connections with the vertices in the other vertex set. A signed
bipartite graph can be denoted as
\(
G=(V_1,V_2,E^+,E^-),
\)
where
\(
V_1=\{a_1,a_2,\ldots,a_n\},
\quad
V_2=\{b_1,b_2,\ldots,b_n\}
\)
are the mutually exclusive vertex sets. The sets
\(
E^+\subseteq V_1\times V_2
\quad\text{and}\quad
E^-\subseteq V_1\times V_2
\)
are the positive and negative edges that connect vertices between the
two vertex sets, where
\(
E^+\cap E^-=\varnothing.
\)
\end{defn}
Based on Definition~\ref{def:signed-bipartite}, we define a
\emph{signed directed coupling graph} of \(G\).
\begin{defn}\label{def:signed-coupling}
Let
\(
\mathcal{J}=\mathcal{J}_1\cup\mathcal{J}_2
\)
be the directed bipartite coupling graph of a directed partial join
$G=G_1\overset{(d_1,d_2)}{\vec{\vee}}G_2$, where
$\mathcal{J}_1$ consists of arcs from $G_1$ to $G_2$ and
$\mathcal{J}_2$ consists of arcs from $G_2$ to $G_1$.
Assign the sign $+$ to every arc in $\mathcal{J}_1$ and the sign $-$
to every arc in $\mathcal{J}_2$. We denote the resulting signed
directed coupling graph by $\mathcal{J}_{\pm}$.
Its signed adjacency matrix is
\begin{equation}\label{signed_adjacency}
A_{\mathcal{J_{\pm}}}=
\begin{pmatrix}
0 & A(\mathcal{J}_1)\\
-A(\mathcal{J}_2) & 0
\end{pmatrix}.
\end{equation}
Thus, $\mathcal{J}_{\pm}$ is a heterogeneous signed directed coupling
graph, since it contains both positive and negative arcs. If \(d_1=d_2=d\), then \(\mathcal{J}\) is called the \(d\)-regular signed directed coupling graph of \(G\).
\end{defn}

The corresponding signed adjacency matrices of the $2d$-regular directed
partial join graphs with signed coupling, given by
\eqref{eq12} and
\eqref{loopless complete bipartite graph $K_{n,n}$}, are
\begin{equation}\label{signed adjacency matrices}
A_{\pm}\big(K_{2n}^{\circlearrowleft}\big)
=
\begin{pmatrix}
J_n & J_n\\
-J_n & J_n
\end{pmatrix},
\qquad
A_{\pm}\big(K_{n,n}\big)
=
\begin{pmatrix}
0 & J_n\\
-J_n & 0
\end{pmatrix}.
\end{equation}
Based on this framework, we distinguish the following coupling cases.
\begin{enumerate}\label{reciprocal_coupling}
    \item \textit{Identical graphs with reciprocal coupling:}
    Suppose that $G_1$ and $G_2$ are identical under the correspondence $a_i\leftrightarrow b_i$, so that
    \(
    A(G_1)=A(G_2),
    \)
    and suppose that the coupling is reciprocal, that is,\ $a_i\to b_j\in E(\mathcal{J}_1) \iff b_j\to a_i\in E(\mathcal{J}_2)$. Then
    \(
    A(\mathcal{J}_2)=A(\mathcal{J}_1)^{\top}.
    \)
    In particular, reciprocity forces $d_1=d_2$, since transposition preserves the number of arcs, so $nd_1=|E(\mathcal{J}_1)|=|E(\mathcal{J}_2)|=nd_2$. If, moreover, $A(\mathcal{J}_1)$ is symmetric, then $A(\mathcal{J}_2)=A(\mathcal{J}_1)^\top=A(\mathcal{J}_1)$, so
    \(
    A(\mathcal{J}_1)=A(\mathcal{J}_2).
    \)
    In particular, if the coupling matrix coincides with the adjacency matrix of the common graph and this matrix is symmetric, then
    \begin{equation}\label{block_equal}
    A(\mathcal{J}_1)=A(\mathcal{J}_2)=A(G_1)=A(G_2).
    \end{equation}
    This is the case, for example, for the reciprocal partial join $C_4\overset{(2,2)}{\vec{\vee}}C_4$ (See Fig.~\ref{fig:C4-partial-join-case-a} ). Taking $d=n$ and $A(\mathcal{J}_1)=A(G_1)=J_n$ recovers the looped complete graph construction $K_{2n}^{\circlearrowleft}\cong K_n^{\circlearrowleft}\vee K_n^{\circlearrowleft}$.

    More generally, the common coupling matrix may be symmetric but different from the adjacency matrix of  \(G_1\) or \(G_2\), giving
    \begin{equation}\label{block_unequal}
    A(\mathcal{J}_1)=A(\mathcal{J}_2)\;\neq\;A(G_1)=A(G_2),
    \end{equation}
    corresponding to a reciprocal coupling that reuses a non-trivial interconnection pattern distinct from the graph \(G_1\) or \(G_2\)    (see Fig.~\ref{fig:C4-partial-join-case-b}). If $A(\mathcal{J}_1)$ is not symmetric, then reciprocity instead gives
    \(
    A(\mathcal{J}_2)=A(\mathcal{J}_1)^{\top}\;\neq\;A(\mathcal{J}_1),
    \)
    even though $A(G_1)=A(G_2)$; this is the generic asymmetric case.

    \item \textit{Non-reciprocal coupling:}
    If reciprocity is not assumed, an arc $a_i\to b_j\in E(\mathcal{J}_1)$ does not necessarily imply the corresponding arc $b_j\to a_i\in E(\mathcal{J}_2)$. Hence, in general,
    \(
    A(\mathcal{J}_2)\neq A(\mathcal{J}_1)^{\top},
    \)
    and there is no necessary equality or inequality between $A(\mathcal{J}_1)$ and $A(\mathcal{J}_2)$, regardless of whether $G_1\cong G_2$ or \(G_1\) and \(G_2\) are identical.
\end{enumerate}
 
 \begin{lemma}\label{lem:2d-regular}
Let $G=G_1\overset{(d_1,d_2)}{\vec\vee}G_2$ be the directed partial join of $G_1$ and $G_2$ with bipartite coupling graphs $\mathcal{J}_1$  with degree \(d_1\) and $\mathcal{J}_2$  with degree \(d_2\), where $|V(G_1)|=|V(G_2)|=n$. If $G_1,G_2$ are each $d$-regular, then
\(
G \text{ is } 2d\text{-regular graph}
\quad\Longleftrightarrow\quad
\mathcal{J}  \text{ is }  d\text{-regular directed graph.}
\)
\end{lemma}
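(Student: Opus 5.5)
The argument is a degree count: every arc of $G$ lies in $E_1$, $E_2$, $E(\mathcal{J}_1)$ or $E(\mathcal{J}_2)$, and these sets are pairwise disjoint, since arcs of $E_1$ and $E_2$ stay inside $V_1$ and $V_2$ while coupling arcs cross between them. So for each vertex we add its degree inside its own component to its degree in the coupling graph, separately for out-degrees and in-degrees.

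\textbf{Out-degrees.} Take $a_i\in V_1$. Its out-neighbours in $G$ are its $d$ out-neighbours in $G_1$ together with exactly $d_1$ distinct vertices of $V_2$ given by $\mathcal{J}_1$. Hence $\deg^+_G(a_i)=d+d_1$. In the same way, $\deg^+_G(b_i)=d+d_2$ for each $b_i\in V_2$.

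\textbf{In-degrees.} Let $r_j$ be the number of $\mathcal{J}_1$-arcs entering $b_j$, and let $s_i$ be the number of $\mathcal{J}_2$-arcs entering $a_i$. Then $\deg^-_G(a_i)=d+s_i$ and $\deg^-_G(b_j)=d+r_j$. The row sums of $A(\mathcal{J}_1)$ and $A(\mathcal{J}_2)$ are fixed at $d_1$ and $d_2$ by Definition~\ref{def:dpj-reg}, but the column sums $r_j$ and $s_i$ are not. This is the only point needing care.

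For ($\Leftarrow$): if $\mathcal{J}=\mathcal{J}_1\cup\mathcal{J}_2$ is $d$-regular, every vertex has out-degree $d$ and in-degree $d$ in $\mathcal{J}$. This gives $d_1=d_2=d$ and $r_j=s_i=d$ for all $i,j$. Substituting into the counts above, every vertex of $G$ has in- and out-degree $2d$, so $G$ is $2d$-regular.

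For ($\Rightarrow$): if $G$ is $2d$-regular, the out-degree count forces $d+d_1=2d$ and $d+d_2=2d$, so $d_1=d_2=d$. The in-degree count forces $r_j=d$ and $s_i=d$, i.e.\ every column sum of $A(\mathcal{J}_1)$ and $A(\mathcal{J}_2)$ equals $d$. Thus every vertex of $\mathcal{J}$ has in- and out-degree $d$, so $\mathcal{J}$ is $d$-regular.

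We read ``$d$-regular directed graph'' for $\mathcal{J}$ in the sense of Section~\ref{sec2}, with both in- and out-degree $d$. If one only required the out-degrees $d_1=d_2=d$, the converse direction could fail without a separate in-degree condition, so this reading is necessary. We also do not claim that $A(\mathcal{J}_1)$ and $A(\mathcal{J}_2)$ are doubly stochastic up to scaling beyond what the column-sum argument gives; the double counting $nd_1=\sum_j r_j$ only fixes the average of the $r_j$, and individual values are controlled only through the regularity of $G$ or of $\mathcal{J}$.
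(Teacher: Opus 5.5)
Your proof is correct and follows the same route as the paper. In both, the out- and in-degree of each vertex of $G$ splits into its contribution from $G_1$ or $G_2$ (always $d$) plus the corresponding row or column sum of $A(\mathcal{J}_1)$ or $A(\mathcal{J}_2)$, and both directions are read off from that. Your explicit column sums $r_j, s_i$ and the remark that out-degree regularity of $G$ already forces $d_1=d_2=d$ are simply a slightly more careful phrasing of the paper's row- and column-sum equations.
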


\begin{proof}For $v\in V(G_1)$,
\begin{equation}\label{out_degree_1}
\mathrm{outdeg}_G(v)=\sum_{u\in V(G_1)} A(G_1)_{vu} \;+\; \sum_{b\in V(G_2)} A(\mathcal{J}_1)_{vb},
\qquad
\mathrm{indeg}_G(v)=\sum_{u\in V(G_1)} A(G_1)_{uv} \;+\; \sum_{b\in V(G_2)} A(\mathcal{J}_2)_{bv}.
\end{equation}
Similarly, for $w\in V(G_2)$,
\begin{equation}\label{out_degree_2}
\mathrm{outdeg}_G(w)=\sum_{a\in V(G_1)} A(\mathcal{J}_2)_{wa} \;+\; \sum_{u\in V(G_2)} A(G_2)_{wu},
\qquad
\mathrm{indeg}_G(w)=\sum_{a\in V(G_1)} A(\mathcal{J}_1)_{aw} \;+\; \sum_{u\in V(G_2)} A(G_2)_{uw}.
\end{equation}
Since $G_1$ and $G_2$ are $d$-regular,
\begin{equation}\label{G_1_degree}
\sum_{u\in V(G_1)} A(G_1)_{vu} = \sum_{u\in V(G_1)} A(G_1)_{uv} = d
\qquad \forall\, v\in V(G_1),
\end{equation}
\begin{equation}\label{G_2_degree}
\sum_{u\in V(G_2)} A(G_2)_{wu} = \sum_{u\in V(G_2)} A(G_2)_{uw} = d
\qquad \forall\, w\in V(G_2).
\end{equation}

\noindent\textbf{($\Leftarrow$)} If $\mathcal{J}$ is $d$-regular, then for every $v\in V(G_1)$ and $w\in V(G_2)$,
\begin{equation}\label{J_1_degree}
\sum_{b\in V(G_2)} A(\mathcal{J}_1)_{vb} = \sum_{b\in V(G_2)} A(\mathcal{J}_2)_{bv} = \sum_{a\in V(G_1)} A(\mathcal{J}_2)_{wa} = \sum_{a\in V(G_1)} A(\mathcal{J}_1)_{aw} = d.
\end{equation}
Substituting~\eqref{G_1_degree}, \eqref{G_2_degree}, and \eqref{J_1_degree} in ~\eqref{out_degree_1} and~\eqref{out_degree_2}, we have
\begin{equation}
\mathrm{outdeg}_G(v)=\mathrm{indeg}_G(v)=\mathrm{outdeg}_G(w)=\mathrm{indeg}_G(w)=2d
\qquad \forall\, v\in V(G_1),\ w\in V(G_2),
\end{equation}
so every vertex of $G$ has in-degree and out-degree $2d$, that is, \ $G$ is $2d$-regular directed partial join graph.

\noindent\textbf{($\Rightarrow$)} Suppose $G$ is $2d$-regular. Since $G_1$ and $G_2$ are $d$-regular, equations~\eqref{out_degree_1}, \eqref{out_degree_2}, \eqref{G_1_degree}, and \eqref{G_2_degree} give
\begin{equation}\label{in_degree_1}
\mathrm{outdeg}_G(v)=2d \;\Rightarrow\; \sum_{b\in V(G_2)} A(\mathcal{J}_1)_{vb}=d,
\qquad
\mathrm{indeg}_G(w)=2d \;\Rightarrow\; \sum_{a\in V(G_1)} A(\mathcal{J}_1)_{aw}=d,
\end{equation}
\begin{equation}\label{in_degree_2}
\mathrm{outdeg}_G(w)=2d \;\Rightarrow\; \sum_{a\in V(G_1)} A(\mathcal{J}_2)_{wa}=d,
\qquad
\mathrm{indeg}_G(v)=2d \;\Rightarrow\; \sum_{b\in V(G_2)} A(\mathcal{J}_2)_{bv}=d.
\end{equation}
By~\eqref{in_degree_1}, at every vertex $A(\mathcal{J}_1)$ has row sum (out-degree, over $V(G_1)$) equal to $d$ and column sum (in-degree, over $V(G_2)$) equal to $d$. Similarly, by~\eqref{in_degree_2}, $\mathcal{J}_2$ has row sum (out-degree, over $V(G_2)$) equal to $d$ and column sum (in-degree, over $V(G_1)$) equal to $d$. Hence $\mathcal{J}$ is $d$-regular directed graph.
\end{proof}

\begin{figure}[t]
\centering
\begin{tikzpicture}[
scale=1.1, transform shape,
every node/.style={circle, draw, fill=white, inner sep=1.2pt, minimum size=15pt, font=\small},
AtoB/.style={blue, semithick, ->, >=stealth, dash pattern=on 2pt off 1.2pt},
BtoA/.style={red,  semithick, ->, >=stealth, dash pattern=on 2pt off 1.2pt},
Gedge/.style={thick, <->, >=stealth}
]
\node (a1) at (0, 1.5)  {$a_1$};
\node (a2) at (1, 0)    {$a_2$};
\node (a3) at (0, -1.5) {$a_3$};
\node (a4) at (-1, 0)   {$a_4$};
\draw[Gedge] (a1)--(a2);
\draw[Gedge] (a2)--(a3);
\draw[Gedge] (a3)--(a4);
\draw[Gedge] (a4)--(a1);
\node (b1) at (4, 1.5)  {$b_1$};
\node (b2) at (5, 0)    {$b_2$};
\node (b3) at (4, -1.5) {$b_3$};
\node (b4) at (3, 0)    {$b_4$};
\draw[Gedge] (b1)--(b2);
\draw[Gedge] (b2)--(b3);
\draw[Gedge] (b3)--(b4);
\draw[Gedge] (b4)--(b1);
\draw[AtoB] (a1) -- (b2); \draw[BtoA] (b2) -- (a1);
\draw[AtoB] (a1) -- (b4); \draw[BtoA] (b4) -- (a1);
\draw[AtoB] (a2) -- (b1); \draw[BtoA] (b1) -- (a2);
\draw[AtoB] (a2) -- (b3); \draw[BtoA] (b3) -- (a2);
\draw[AtoB] (a3) -- (b2); \draw[BtoA] (b2) -- (a3);
\draw[AtoB] (a3) -- (b4); \draw[BtoA] (b4) -- (a3);
\draw[AtoB] (a4) -- (b1); \draw[BtoA] (b1) -- (a4);
\draw[AtoB] (a4) -- (b3); \draw[BtoA] (b3) -- (a4);
\begin{scope}[shift={(6.2,0.5)}]
    \draw[Gedge] (0,0.8) -- (0.6,0.8) node[draw=none, fill=none, right, font=\scriptsize] {edges of $G_i$};
    \draw[blue, semithick, ->, >=stealth] (0,0.4) -- (0.6,0.4) node[draw=none, fill=none, right, font=\scriptsize] {arc of $\mathcal{J}_1$: $a_i \to b_j$};
    \draw[red, semithick, ->, >=stealth] (0,0) -- (0.6,0) node[draw=none, fill=none, right, font=\scriptsize] {arc of $\mathcal{J}_2$: $b_j \to a_i$};
\end{scope}
\end{tikzpicture}
\caption{Here $A(\mathcal{J}_1)=A(\mathcal{J}_2)=A(G_1)=A(G_2)$: the directed partial join $C_4 \overset{(2,2)}{\vec{\vee}} C_4$, with each vertex of $G_1$ connecting to exactly two vertices of $G_2$ via $\mathcal{J}_1$, and reciprocally via $\mathcal{J}_2$, forming a $2$-regular bipartite subgraph of $K_{4,4}$ whose adjacency matrix coincides with that of the internal $4$-cycles.}
\label{fig:C4-partial-join-case-a}

\end{figure}

\begin{ex}\label{ex:C4-C4-case-a}
Figure~\ref{fig:C4-partial-join-case-a} illustrates the directed partial join $C_4\overset{(2,2)}{\vec{\vee}} C_4$, where $G_1\cong G_2\cong C_4$. Labelling vertices so that $a_i\leftrightarrow b_i$, one checks directly that
\(
A(\mathcal{J}_1) = A(G_1),
\)
which is symmetric, so $A(\mathcal{J}_2)=A(\mathcal{J}_1)^{\top}=A(\mathcal{J}_1)$. Combined with $A(G_1)=A(G_2)$, this gives
\(
A(\mathcal{J}_1) = A(\mathcal{J}_2) = A(G_1) = A(G_2).
\)
This is the reciprocal, identical-graph case in which the coupling matrices coincide with the adjacency matrix of $C_4$. Here $d_1=d_2=2$, so $G_1,G_2$, and $\mathcal{J}$ are all $2$-regular, and Lemma~\ref{lem:2d-regular} gives that $G$ is $4$-regular.
\end{ex}
\begin{figure}[t]
\centering
\begin{tikzpicture}[
scale=1.3, transform shape,
every node/.style={circle, draw, fill=white, inner sep=1.2pt, minimum size=15pt, font=\small},
AtoB/.style={blue, semithick, ->, >=stealth, dash pattern=on 2pt off 1.2pt},
BtoA/.style={red,  semithick, ->, >=stealth, dash pattern=on 2pt off 1.2pt},
Gedge/.style={thick, <->, >=stealth}
]
\node (a1) at (0, 1.5)  {$a_1$};
\node (a2) at (1, 0)    {$a_2$};
\node (a3) at (0, -1.5) {$a_3$};
\node (a4) at (-1, 0)   {$a_4$};
\draw[Gedge] (a1)--(a2);
\draw[Gedge] (a2)--(a3);
\draw[Gedge] (a3)--(a4);
\draw[Gedge] (a4)--(a1);
\node (b1) at (4, 1.5)  {$b_1$};
\node (b2) at (5, 0)    {$b_2$};
\node (b3) at (4, -1.5) {$b_3$};
\node (b4) at (3, 0)    {$b_4$};
\draw[Gedge] (b1)--(b2);
\draw[Gedge] (b2)--(b3);
\draw[Gedge] (b3)--(b4);
\draw[Gedge] (b4)--(b1);
\draw[AtoB] (a1) to[bend left=15]  (b1);
\draw[BtoA] (b1) to[bend left=15] (a1);
\draw[AtoB] (a1) to[bend left=15]  (b3);
\draw[BtoA] (b3) to[bend left=15] (a1);
\draw[AtoB] (a2) to[bend left=15]  (b2);
\draw[BtoA] (b2) to[bend left=15] (a2);
\draw[AtoB] (a2) to[bend left=15]  (b4);
\draw[BtoA] (b4) to[bend left=15] (a2);
\draw[AtoB] (a3) to[bend left=15]  (b1);
\draw[BtoA] (b1) to[bend left=15] (a3);
\draw[AtoB] (a3) to[bend left=15]  (b3);
\draw[BtoA] (b3) to[bend left=15] (a3);
\draw[AtoB] (a4) to[bend left=15]  (b2);
\draw[BtoA] (b2) to[bend left=15] (a4);
\draw[AtoB] (a4) to[bend left=15]  (b4);
\draw[BtoA] (b4) to[bend left=15] (a4);
\begin{scope}[shift={(6.2,0.5)}]
    \draw[Gedge] (0,0.8) -- (0.6,0.8) node[draw=none, fill=none, right, font=\scriptsize] {edges of $G_i$};
    \draw[blue, semithick, ->, >=stealth] (0,0.4) -- (0.6,0.4) node[draw=none, fill=none, right, font=\scriptsize] {$a_i \to b_j$};
    \draw[red, semithick, ->, >=stealth] (0,0) -- (0.6,0) node[draw=none, fill=none, right, font=\scriptsize] {$b_j \to a_i$};
\end{scope}
\end{tikzpicture}
\caption{Here $A(G_1)=A(G_2)\neq A(\mathcal{J}_1)=A(\mathcal{J}_2)$: the directed partial join $C_4\overset{(2,2)}{\vec\vee}C_4$
with coupling $a_1,a_3\to\{b_1,b_3\}$ and
$a_2,a_4\to\{b_2,b_4\}$ (and symmetrically $b_j\to a_i$). Each vertex
of $G_1$ sends and receives exactly two arcs, so the
 graph $\mathcal{J}=\mathcal{J}_1\cup\mathcal{J}_2$ is $2$-regular bipartite subgraphs of
$K_{4,4}$, matching the $(2,2)$ degree label.}
\label{fig:C4-partial-join-case-b}
\end{figure}
\section{Circulant Partial Joins and Tensor Products of $K_{n,n}$ and $K_2$ Graphs}
\label{sec4}
A \emph{circulant graph} $C_n(S)$ is an undirected graph on the
vertex set
\(
\mathbb{Z}_n=\{0,1,\ldots,n-1\},
\)
where each vertex $v$ is adjacent to the vertices
\(
v+s\pmod n,\quad s\in S,
\)
for a symmetric connection set
\(
S\subseteq\mathbb{Z}_n\setminus\{0\},
\quad
s\in S\Rightarrow -s\in S.
\)
Thus, the graph is determined by the set of allowed connections.
In~\cite{angeles2009perfect}, the circulant join was constructed by taking the off-diagonal block to be an arbitrary Boolean matrix $C$; that is, two copies of $G$ are connected by joining corresponding vertices across the two copies according to $C$. This construction, however, is not directly feasible for our model of the discrete-time quantum walk. An arbitrary choice of $C$ generally causes a {degree mismatch}. Since $C$ need not be $d$-regular, the local operators $U_{\mathcal{J}_1}, U_{\mathcal{J}_2}$ built from the coupling graphs $\mathcal{J}_1, \mathcal{J}_2$ act on spaces of inconsistent dimension at each vertex, incompatible with the uniform $d$-dimensional local operators $U_{G_1}, U_{G_2}$ coming from the (regular) internal graphs $G_1, G_2$. These pieces therefore cannot be assembled into a single well-defined operator on $G$, and unitarity of the transition operator $U_G$ fails, as illustrated in Example~\ref{eq.2.2} and Proposition~\ref{prop4}. We therefore construct the coupling to keep $U_G$ unitary while also preserving both the coupling degree and the regularity of the graph, as discussed for $U_G$ in Section~\ref{sec5}.

In this section we accordingly examine $d$-regular directed partial join and tensor product graphs, that is, graphs whose block-form adjacency matrix falls into one of three special cases:

\begin{enumerate}
    \item All four blocks coincide:
    \(
    A(G_1)=A(G_2)=A(\mathcal{J}_1)=A(\mathcal{J}_2),
    \)
    as given in~\eqref{block_equal};

    \item Diagonal blocks vanish, off-diagonal blocks agree:
    \(
    A(G_1)=A(G_2)=0
    \quad\text{and}\quad
    A(\mathcal{J}_1)=A(\mathcal{J}_2),
    \)
    exemplified by the loopless complete bipartite graph $K_{n,n}$, given in~\eqref{loopless complete bipartite graph $K_{n,n}$};

    \item {Diagonal and off-diagonal blocks pair up separately, but differ from each other:}
    \(
    A(G_1)=A(G_2)\\
    \text{and}\quad
    A(\mathcal{J}_1)=A(\mathcal{J}_2),
    \quad\text{with } A(G_1)\neq A(\mathcal{J}_1),
    \)
    as given in~\eqref{block_unequal}.
\end{enumerate}

We use the following equivalent distance-based notation for circulant
graphs

\begin{defn}\label{def:circulant-graph}
Let $n\geq 3$ and let
\(
S\subseteq\{1,\ldots,\lfloor n/2\rfloor\}
\)
be a set of \emph{connection distances}. The \emph{circulant graph}
$C_n(S)$ has vertex set
\(
\mathbb{Z}_n=\{0,1,\ldots,n-1\},
\)
where two vertices $i,j\in\mathbb{Z}_n$ are adjacent if and only if
\(
d(i,j)\in S,
\)
where
\(
d(i,j)=\min\{|i-j|,\,n-|i-j|\}
\)
is the \emph{circular distance} between $i$ and $j$. Thus,
$C_n(S)$ is undirected, and its adjacency matrix
\(
B=A(C_n(S))
\)
is symmetric.
\end{defn}
\begin{defn}\label{def:antipodal}
Let $C_n$ be a cycle with vertices labelled by $\mathbb{Z}_n$. If $n$ is
even, the \emph{antipodal vertex} of a vertex $i\in\mathbb{Z}_n$ is the
unique vertex
\(
i+\frac{n}{2}\pmod n.
\)
It is the vertex at maximum distance from $i$ on the cycle, with
circular distance
\(
d\left(i,i+\frac{n}{2}\right)=\frac{n}{2}.
\)
Since
\(
i+\frac{n}{2}\equiv i-\frac{n}{2}\pmod n,
\)
the vertices $i+\frac{n}{2}$ and $i-\frac{n}{2}$ are the same vertex
in $\mathbb{Z}_n$.
\end{defn}

Every vertex has two neighbors at each distance $s\in S$, except when $n$ is even and $s=n/2$. In this case, the two vertices
\(
i+\frac{n}{2} \quad\text{and}\quad i-\frac{n}{2} \pmod n
\)
coincide. This unique vertex is called the \emph{antipodal vertex} of $i$. Consequently, $C_n(S)$ is regular with degree
\begin{equation}
\deg(C_n(S)) = 2\left|S\setminus\left\{\frac{n}{2}\right\}\right| + \begin{cases} 1,&\frac{n}{2}\in S,\\ 0,&\frac{n}{2}\notin S. \end{cases}
\end{equation}
\begin{remark}
If $S=\{1,\dots,n/2\}$ for even $n$, then every pair of distinct vertices of $\mathbb{Z}_n$ has circular distance belonging to $S$. Hence
\(
C_n(S)=K_n.
\)
In this case, the distance $n/2$ corresponds to the unique antipodal vertex and contributes only one neighbor, rather than two. For example, when $n=4$, $S=\{1,2\}$, we have
\(
C_4(\{1,2\})=K_4,
\)
which is $3$-regular.
\end{remark}

\begin{prop}\label{prop:general-even-n}
Let $n$ be even and let $G_1=G_2=C_n(S)$ be two labeled copies of the circulant graph $C_n(S)$ on $n$ vertices,  with
\(
V(G_1)=\{a_0,\dots,a_{n-1}\}, \quad V(G_2)=\{b_0,\dots,b_{n-1}\}.
\)
Suppose that the two copies are coupled according to
\(
a_i\sim b_j \iff d(i,j)\in S.
\)
If $B=A(C_n(S))$, then
\[
A(G) = \begin{pmatrix} B&B\\ B&B \end{pmatrix} = J_{2}\otimes B.
\]
Moreover, $G$ is regular of degree
\(
\deg(G)=2\deg(C_n(S)).
\)
\end{prop}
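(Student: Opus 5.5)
The plan is to compute the four blocks of $A(G)$ directly from the definitions and then read off the degree from the block structure.

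\textbf{Diagonal blocks.} With vertices ordered as $a_0,\dots,a_{n-1},b_0,\dots,b_{n-1}$, the diagonal blocks are $A(G_1)$ and $A(G_2)$. Since both copies are $C_n(S)$ under the labelling $a_i\leftrightarrow i\leftrightarrow b_i$, Definition~\ref{def:circulant-graph} gives $A(G_1)=A(G_2)=B$, where $B_{ij}=1$ if and only if $d(i,j)\in S$.

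\textbf{Off-diagonal blocks.} By Definition~\ref{def:dpj-reg}, the coupling block $A(\mathcal{J}_1)$ has $(i,j)$ entry $1$ exactly when $a_i\to b_j$ is a coupling arc, that is, when $d(i,j)\in S$. Hence $A(\mathcal{J}_1)=B$ entrywise.

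The coupling rule $a_i\sim b_j$ is symmetric, so it is reciprocal in the sense of case 1 of Section~\ref{sec3}. Therefore $A(\mathcal{J}_2)=A(\mathcal{J}_1)^{\top}=B^{\top}$. Because the circular distance satisfies $d(i,j)=d(j,i)$, the matrix $B$ is symmetric, and so $A(\mathcal{J}_2)=B$.

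Assembling the four blocks gives
\[
A(G)=\begin{pmatrix}B&B\\B&B\end{pmatrix},
\]
which is $J_2\otimes B$ by the definition of the Kronecker product, since every entry of $J_2$ equals $1$.

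\textbf{Degree.} I will invoke Lemma~\ref{lem:2d-regular} with $d=\deg(C_n(S))$. Since $G_1$ and $G_2$ are $d$-regular, it suffices to check that the coupling graph $\mathcal{J}$ is $d$-regular. This holds because $A(\mathcal{J}_1)=A(\mathcal{J}_2)=B$ has every row sum and column sum equal to $d$, by regularity of $C_n(S)$ and symmetry of $B$.

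As a direct check, the row sums of $J_2\otimes B$ are $2\cdot(\text{row sums of } B)=2d$, and by symmetry the column sums are also $2d$.

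\textbf{Main obstacle.} The only delicate point is confirming that each $B$ is regular of the stated degree when $n/2\in S$, where the antipodal vertex contributes only one neighbour instead of two. This is already handled by the degree formula stated before the proposition. Since $d(i,i)=0\notin S$, the matrix $B$ has zero diagonal, so there are no loops and no double counting occurs. The degree of $G$ is therefore exactly $2\deg(C_n(S))$.
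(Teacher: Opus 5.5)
Your proposal is correct and follows essentially the same route as the paper: you identify all four blocks as $B$, using reciprocity of the coupling and symmetry of the circular distance to get $A(\mathcal{J}_2)=B^{\top}=B$. You then count $\deg(C_n(S))$ internal neighbours plus $\deg(C_n(S))$ coupling neighbours per vertex; phrasing this step through Lemma~\ref{lem:2d-regular} or the row sums of $J_2\otimes B$ is equivalent to the paper's direct neighbour count.
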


\begin{proof}
Order the vertices of $G$ as $a_0,\dots,a_{n-1},b_0,\dots,b_{n-1}$. With respect to this ordering, we have the adjacency matrix of $G$ as
\[
A(G)= \begin{pmatrix} A(G_1)&A(\mathcal{J}_1)\\ A(\mathcal{J}_2)&A(G_2) \end{pmatrix}.
\]
Since $G_1$ and $G_2$ are labelled copies of $C_n(S)$, $A(G_1)=A(G_2)=B$. Since $C_n(S)$ are circulant graphs defined via the circular distance $d(i,j)=\min\{|i-j|,n-|i-j|\}$, and since $d(i,j)=d(j,i)$, the coupling is reciprocal 
\(
a_i\sim b_j \iff b_j\sim a_i.
\)
Hence, by \eqref{block_equal}, we have
\(
A(\mathcal{J}_2)=A(\mathcal{J}_1)^\top=B^\top=B,
\)
where the last equality follows from the symmetry of $B$. Thus, all four blocks of $A(G)$ are equal to $B$, and so
\[
A(G) = \begin{pmatrix} B&B\\ B&B \end{pmatrix}.
\]
Since
\(
J_{2} = \begin{pmatrix} 1&1\\ 1&1 \end{pmatrix},
\)
we obtain
\(
A(G)=J_{2}\otimes B.
\)
Finally, let $r=\deg(C_n(S))$. Every vertex of $G_1$ has $r$ neighbors within $G_1$ and, by the coupling rule, exactly $r$ neighbors in $G_2$. Hence, every vertex of $G_1$ has degree $r+r=2r$ in $G$. The same argument applies to every vertex of $G_2$. Therefore,
\(
\deg(G)=2r=2\deg(C_n(S)),
\)
 hence $G$ is \(2\deg(C_n(S))\) regular.

Now assign a positive sign to every coupling arc from $G_1$ to $G_2$
and a negative sign to every coupling arc from $G_2$ to $G_1$. Since
\(
A(\mathcal{J}_1)=A(\mathcal{J}_2)=B,
\)
the corresponding signed adjacency matrix is
\[
A_{\pm}(G)=
\begin{pmatrix}
B&A(\mathcal{J}_1)\\
-A(\mathcal{J}_2)&B
\end{pmatrix}
=
\begin{pmatrix}
B&B\\
-B&B
\end{pmatrix}.
\]

\end{proof}

\begin{prop}\label{cor:identical-circulant-coupling}
Let $n$ be even, and let $S,T\subseteq\{1,\dots,n/2\}$ be connection sets such that
\(
\deg(C_n(S))=\deg(C_n(T))=r.
\)
Let $G_1=G_2=C_n(S)$ be two copies of the same circulant graph on n vertices, and let the coupling graphs be defined by
\(
a_i\to b_j\iff d(i,j)\in T, \quad b_i\to a_j\iff d(i,j)\in T,
\)
where $d(i,j)=\min\{|i-j|,n-|i-j|\}$. Assume that $A(C_n(S))\neq A(C_n(T))$. Writing $B=A(C_n(S))$, $C=A(C_n(T))$, we have
\[
A(G)= \begin{pmatrix} B&C\\ C&B \end{pmatrix}
\]
Moreover, $G$ is regular of both in-degree and out-degree $2r$.
\end{prop}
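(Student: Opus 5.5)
I will mirror the proof of Proposition~\ref{prop:general-even-n}. First, order the vertices of $G$ as $a_0,\dots,a_{n-1},b_0,\dots,b_{n-1}$. With this ordering, $A(G)$ has the block form of Definition~\ref{def:dpj-reg}, with diagonal blocks $A(G_1)$, $A(G_2)$ and off-diagonal blocks $A(\mathcal{J}_1)$, $A(\mathcal{J}_2)$.

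\textbf{Diagonal blocks.} Since $G_1=G_2=C_n(S)$ are labelled copies under $a_i\leftrightarrow b_i$, we get $A(G_1)=A(G_2)=B$.

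\textbf{Off-diagonal blocks.} By definition, $A(\mathcal{J}_1)_{ij}=1$ if and only if $a_i\to b_j$, that is, if and only if $d(i,j)\in T$, which by Definition~\ref{def:circulant-graph} is exactly $C_{ij}$. Hence $A(\mathcal{J}_1)=C$. Likewise, $A(\mathcal{J}_2)_{ij}=1$ if and only if $b_i\to a_j$, that is, if and only if $d(i,j)\in T$, so $A(\mathcal{J}_2)=C$. Alternatively, the circular distance is symmetric, $d(i,j)=d(j,i)$, so the coupling is reciprocal. The reciprocal case then gives $A(\mathcal{J}_2)=A(\mathcal{J}_1)^\top=C^\top=C$, because $C$ is symmetric. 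Either way, $A(G)=\begin{pmatrix}B&C\\C&B\end{pmatrix}$. The hypothesis $B\neq C$ only places this in case~\eqref{block_unequal}; it plays no role in the computation.

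\textbf{Regularity.} Since $C_n(S)$ is undirected and $r$-regular, $G_1$ and $G_2$ are $r$-regular directed graphs, with in-degree and out-degree $r$. Each row and each column of $C$ sums to $\deg(C_n(T))=r$. So $\mathcal{J}_1$ sends exactly $r$ arcs out of each $a_i$ to distinct vertices, and each $b_j$ receives exactly $r$ such arcs; the same holds symmetrically for $\mathcal{J}_2$. Thus $G=G_1\overset{(r,r)}{\vec\vee}G_2$ with $\mathcal{J}$ an $r$-regular directed coupling graph. Lemma~\ref{lem:2d-regular} (with $d=r$) then gives that $G$ is $2r$-regular, in both in-degree and out-degree. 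As a direct check, every row sum and every column sum of $A(G)$ equals $r+r$.

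\textbf{Main obstacle.} The only delicate point is the degree count for the antipodal distance $n/2$. If $n/2\in T$, it contributes a single neighbor, not two. This is already absorbed in the hypothesis $\deg(C_n(T))=r$ through the degree formula for $C_n(S)$, so the row sums of $C$ are exactly $r$. Finally, as in Proposition~\ref{prop:general-even-n}, assigning signs to the coupling arcs gives the signed adjacency matrix $\begin{pmatrix}B&C\\-C&B\end{pmatrix}$.
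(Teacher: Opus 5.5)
Your proposal is correct and follows essentially the same route as the paper. You identify both diagonal blocks as $B$ and both coupling blocks as $C$ from the common connection set $T$, then apply Lemma~\ref{lem:2d-regular} with $d=r$ to obtain in- and out-degree $2r$; your entrywise identification of $A(\mathcal{J}_2)$, the direct row/column-sum check, and the remark on the antipodal distance $n/2$ make explicit what the paper leaves implicit.
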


\begin{proof}
Since $G_1=G_2=C_n(S)$, their adjacency matrices are the same $A(G_1)=A(G_2)=B$.
The coupling in both directions is defined using the same connection set $T$. Hence $A(\mathcal{J}_1)=A(\mathcal{J}_2)=C$, where $C=A(C_n(T))$. By assumption, $B\neq C$, so
\(
A(G_1)=A(G_2) \neq A(\mathcal{J}_1)=A(\mathcal{J}_2).
\)
With the vertex ordering $a_0,\dots,a_{n-1},b_0,\dots,b_{n-1}$, the adjacency matrix of $G$ is therefore
\[
A(G)= \begin{pmatrix} A(G_1)&A(\mathcal{J}_1)\\ A(\mathcal{J}_2)&A(G_2) \end{pmatrix} = \begin{pmatrix} B&C\\ C&B \end{pmatrix}.
\]
Since $C_n(S)$ and $C_n(T)$ are circulant graphs defined via the circular distance $d(i,j)=\min\{|i-j|,n-|i-j|\}$, and since $d(i,j)=d(j,i)$, their adjacency matrices $A(C_n(S))$ and $A(C_n(T))$ are symmetric. Given that
\(
\deg(C_n(S))=\deg(C_n(T))=r,
\)
both $G_1,G_2$ (copies of $C_n(S)$) and the coupling graphs $\mathcal{J}_1,\mathcal{J}_2$ (built from $C_n(T)$) are $r$-regular. By Lemma~\ref{lem:2d-regular}, $G$ is regular of both in-degree and out-degree $2r$. The corresponding signed adjacency matrix is \[
A_{\pm}(G)= \begin{pmatrix} A(G_1)&A(\mathcal{J}_1)\\ -A(\mathcal{J}_2)&A(G_2) \end{pmatrix} = \begin{pmatrix} B&C\\ -C&B \end{pmatrix}.
\]
\end{proof}

Given graphs $G_1$ and $G_2$ with adjacency matrices $A(G_1)$ and $A(G_2)$, their \emph{tensor product} $G_1\times G_2$ is defined as the graph whose adjacency matrix is $A(G_1)\otimes A(G_2)$. Now we will show the specific tensor product of \(K_{n,n}\) and \(K_2\).
\begin{prop}\label{prop:offdiag-perm-sum}
Let
\(
A=A(K_{n,n})
=
\begin{pmatrix}
0&J_n\\
J_n&0
\end{pmatrix}.
\)
Then
\[
A^{\otimes n}
=
\begin{pmatrix}
0&M_n\\
M_n&0
\end{pmatrix},
\qquad
M_n=J_n\otimes A^{\otimes(n-1)}.
\]
Moreover, $M_n$ is a regular zero-one matrix of degree $n^n$ and
order $2^{n-1}n^n$. Hence, there exist permutation matrices
$P_1,\ldots,P_{n^n}$ of order $2^{n-1}n^n$ such that
\(
M_n=P_1+\cdots+P_{n^n}.
\)
\end{prop}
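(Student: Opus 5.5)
The plan is a direct Kronecker-product computation followed by an appeal to Lemma~\ref{lemma 2.1}. First, I will write $A$ in the form
\(
A=\begin{pmatrix}0&1\\1&0\end{pmatrix}\otimes J_n ,
\)
and then use the mixed-product structure to expand $A^{\otimes n}=A\otimes A^{\otimes(n-1)}$. Tensoring a $2\times 2$ block matrix on the right with $X=A^{\otimes(n-1)}$ replaces each block $Y$ by $Y\otimes X$. Hence
\[
A^{\otimes n}=\begin{pmatrix}0\otimes X&J_n\otimes X\\ J_n\otimes X&0\otimes X\end{pmatrix}
=\begin{pmatrix}0&M_n\\ M_n&0\end{pmatrix},\qquad M_n=J_n\otimes A^{\otimes(n-1)}.
\]
This gives the displayed identity. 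Since $A$ is $2n\times 2n$, the block size is $n\cdot(2n)^{n-1}=2^{n-1}n^n$, which is the claimed order.

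Next I will verify the regularity. Both $J_n$ and $A$ are zero-one matrices, and Kronecker products of zero-one matrices are zero-one, so $M_n$ is zero-one. The row sums of a Kronecker product are products of row sums, and the same holds for column sums. Every row and column of $J_n$ sums to $n$, and every row and column of $A$ sums to $n$, because $K_{n,n}$ is $n$-regular. Therefore every row and every column of $M_n$ sums to $n\cdot n^{n-1}=n^n$. Equivalently, one can take all-ones vectors: $M_n\mathbf 1=(J_n\mathbf 1)\otimes(A^{\otimes(n-1)}\mathbf 1)=n\cdot n^{n-1}\mathbf 1$, and since $M_n$ is symmetric the column sums agree.

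Finally, I will regard $M_n$ as the adjacency matrix of a directed graph $H$ on $2^{n-1}n^n$ vertices, with loops allowed. By the previous step every vertex of $H$ has in-degree and out-degree $n^n$. Lemma~\ref{lemma 2.1} then gives a shunt decomposition of $H$, which is exactly $M_n=P_1+\cdots+P_{n^n}$ with $P_j$ permutation matrices of order $2^{n-1}n^n$. An alternative route is König's theorem, or Birkhoff's theorem, applied to the $n^n$-regular bipartite graph whose biadjacency matrix is $M_n$: it decomposes into perfect matchings, and each matching is a permutation matrix.

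The only delicate point is bookkeeping the block partition. I must check that the $2\times 2$ block structure of $A^{\otimes n}$ is the one inherited from the first tensor factor. The ordering convention $A\otimes X$, rather than $X\otimes A$, is what makes the off-diagonal blocks equal to $J_n\otimes A^{\otimes(n-1)}$. With that convention fixed, the rest is routine.
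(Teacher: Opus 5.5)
Your proposal is correct and follows essentially the same route as the paper: compute the order $n(2n)^{n-1}=2^{n-1}n^n$ and the degree $n\cdot n^{n-1}=n^n$ of $M_n=J_n\otimes A^{\otimes(n-1)}$ from the regularity of the factors, then apply the shunt-decomposition lemma (Lemma~\ref{lemma 2.1}) to write $M_n$ as a sum of $n^n$ permutation matrices. Your derivation of the block form from $A=\sigma_x\otimes J_n$ makes explicit a step the paper only asserts.
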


\begin{proof}
Since $A(K_{n,n})$ is $n$-regular, $A^{\otimes(n-1)}$ is
$n^{n-1}$-regular. As $J_n$ is $n$-regular,
\(
M_n=J_n\otimes A^{\otimes(n-1)}
\)
is $n^n$-regular. Moreover, $M_n$ is a zero-one matrix of order
\(
n(2n)^{n-1}=2^{n-1}n^n.
\)
Thus, $M_n$ is the adjacency matrix of an $n^n$-regular directed graph.
By the shunt decomposition lemma~\cite[Lemma~7.1.1]{Godsil_Zhan_2023},
its arcs can be partitioned into $n^n$ arc-disjoint $1$-regular
spanning subdigraphs. The corresponding permutation matrices
$P_1,\ldots,P_{n^n}$ therefore satisfy
\(
M_n=P_1+\cdots+P_{n^n}.
\)
Finally,
\[
A^{\otimes n}
=
\begin{pmatrix}
0&J_n\otimes A^{\otimes(n-1)}\\
J_n\otimes A^{\otimes(n-1)}&0
\end{pmatrix},
\]
which gives the stated block form. Under signed coupling, the
corresponding signed adjacency matrix is
\[
A_{\pm}\big(K_{n,n}^{\otimes n}\big)
=
\begin{pmatrix}
0&M_n\\
-M_n&0
\end{pmatrix}.
\]
\end{proof}

\begin{ex}
For $n=2$, $A=A(K_{2,2})$ is $2$-regular, and hence
\(
M_2=\mathcal{J}_2\otimes A
\)
is $4$-regular and has order
\(
2^{n-1}n^n=2\cdot4=8.
\)
Thus
\[
M_2=
\begin{pmatrix}
A&A\\
A&A
\end{pmatrix},
\qquad
A=
\begin{pmatrix}
0&0&1&1\\
0&0&1&1\\
1&1&0&0\\
1&1&0&0
\end{pmatrix}.
\]
By the shunt decomposition lemma,
\(
M_2=P_1+P_2+P_3+P_4,
\)
where each $P_i$ is an $8\times8$ permutation matrix. Hence
\[
A(K_{2,2})^{\otimes2}
=
\begin{pmatrix}
0&P_1+P_2+P_3+P_4\\
P_1+P_2+P_3+P_4&0
\end{pmatrix},
\]
which is a $16\times16$ matrix, as expected.
\end{ex}

\begin{cor}\label{cor.4.8}
Let $A(K_2)=X=\begin{pmatrix}0&1\\1&0\end{pmatrix}$. Then
\(
A(K_2^{\otimes n}) \;=\; A(X^{\otimes n})\;=\; \begin{pmatrix} 0 & P \\ P & 0\end{pmatrix},
\)
where $P = X^{\otimes(n-1)}$ is a $2^{n-1}\times 2^{n-1}$ permutation matrix. Explicitly \[
P=\begin{pmatrix}
0&\cdots&0&0&1\\
0&\cdots&0&1&0\\
\vdots& &\ddots& &\vdots\\
0&1&\cdots&0&0\\
1&0&\cdots&0&0
\end{pmatrix}.
\] The corresponding signed adjacency matrix is 
\(
A_{\pm}(K_2^{\otimes n})= \begin{pmatrix} 0 & P \\ -P & 0\end{pmatrix},
\)
\end{cor}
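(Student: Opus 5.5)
The statement is the specialization of Proposition~\ref{prop:offdiag-perm-sum} from $A(K_{n,n})$ to $X=A(K_2)$, so I will argue directly from the mixed-product structure of the Kronecker product instead of invoking the shunt decomposition lemma. The first step is to write $X^{\otimes n}=X\otimes X^{\otimes(n-1)}$. Using the convention that the left factor indexes the outer $2\times 2$ block structure, this gives
\[
X\otimes P=\begin{pmatrix}0\cdot P & 1\cdot P\\ 1\cdot P & 0\cdot P\end{pmatrix}=\begin{pmatrix}0&P\\P&0\end{pmatrix}, \qquad P=X^{\otimes(n-1)},
\]
and this is exactly the claimed block form. Since the tensor product graph was defined by $A(G_1\times G_2)=A(G_1)\otimes A(G_2)$, an induction on $n$ identifies $A(K_2^{\otimes n})$ with $X^{\otimes n}$.

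Next I will show that $P$ is a permutation matrix of order $2^{n-1}$. The order is immediate because each factor is $2\times 2$. A Kronecker product of permutation matrices is again a permutation matrix: it is a $0$-$1$ matrix, and each row and each column contains exactly one $1$, which follows because the row and column sums multiply across factors. Induction on $n$ then gives the claim, with the base case $X^{\otimes 0}=I_1$ when $n=1$.

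For the explicit antidiagonal form, I will index rows and columns by binary strings $i=(i_1,\dots,i_{n-1})$. The entry is $(X^{\otimes(n-1)})_{i,j}=\prod_k X_{i_k j_k}$, and this equals $1$ exactly when $j_k=1-i_k$ for every $k$. In integer labels this means $j=2^{n-1}-1-i$, which is precisely the reversal (antidiagonal) permutation displayed in the statement. This bit-complement identification is the only step that needs any care, and it is routine.

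Finally, the signed version follows directly from Definition~\ref{def:signed-coupling}. The coupling blocks are $A(\mathcal J_1)=A(\mathcal J_2)=P$, and assigning $+$ to arcs from the first half to the second and $-$ to the reverse arcs gives $A_\pm=\begin{pmatrix}0&P\\-P&0\end{pmatrix}$.
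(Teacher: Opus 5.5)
Your proof is correct and follows the route the paper implicitly relies on. The paper gives no separate proof of this corollary; it presents it as the $K_2$ analogue of Proposition~\ref{prop:offdiag-perm-sum}, whose argument is exactly the block decomposition $X^{\otimes n}=X\otimes X^{\otimes(n-1)}$ combined with the sign convention of Definition~\ref{def:signed-coupling}. Strictly speaking this is that argument with $J_n$ replaced by $J_1=(1)$, not a literal specialization, since Proposition~\ref{prop:offdiag-perm-sum} ties the exponent to the part size.

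Two parts of your write-up go beyond what the paper provides:
\begin{itemize}
\item Your direct check that a Kronecker product of permutation matrices is a permutation matrix replaces the paper's appeal to regularity and the shunt decomposition lemma.
\item Your bit-complement computation $j=2^{n-1}-1-i$ justifies the explicit antidiagonal form of $P$, which the paper only asserts.
\end{itemize}
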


\section{Transition Operator for Partial Join Graphs with Signed Coupling}\label{sec5}

This section constructs the transition operator for the partial join of graphs with signed coupling, which preserves PST.
Let $G_1$ and $G_2$ be two $d$-regular directed graphs on $n$
vertices, with
\(
|V(G_1)|=|V(G_2)|=n.
\)
Let $\mathcal{J}_{\pm}$ be the $d$-regular signed directed coupling
graph defined in Definition~\ref{def:signed-coupling}, where $\mathcal{J}_1$ consists of the arcs from $G_1$ to $G_2$,
each assigned the positive sign, while $\mathcal{J}_2$ consists of the
arcs from $G_2$ to $G_1$, each assigned the negative sign. Thus,
$G_1$ and $\mathcal{J}_1$ have the same source vertex set $V(G_1)$,
while $G_2$ and $\mathcal{J}_2$ have the same source vertex set
$V(G_2)$. The signed adjacency matrix of the $d$-regular directed
partial join graph
\(
G=G_1\overset{(d,d)}{\vec{\vee}}G_2
\)
is given by
\begin{equation}\label{signed_adjaceny_G}
A_{\pm}(G)=
\begin{pmatrix}
A(G_1) & A(\mathcal{J}_1)\\
-A(\mathcal{J}_2) & A(G_2)
\end{pmatrix}.
\end{equation}
As discussed in Section~\ref{sec2}, since $G_1$ and $G_2$ are $d$-regular directed graphs, each admits a d-shunt decomposition, from which we associate shift matrices
\begin{equation}\label{shift_G_1}
S_{G_1} \in \mathbb{C}^{nd \times nd},
\qquad
S_{G_2} \in \mathbb{C}^{nd \times nd}.
\end{equation}
\begin{prop}\label{prop:signed-coupling-shunt}
Let $\mathcal{J}_{\pm}$ be the signed directed coupling graph of the
$2d$-regular directed partial join
\(
G=G_1\overset{(d,d)}{\vec{\vee}}G_2,
\)
with positive signs on $\mathcal{J}_1$ and negative signs on
$\mathcal{J}_2$. Then $\mathcal{J}_{\pm}$ is a $d$-regular signed
directed bipartite graph and admits a signed $d$-shunt decomposition.
\end{prop}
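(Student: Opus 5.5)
The plan is to reduce the statement to the unsigned case and invoke Lemma~\ref{lemma 2.1}, carrying the signs along afterwards.

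\emph{Regularity.} Since $G$ is $2d$-regular and $G_1,G_2$ are $d$-regular, Lemma~\ref{lem:2d-regular} shows that the underlying unsigned coupling graph $\mathcal{J}=\mathcal{J}_1\cup\mathcal{J}_2$ is $d$-regular. Concretely, every $a_i$ has out-degree $d$ in $\mathcal{J}_1$ and in-degree $d$ in $\mathcal{J}_2$, and symmetrically for every $b_i$. So $A(\mathcal{J}_1)$ and $A(\mathcal{J}_2)$ are $0$--$1$ matrices with all row and column sums equal to $d$. The signs play no role in degree counting, so $\mathcal{J}_{\pm}$ is a $d$-regular signed directed bipartite graph.

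\emph{Shunt decomposition.} I would apply Lemma~\ref{lemma 2.1} to the unsigned $d$-regular digraph $\mathcal{J}$, whose adjacency matrix is
\[
A_{\mathcal J}=\begin{pmatrix}0&A(\mathcal{J}_1)\\ A(\mathcal{J}_2)&0\end{pmatrix}.
\]
This gives permutation matrices $Q_1,\dots,Q_d$ of order $2n$ with $A_{\mathcal J}=\sum_{k=1}^d Q_k$.

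Each $Q_k$ is a $0$--$1$ matrix with $Q_k\le A_{\mathcal J}$ entrywise, so it inherits the off-diagonal block shape
\[
Q_k=\begin{pmatrix}0&P_k\\ R_k&0\end{pmatrix}.
\]
Because $Q_k$ is a permutation matrix, $P_k$ and $R_k$ are $n\times n$ permutation matrices. This yields $A(\mathcal{J}_1)=\sum_k P_k$ and $A(\mathcal{J}_2)=\sum_k R_k$.

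\emph{Alternative route.} One could instead apply the Birkhoff--K\H{o}nig argument separately to the $d$-regular bipartite matrices $A(\mathcal{J}_1)$ and $A(\mathcal{J}_2)$, then pair the $k$-th pieces. This is the fallback if one prefers not to rely on the block-shape observation.

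\emph{Signing.} Define the signed shunts
\[
Q_k^{\pm}=\begin{pmatrix}0&P_k\\ -R_k&0\end{pmatrix}=\begin{pmatrix}I_n&0\\0&-I_n\end{pmatrix}Q_k.
\]
Each $Q_k^{\pm}$ is a signed permutation matrix, hence unitary, and it sends every vertex to a single out-neighbour with the correct sign. Summing gives $A_{\mathcal J_\pm}=\sum_k Q_k^{\pm}$, matching~\eqref{signed_adjacency}. The arc sets of the $Q_k^{\pm}$ partition $E(\mathcal{J})$ because those of the $Q_k$ do.

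\emph{Main obstacle.} This is definitional rather than technical: the paper has not yet formally defined a signed shunt. I would fix it as a monomial matrix with entries $\pm1$ whose support is a shunt and whose signs agree with $\sigma$. Once that is done, the remaining point to check is that the supports of the $Q_k$ lie in the off-diagonal blocks. This holds because each $Q_k$ is entrywise dominated by $A_{\mathcal J}$, which has zero diagonal blocks.
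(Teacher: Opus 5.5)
Your proposal is correct and follows essentially the same route as the paper: $d$-regularity of $\mathcal{J}$ via Lemma~\ref{lem:2d-regular}, a shunt decomposition from Lemma~\ref{lemma 2.1}, the off-diagonal block shape forced by bipartiteness, and then attaching the negative sign to the $\mathcal{J}_2$ blocks. Your version is somewhat cleaner than the paper's. You apply Lemma~\ref{lemma 2.1} to the unsigned matrix $A_{\mathcal J}$, justify that the blocks are permutation matrices, and obtain the signed shunts as $\operatorname{diag}(I_n,-I_n)\,Q_k$. The paper instead loosely writes the signed matrix $A_{\mathcal{J}_\pm}$ itself as a sum of permutation matrices.
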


\begin{proof}
By Lemma~\ref{lem:2d-regular}, the underlying directed coupling graph
$\mathcal{J}=\mathcal{J}_1\cup\mathcal{J}_2$ is $d$-regular. The sign
assignment does not change the degrees, so $\mathcal{J}_{\pm}$ is also
$d$-regular.
Hence, by Lemma~\ref{lemma 2.1} there exist
permutation matrices $Q_1,\ldots,Q_d$ such that
\(
A_{\mathcal{J}_{\pm}}=\sum_{i=1}^{d}Q_i.
\)
Since $\mathcal{J}_{\pm}$ is bipartite, each $Q_i$ has zero diagonal
blocks. Moreover, the signs of $\mathcal{J}_1$ and $\mathcal{J}_2$
give
\(
Q_i=
\begin{pmatrix}
0&\mathcal{Q}_i\\
-\mathcal{Q}'_i&0
\end{pmatrix},
\)
where $\mathcal{Q}_i$ and $\mathcal{Q}'_i$ are permutation matrices.
Therefore,
\begin{equation}\label{signed_shunt}
A_{\mathcal{J}_{\pm}}
=
\sum_{i=1}^{d}
\begin{pmatrix}
0&\mathcal{Q}_i\\
-\mathcal{Q}'_i&0
\end{pmatrix}
=
\begin{pmatrix}
0&\displaystyle\sum_{i=1}^{d}\mathcal{Q}_i\\[2mm]
-\displaystyle\sum_{i=1}^{d}\mathcal{Q}'_i&0
\end{pmatrix}.
\end{equation}
Thus by~\eqref{signed_adjacency} and~\eqref{signed_shunt}, we have
\begin{equation}
A(\mathcal{J}_1)
=
\sum_{i=1}^{d}\mathcal{Q}_i,
\qquad
A(\mathcal{J}_2)
=
\sum_{i=1}^{d}\mathcal{Q}'_i.
\end{equation}
Hence, $\mathcal{J}_{\pm}$ admits a signed $d$-shunt decomposition. The corresponding signed shift operator is
\begin{equation}\label{shift_pm}
S_{\mathcal{J}_{\pm}}
=
\begin{pmatrix}
0&S_{\mathcal{J}_1}\\
-S_{\mathcal{J}_2}&0
\end{pmatrix},
\qquad
S_{\mathcal{J}_1}
=
\sum_{i=1}^{d}\mathcal{Q}_i\otimes E_{ii},
\qquad
S_{\mathcal{J}_2}
=
\sum_{i=1}^{d}\mathcal{Q}'_i\otimes E_{ii}.
\end{equation}

\end{proof}

By~\eqref{signed_adjaceny_G}, the signed adjacency matrix of $G$ is
given by
\(
A_{\pm}(G)
=
A_{\mathcal{G}}+A_{\mathcal{J}_{\pm}},
\)
where
\begin{equation}
A_{\mathcal{G}}
=
\begin{pmatrix}
A(G_1)&0\\
0&A(G_2)
\end{pmatrix},
\qquad
A_{\mathcal{J}_{\pm}}
=
\begin{pmatrix}
0&A(\mathcal{J}_1)\\
-A(\mathcal{J}_2)&0
\end{pmatrix}.
\end{equation}
We define the corresponding shift operator by
\(
S_G=S_{\mathcal{G}}+S_{\mathcal{J}_{\pm}}.
\)
By~\eqref{shift_G_1} and~\eqref{shift_pm},
\begin{equation}
S_G
=
\begin{pmatrix}
S_{G_1}&S_{\mathcal{J}_1}\\
-S_{\mathcal{J}_2}&S_{G_2}
\end{pmatrix}.
\end{equation}
Let the coin operator be
\begin{equation}
C_G=
\begin{pmatrix}
C&0\\
0&C
\end{pmatrix}.
\end{equation}
Therefore, after normalization, the transition operator associated
with the $d$-regular directed partial join with signed coupling is
\begin{equation}
U_G
=
\frac{1}{\sqrt{2}}S_GC_G
=
\frac{1}{\sqrt{2}}
\begin{pmatrix}
S_{G_1}C&S_{\mathcal{J}_1}C\\
-S_{\mathcal{J}_2}C&S_{G_2}C
\end{pmatrix}.
\end{equation}
Writing
\(
U_{G_1}=S_{G_1}C,\quad
U_{G_2}=S_{G_2}C,
\)
and
\(
U_{\mathcal{J}_1}=S_{\mathcal{J}_1}C,\quad
U_{\mathcal{J}_2}=S_{\mathcal{J}_2}C,
\)
we obtain
\begin{equation}\label{transition_operator_join _Graphs}
{
U_G
=
\frac{1}{\sqrt{2}}
\begin{pmatrix}
U_{G_1}&U_{\mathcal{J}_1}\\
-U_{\mathcal{J}_2}&U_{G_2}
\end{pmatrix}
\in\mathbb{C}^{2nd\times 2nd}.
}
\end{equation}
Here, we call $U_{G_1}$,  $U_{G_2} \in\mathbb{C}^{nd\times nd}$ the \emph{internal transition
operators}, while
\(
U_{\mathcal{J}_1},U_{\mathcal{J}_2}\in\mathbb{C}^{nd\times nd}
\)
are the \emph{coupling transition operators} associated with the
corresponding signed shunt decompositions.

In our setting, we distinguish two types of PST.
We refer to PST between two vertices within the same graph, either
$G_1$ or $G_2$, as \emph{internal PST}. In contrast, we refer to PST
between vertices belonging to different components, from $G_1$ to
$G_2$ or from $G_2$ to $G_1$, as \emph{coupling PST}.

Let $a,b\in V(G_1)$ and $c,d\in V(G_2)$. We are interested in
determining when PST occurs either between two vertices $a$ and $b$
within the same graph $G_1$ or \(G_2\), or between vertices in the two different components of the join 
graphs, such as from $a\in V(G_1)$ to $c,d\in V(G_2)$. Let $G = G_1 \vee G_2$ denote the standard join of graphs, in the
continuous-time setting, Godsil~\cite[Sec.~5]{godsil2021sedentary}
observes that if PST from $a$ to $b$ occurs at time $t$, then the
transition amplitudes from $a$ to every vertex of $G_2$, as well as the
transition amplitude from $a$ to itself, must be zero. In other words,
\begin{equation}
\left|\left\langle U_G(t)e_a,e_y\right\rangle\right|=0
\quad\text{for all }y\in V(G_2),\quad  \text{and}\quad \left|\left\langle U_G(t)e_a,e_a\right\rangle\right|=0.
\end{equation}
Here, the value $0$ denotes a zero transition amplitude; equivalently,
the corresponding transition probability is zero.
In our discrete-time framework, the block structure of the transition
operator $U_G$ allows both internal and coupling state transfer to be
described within the same framework.  Thus, our framework does not impose any a priori restriction on the location of the receiving vertex, and both internal and coupling PST can be detected directly from the powers of $U_G$, as discussed in detail in Sections~\ref{sec6} and~\ref{sec7}.

\begin{prop}\label{prop4}
Let $U_G$ be the transition matrix of Eq.~\eqref{transition_operator_join _Graphs}
of a graph $G$ with $n\geq 2$ vertices, where each block
$U_{G_1},U_{G_2},U_{\mathcal{J}_1},U_{\mathcal{J}_2}$ is a transition matrix, hence unitary.
Then $U_G$ is unitary if and only if
\[
U_{G_1}^{\dagger}U_{\mathcal{J}_1}
=
U_{\mathcal{J}_2}^{\dagger}U_{G_2},
\qquad
U_{\mathcal{J}_1}^{\dagger}U_{G_1}
=
U_{G_2}^{\dagger}U_{\mathcal{J}_2}.
\]
\end{prop}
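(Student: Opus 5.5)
Since $U_G$ is a square matrix of size $2nd\times 2nd$, it is unitary if and only if $U_G^{\dagger}U_G=I_{2nd}$. I will compute this product blockwise and read off the conditions. Using the block form \eqref{transition_operator_join _Graphs},
\[
U_G^{\dagger}=\frac{1}{\sqrt2}\begin{pmatrix}U_{G_1}^{\dagger}&-U_{\mathcal{J}_2}^{\dagger}\\ U_{\mathcal{J}_1}^{\dagger}&U_{G_2}^{\dagger}\end{pmatrix},
\]
and therefore
\[
U_G^{\dagger}U_G=\frac12\begin{pmatrix}U_{G_1}^{\dagger}U_{G_1}+U_{\mathcal{J}_2}^{\dagger}U_{\mathcal{J}_2} & U_{G_1}^{\dagger}U_{\mathcal{J}_1}-U_{\mathcal{J}_2}^{\dagger}U_{G_2}\\ U_{\mathcal{J}_1}^{\dagger}U_{G_1}-U_{G_2}^{\dagger}U_{\mathcal{J}_2} & U_{\mathcal{J}_1}^{\dagger}U_{\mathcal{J}_1}+U_{G_2}^{\dagger}U_{G_2}\end{pmatrix}.
\]

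The diagonal blocks are automatic. Each of $U_{G_1},U_{G_2},U_{\mathcal{J}_1},U_{\mathcal{J}_2}$ is unitary by hypothesis, so each diagonal block equals $\tfrac12(I+I)=I_{nd}$. This is exactly why the factor $1/\sqrt2$ is there, and the sign in $-U_{\mathcal{J}_2}$ disappears in the product $(-U_{\mathcal{J}_2})^{\dagger}(-U_{\mathcal{J}_2})$.

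Hence $U_G^{\dagger}U_G=I$ holds if and only if both off-diagonal blocks vanish:
\[
U_{G_1}^{\dagger}U_{\mathcal{J}_1}=U_{\mathcal{J}_2}^{\dagger}U_{G_2},\qquad U_{\mathcal{J}_1}^{\dagger}U_{G_1}=U_{G_2}^{\dagger}U_{\mathcal{J}_2}.
\]
This proves both directions at once. I would remark that the second identity is the adjoint of the first, so one condition already suffices. Stating both matches the proposition, and the redundancy causes no harm.

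The only real care is bookkeeping. I need the correct adjoint of a block matrix (transpose the blocks and conjugate each one) and must track the minus sign on the lower-left block, which is what turns the off-diagonal sums into the differences appearing in the stated conditions. I also need finite dimensionality, so that a left inverse is a two-sided inverse and $U_G^{\dagger}U_G=I$ alone gives unitarity. No further obstacle is expected.
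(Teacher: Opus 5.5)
Your proof is correct and is essentially the paper's argument: expand $U_G^{\dagger}U_G$ blockwise, note that the diagonal blocks equal the identity because all four blocks are unitary, and read off the two off-diagonal conditions, with finite dimensionality making $U_G^{\dagger}U_G=I$ sufficient. Your block computation is cleaner than the paper's displayed matrix, whose upper-right entry mistakenly shows $U_{\mathcal{J}_1}^{\dagger}U_{G_2}$ in place of $U_{\mathcal{J}_2}^{\dagger}U_{G_2}$. Your remark that the second condition is just the adjoint of the first is also a valid simplification that the paper does not make.
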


\begin{proof}
\noindent\textbf{($\Rightarrow$).}
Suppose $U_G$ is unitary. From
$U_G^\dagger U_G=I_{2n}$, writing $U_G$ in block form as in
Eq.~\eqref{transition_operator_join _Graphs}, we obtain
\[
U_G^{\dagger}U_G
=
\frac12
\begin{pmatrix}
U_{G_1}^{\dagger}U_{G_1}
+
U_{\mathcal{J}_2}^{\dagger}U_{\mathcal{J}_2}
&
U_{G_1}^{\dagger}U_{\mathcal{J}_1}
-
U_{\mathcal{J}_1}^{\dagger}U_{G_2}
\\[4pt]
U_{\mathcal{J}_1}^{\dagger}U_{G_1}
-
U_{G_2}^{\dagger}U_{\mathcal{J}_2}
&
U_{\mathcal{J}_1}^{\dagger}U_{\mathcal{J}_1}
+
U_{G_2}^{\dagger}U_{G_2}
\end{pmatrix}
=
I_{2n}.
\]
Thus,
\begin{align}
U_{G_1}^{\dagger}U_{G_1}
+
U_{\mathcal{J}_2}^{\dagger}U_{\mathcal{J}_2}
&=2I_n,
\label{eq:A}\\
U_{G_1}^{\dagger}U_{\mathcal{J}_1}
-
U_{\mathcal{J}_2}^{\dagger}U_{G_2}
&=0,
\label{eq:B}\\
U_{\mathcal{J}_1}^{\dagger}U_{G_1}
-
U_{G_2}^{\dagger}U_{\mathcal{J}_2}
&=0,
\label{eq:C}\\
U_{\mathcal{J}_1}^{\dagger}U_{\mathcal{J}_1}
+
U_{G_2}^{\dagger}U_{G_2}
&=2I_n.
\label{eq:D}
\end{align}
Since each block is unitary,
\(
U_{G_1}^{\dagger}U_{G_1}
=
U_{\mathcal{J}_1}^{\dagger}U_{\mathcal{J}_1}
=
U_{\mathcal{J}_2}^{\dagger}U_{\mathcal{J}_2}
=
U_{G_2}^{\dagger}U_{G_2}
=
I_n.
\)
Hence \eqref{eq:A} and \eqref{eq:D} are automatically satisfied.
Equations \eqref{eq:B} and \eqref{eq:C} give
\begin{equation}
U_{G_1}^{\dagger}U_{\mathcal{J}_1}
=
U_{\mathcal{J}_2}^{\dagger}U_{G_2},
\qquad
U_{\mathcal{J}_1}^{\dagger}U_{G_1}
=
U_{G_2}^{\dagger}U_{\mathcal{J}_2},
\end{equation}
which are precisely the stated off-diagonal conditions.

\noindent\textbf{($\Leftarrow$).}
Conversely, suppose
\(
U_{G_1}^{\dagger}U_{\mathcal{J}_1}
=
U_{\mathcal{J}_2}^{\dagger}U_{G_2},
\quad
U_{\mathcal{J}_1}^{\dagger}U_{G_1}
=
U_{G_2}^{\dagger}U_{\mathcal{J}_2}.
\)
Then the off-diagonal blocks in $U_G^\dagger U_G$ vanish.
Moreover, since all four blocks are unitary,
\(
U_{G_1}^{\dagger}U_{G_1}
+
U_{\mathcal{J}_2}^{\dagger}U_{\mathcal{J}_2}=
2I_n,
\quad
U_{\mathcal{J}_1}^{\dagger}U_{\mathcal{J}_1}
+
U_{G_2}^{\dagger}U_{G_2}
=
2I_n.
\)
Therefore,
\(
U_G^\dagger U_G=I_{2n}.
\)
Hence $U_G$ is unitary.
\end{proof}
A key challenge in this construction is the definition of suitable coupling matrices $U_{\mathcal{J}_1}$ and $U_{\mathcal{J}_2}$. This issue becomes especially pronounced when the component graphs $G_1$ and $G_2$ differ in size or internal structure, or when different coin-shift (or shift-coin) decompositions are employed. Even for joins of identical graphs, the dimensions and algebraic forms of the resulting coupling matrices may depend on the
type of join and the choice of shift matrices, as illustrated in
Example~\ref{eq.2.2}.
\begin{ex}\label{eq.2.2}
To illustrate the dimensional constraint in the shunt-decomposition
framework, consider the cycle graph $C_4$. Its adjacency matrix admits
the shunt decomposition
\(
A(C_4)=P_1+P_2,
\)
where
\(
P_1=
\begin{pmatrix}
0&1&0&0\\
0&0&1&0\\
0&0&0&1\\
1&0&0&0
\end{pmatrix},
\quad
P_2=
\begin{pmatrix}
0&0&0&1\\
1&0&0&0\\
0&1&0&0\\
0&0&1&0
\end{pmatrix}.
\)
Thus, the shift operator is
\(
S_{C_4}=P_1\otimes E_{11}+P_2\otimes E_{22}.
\)
Using the Hadamard coin
\(
H=\frac{1}{\sqrt2}
\begin{pmatrix}
1&1\\
1&-1
\end{pmatrix},
\quad
C=I_4\otimes H,
\)
the transition operator
\(
U_{G_1}=U_{G_2}=S_{C_4}C
\)
has dimension $8\times8$.
Now consider the join $C_4\vee C_4$, where the
underlying coupling graph is the complete bipartite graph
$K_{4,4}$. Thus, $\mathcal{J}=\mathcal{J}_1\cup\mathcal{J}_2$ is
$4$-regular directed graph, with
\(
A(\mathcal{J}_1)=J_4,
\quad
A(\mathcal{J}_2)=J_4.
\)
For signed coupling, the corresponding signed coupling matrix is
\(
A_{\mathcal{J}_{\pm}}
=
\begin{pmatrix}
0&J_4\\
-J_4&0
\end{pmatrix}.
\)
Since $J_4$ is $4$-regular, it admits a decomposition into four
permutation matrices,
\(
J_4=P_1+P_2+P_3+P_4.
\)
Therefore, the coupling shift operator is
\(
S_{\mathcal{J}_1}
=
\sum_{i=1}^4P_i\otimes E_{ii}.
\)
The coupling degree is $4$, so the Hadamard coin, which is a
degree-$2$ coin for $C_4$, is no longer appropriate. Instead, we use
the degree-$4$ Grover coin
\(
G_4=\frac{2}{4}J_4-I_4
=
\frac12
\begin{pmatrix}
-1&1&1&1\\
1&-1&1&1\\
1&1&-1&1\\
1&1&1&-1
\end{pmatrix}.
\)
With
\(
C_{\mathcal{J}_1}=I_4\otimes G_4,
\)
the coupling transition operator
\(
U_{\mathcal{J}_1}
=
S_{\mathcal{J}_1}C_{\mathcal{J}_1}
\)
has dimension $16\times16$. Similarly,
\(
U_{\mathcal{J}_2}=U_{\mathcal{J}_1}.
\)
Hence, the internal transition operators $U_{G_1}$ and $U_{G_2}$ have
dimension $8\times8$, whereas the coupling transition operators
$U_{\mathcal{J}_1}$ and $U_{\mathcal{J}_2}$ have dimension $16\times16$.
Consequently, the block transition operator~\eqref{transition_operator_join _Graphs}
cannot be formed because its blocks have incompatible dimensions.
Thus, the construction does not define a unitary transition operator
for this choice of internal graph and coupling graph, in accordance
with Proposition~\ref{prop4}.
\end{ex}
Thus, we focus on partial joins in which the degree of the coupling graphs $\mathcal{J}_1$ and $\mathcal{J}_2$ is matched to the degree of the underlying graphs $G_1$ and $G_2$, as discussed in Sections~\ref{sec3} and~\ref{sec4}. This choice is designed to ensure the dimensional compatibility of the associated shunt-decomposition-based quantum walk operators, so that unitarity of $U_G$ is preserved, as we establish below in Proposition~\ref{prop4}. Figures~\ref{Fig 2}, \ref{fig:C4-partial-join-case-a}, and \ref{fig:C4-partial-join-case-b} illustrate examples where the degree of the underlying graph matches that of the coupling graphs. We now derive conditions under which the transition matrix $U_G$ associated with a directed partial join of two $d$-regular directed graphs is unitary.

\begin{lemma}\label{lemma:6.2}
Let $G_1$ and $G_2$ be vertex-disjoint $d$-regular directed graphs with
$|V(G_1)|=|V(G_2)|=n$, and let
$\phi:V(G_1)\to V(G_2)$ be a fixed bijection with permutation matrix
$P_\phi$. Then:
\begin{enumerate}
    \item There exists a directed partial join
    \(
    G=G_1\overset{(d,d)}{\vec{\vee}}G_2,
    \)
    such that
    \[
    A(\mathcal J_1)=P_\phi^\top A(G_2)P_\phi,
    \qquad
    A(\mathcal J_2)=P_\phi A(G_1)P_\phi^\top.
    \]

    \item If $G_1\cong G_2$, relabel $G_2$ so that
    $A(G_2)=A(G_1)$. Then there exists a directed partial join
    \(
    G=G_1\overset{(d,d)}{\vec{\vee}}G_2
    \)
, such that
    \[
    A(\mathcal J_1)=A(G_1)=A(G_2),
    \qquad
    A(\mathcal J_2)=A(G_1)=A(G_2).
    \]
\end{enumerate}
In both cases, the associated signed-coupling transition operator
$U_G$ is unitary.
\end{lemma}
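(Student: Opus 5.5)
The plan is to build the coupling graphs directly from the prescribed matrices and check degrees with Lemma~\ref{lem:2d-regular}. The shunt decompositions of $\mathcal J_1,\mathcal J_2$ will be inherited from those of $G_2$ and $G_1$, so that the coupling transition operators are permutation-conjugates of the internal ones. Unitarity then reduces, via Proposition~\ref{prop4}, to an identity that holds tautologically. Part~2 is the special case $\phi=\mathrm{id}$ after relabelling, with $A(G_2)=A(G_1)$, so I will treat part~1 and then specialise.

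\textbf{Existence and regularity.} Conjugation by a permutation matrix preserves being a $0$--$1$ matrix and preserves all row and column sums. Hence $P_\phi^\top A(G_2)P_\phi$ and $P_\phi A(G_1)P_\phi^\top$ are $0$--$1$ matrices with every row and column sum equal to $d$. Define $\mathcal J_1$ by $a_i\to b_j$ iff $(P_\phi^\top A(G_2)P_\phi)_{ij}=1$, and $\mathcal J_2$ analogously. Each $a_i$ then emits exactly $d$ arcs to distinct vertices of $V_2$, so this is a legitimate directed partial join $G_1\overset{(d,d)}{\vec\vee}G_2$ in the sense of Definition~\ref{def:dpj-reg}. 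The coupling graph is $d$-regular, so Lemma~\ref{lem:2d-regular} shows $G$ is $2d$-regular.

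\textbf{Shunts and operators.} Fix shunt decompositions $A(G_1)=\sum_j P^{(1)}_j$ and $A(G_2)=\sum_j P^{(2)}_j$ (Lemma~\ref{lemma 2.1}). For the coupling graphs, take $\mathcal Q_j=P_\phi^\top P^{(2)}_jP_\phi$ and $\mathcal Q'_j=P_\phi P^{(1)}_jP_\phi^\top$. These are permutation matrices summing to $A(\mathcal J_1)$ and $A(\mathcal J_2)$, so they form a valid signed $d$-shunt decomposition as in Proposition~\ref{prop:signed-coupling-shunt}. With $\Pi=P_\phi\otimes I_d$ and the uniform coin $I_n\otimes C$, which commutes with $\Pi$, we get
\[
U_{\mathcal J_1}=\Pi^\top U_{G_2}\Pi,\qquad U_{\mathcal J_2}=\Pi U_{G_1}\Pi^\top .
\]

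\textbf{Unitarity.} This is the step I expect to be the only real obstacle. Checking Proposition~\ref{prop4} directly gives $U_{G_1}^\dagger\Pi^\top U_{G_2}\Pi$ versus $\Pi U_{G_1}^\dagger\Pi^\top U_{G_2}$. These need not agree in general.

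The remedy is to relabel the second component through $\phi$, that is, conjugate $U_G$ by the unitary $\mathrm{diag}(I,\Pi^\top)$; this preserves unitarity. In the new coordinates write $\widetilde U_{G_2}=\Pi U_{G_2}\Pi^\top$. The blocks become
\[
\tfrac1{\sqrt2}\begin{pmatrix}U_{G_1}&\widetilde U_{G_2}\\ -U_{G_1}&\widetilde U_{G_2}\end{pmatrix}.
\]
For this matrix the condition of Proposition~\ref{prop4} reads $U_{G_1}^\dagger\widetilde U_{G_2}=U_{G_1}^\dagger\widetilde U_{G_2}$, which is trivially true. The second condition is its adjoint. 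Equivalently, the matrix factors as $\tfrac1{\sqrt2}\begin{pmatrix}1&1\\-1&1\end{pmatrix}\otimes I$ times $\mathrm{diag}(U_{G_1},\widetilde U_{G_2})$, a product of unitaries.

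I would double-check the index conventions carefully here, i.e.\ whether $\phi$ or $\phi^{-1}$ appears, so that the off-diagonal blocks really coincide with the relabelled diagonal ones. In part~2 no conjugation is needed: all four blocks equal $U:=U_{G_1}=U_{G_2}$, and the condition is $U^\dagger U=U^\dagger U$.
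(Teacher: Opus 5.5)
\textbf{Part 1 has a genuine gap at the unitarity step, and it cannot be closed as stated.} The rest is fine: your regularity argument, the transferred shunts $\mathcal Q_j=P_\phi^\top P^{(2)}_jP_\phi$ and $\mathcal Q'_j=P_\phi P^{(1)}_jP_\phi^\top$, the resulting $U_{\mathcal J_1}=\Pi^\top U_{G_2}\Pi$ and $U_{\mathcal J_2}=\Pi U_{G_1}\Pi^\top$, and part~2 (all four blocks equal $U$, as in the paper).

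Conjugating by a block-diagonal unitary $\mathrm{diag}(I,X)$ never changes the $(1,1)$ block, and it multiplies each off-diagonal block on one side only. The direction that produces your $\widetilde U_{G_2}$ gives
\[
\mathrm{diag}(I,\Pi)\,U_G\,\mathrm{diag}(I,\Pi^\top)=\tfrac1{\sqrt2}\begin{pmatrix}U_{G_1}&\Pi^\top U_{G_2}\\ -\Pi^2U_{G_1}\Pi^\top&\Pi U_{G_2}\Pi^\top\end{pmatrix}.
\]
This is not the matrix you display, and replacing $\phi$ by $\phi^{-1}$ does not help. More fundamentally, conjugation preserves non-unitarity as well as unitarity. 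It therefore cannot remove the obstruction you correctly found, $U_{G_1}^\dagger\Pi^\top U_{G_2}\Pi\neq\Pi U_{G_1}^\dagger\Pi^\top U_{G_2}$.

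That obstruction is real. Take $n=3$, $d=1$, and $G_1=G_2$ the directed $3$-cycle with $A(G_1)=A(G_2)=P$, and let $\phi$ be a transposition. Then $A(\mathcal J_1)=A(\mathcal J_2)=P_\phi^\top PP_\phi=P^\top$. The shunts are forced and the coin is a scalar phase $c$, so $U_G=\tfrac{c}{\sqrt2}\begin{pmatrix}P&P^\top\\-P^\top&P\end{pmatrix}$. Its Gram matrix $U_G^\dagger U_G$ has off-diagonal block $\tfrac12\bigl((P^\top)^2-P^2\bigr)=\tfrac12(P-P^\top)\neq0$. So with the two-sided conjugated coupling matrices, the unitarity claim of part~1 is false, and no argument can establish it.

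\textbf{What does work is the one-sided coupling.} This is what the paper's proof actually writes down at the level of arcs:
\[
(u,v)\in E(\mathcal J_1)\iff(\phi(u),v)\in E(G_2),\qquad (v,u)\in E(\mathcal J_2)\iff(\phi^{-1}(v),u)\in E(G_1).
\]
With $P_\phi e_u=e_{\phi(u)}$, this gives $A(\mathcal J_1)=P_\phi^\top A(G_2)$ and $A(\mathcal J_2)=P_\phi A(G_1)$. Transferring shunts exactly as you did yields $U_{\mathcal J_1}=\Pi^\top U_{G_2}$ and $U_{\mathcal J_2}=\Pi U_{G_1}$. The condition of Proposition~\ref{prop4} then becomes the tautology $U_{G_1}^\dagger\Pi^\top U_{G_2}=U_{G_1}^\dagger\Pi^\top U_{G_2}$. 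Indeed,
\[
U_G=\tfrac1{\sqrt2}\begin{pmatrix}I&\Pi^\top\\-\Pi&I\end{pmatrix}\begin{pmatrix}U_{G_1}&0\\0&U_{G_2}\end{pmatrix}
\]
is a product of unitaries.

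The paper reaches unitarity by simply asserting $U_{\mathcal J_1}=U_{G_2}$ and $U_{\mathcal J_2}=U_{G_1}$. That identification is literally correct only when the relabelling is trivial, as in part~2. So the correct repair is to replace the two-sided formulas by the one-sided coupling matrices, not to conjugate.
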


\begin{proof}
Fix $\phi:V(G_1)\to V(G_2)$ and define
\(
(u,v)\in E(\mathcal J_1)
\iff
(\phi(u),v)\in E(G_2),
\)
and
\(
(v,u)\in E(\mathcal J_2)
\iff
(\phi^{-1}(v),u)\in E(G_1).
\)
Hence,
\[
A(\mathcal J_1)=P_\phi^\top A(G_2)P_\phi,
\qquad
A(\mathcal J_2)=P_\phi A(G_1)P_\phi^\top.
\]
Permutation conjugation preserves row and column sums, so the
underlying coupling graph $\mathcal J=\mathcal J_1\cup\mathcal J_2$
is $d$-regular.
The bijections
\(
(\phi(u),v)\mapsto(u,v),
\quad
(a,u)\mapsto(\phi(a),u)
\)
transfer the shunts of $G_2$ and $G_1$ to those of $\mathcal J_1$ and
$\mathcal J_2$, respectively. Thus, with the same coin operator,
\(
U_{\mathcal J_1}=U_{G_2},
\quad
U_{\mathcal J_2}=U_{G_1}.
\)
If $G_1\cong G_2$, choose the labeling so that
$A(G_1)=A(G_2)$. Then the same shunt decomposition gives
\(
U_{G_1}=U_{G_2},
\quad
U_{\mathcal J_1}=U_{\mathcal J_2}=U_{G_1}=U_{G_2}.
\)
Assigning positive signs to $\mathcal J_1$ and negative signs to
$\mathcal J_2$, the corresponding signed-coupling transition operator
is given by~\eqref{transition_operator_join _Graphs}
Using
\(
U_{\mathcal J_1}=U_{G_2},
\quad
U_{\mathcal J_2}=U_{G_1},
\)
we have
\(
U_{G_1}^{\dagger}U_{\mathcal{J}_1}
=
U_{G_1}^{\dagger}U_{G_2}
=
U_{\mathcal{J}_2}^{\dagger}U_{G_2},\quad
U_{\mathcal{J}_1}^{\dagger}U_{G_1}
=
U_{G_2}^{\dagger}U_{G_1}
=
U_{G_2}^{\dagger}U_{\mathcal{J}_2}.
\)
Thus, both off-diagonal conditions in Proposition~\ref{prop4} are
satisfied. Hence, $U_G$ is unitary in both cases.
\end{proof}

\begin{lemma}\label{lem_shared_pi_coupling}
Let $G_1$ and $G_2$ be vertex-disjoint $d$-regular directed graphs on
$n$ vertices with $G_1\cong G_2$. Then there exists a directed partial
join
\(
G=G_1\overset{(d,d)}{\vec{\vee}}G_2
\)
with signed coupling such that
\[
A(\mathcal J_1)=A(\mathcal J_2)=AP_\pi\neq A,
\qquad
A=A(G_1)=A(G_2),
\]
where the arcs of $\mathcal J_1$ are assigned positive signs and those
of $\mathcal J_2$ are assigned negative signs. The associated
transition matrix $U_G$ is unitary.
\end{lemma}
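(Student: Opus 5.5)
Relabel $G_2$ so that $A(G_1)=A(G_2)=A$, as in part 2 of Lemma~\ref{lemma:6.2}. Then fix a permutation $\pi$ of $\{1,\dots,n\}$ with permutation matrix $P_\pi$ satisfying $AP_\pi\neq A$, and define both coupling graphs by
\[
(a_i,b_j)\in E(\mathcal J_1)\iff (A P_\pi)_{ij}=1,\qquad (b_i,a_j)\in E(\mathcal J_2)\iff (AP_\pi)_{ij}=1 .
\]
Thus $A(\mathcal J_1)=A(\mathcal J_2)=AP_\pi$. Right multiplication by a permutation matrix only permutes columns, so $AP_\pi$ is still a zero-one matrix with all row and column sums equal to $d$. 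Hence $\mathcal J$ is $d$-regular, and by Lemma~\ref{lem:2d-regular} $G$ is a $2d$-regular directed partial join. Signing the arcs of $\mathcal J_1$ positive and those of $\mathcal J_2$ negative gives the signed join, as in Definition~\ref{def:signed-coupling}.

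\textbf{Existence of $\pi$ (main obstacle).} We need some $\pi$ with $AP_\pi\neq A$. If $A=0$ or $A=J_n$, then $AP_\pi=A$ for every $\pi$, so the statement must tacitly assume $0<d<n$; I would add this hypothesis. Under it, some row of $A$ has both a $0$ and a $1$ entry, say in columns $j$ and $k$. Taking $\pi$ to be the transposition $(j\,k)$ swaps these two columns and changes that row, so $AP_\pi\neq A$.

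\textbf{Shunts and transition operators.} Take a shunt decomposition $A=\sum_{i=1}^d P_i$ of $G_1$ (Lemma~\ref{lemma 2.1}). Then
\[
AP_\pi=\sum_{i=1}^d P_iP_\pi ,
\]
and each $P_iP_\pi$ is a permutation matrix, so this is a shunt decomposition of both $\mathcal J_1$ and $\mathcal J_2$. It is also an admissible signed shunt decomposition in the sense of Proposition~\ref{prop:signed-coupling-shunt}. With the common coin $C=I_n\otimes C_0$ we get
\[
U_{\mathcal J_1}=U_{\mathcal J_2}=\Big(\sum_i P_iP_\pi\otimes E_{ii}\Big)(I_n\otimes C_0)=(P_\pi^{\top}\!\!\!\!\phantom{x}\!\!\!\!\!\!\cdot)\!\!\!
\]
More cleanly, since $P_iP_\pi\otimes E_{ii}=(P_i\otimes E_{ii})(P_\pi\otimes I_d)$, and $P_\pi\otimes I_d$ commutes with $I_n\otimes C_0$,
\[
U_{\mathcal J_1}=U_{\mathcal J_2}=S_{G_1}(P_\pi\otimes I_d)(I_n\otimes C_0)=U_{G_1}\,\Pi,\qquad \Pi:=P_\pi\otimes I_d .
\]
Since $A(G_1)=A(G_2)$ and we use the same shunt decomposition for both, $U_{G_2}=U_{G_1}=:U$.

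\textbf{Unitarity.} We check the two conditions of Proposition~\ref{prop4}. The first condition reads
\[
U_{G_1}^\dagger U_{\mathcal J_1}=U^\dagger U\Pi=\Pi,\qquad U_{\mathcal J_2}^\dagger U_{G_2}=\Pi^\dagger U^\dagger U=\Pi^{\top}.
\]
These agree only if $\Pi=\Pi^\top$, that is, only if $\pi$ is an involution. The transposition chosen in the existence step is an involution, so the first condition holds.

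The second condition reads
\[
U_{\mathcal J_1}^\dagger U_{G_1}=\Pi^\top,\qquad U_{G_2}^\dagger U_{\mathcal J_2}=\Pi ,
\]
which again holds because $\Pi$ is symmetric. Proposition~\ref{prop4} then gives that $U_G$ is unitary.

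This reveals a second point where care is needed. An arbitrary $\pi$ with $AP_\pi\neq A$ does not give a unitary $U_G$; the involutive transposition from the existence step is what makes the unitarity argument work. As a direct check, one can also write
\[
U_G=\tfrac1{\sqrt2}\begin{pmatrix}U&U\Pi\\-U\Pi&U\end{pmatrix}=\bigl(I_2\otimes U\bigr)\cdot\tfrac1{\sqrt2}\begin{pmatrix}I&\Pi\\-\Pi&I\end{pmatrix}.
\]
When $\Pi^2=I$ and $\Pi=\Pi^\top$, the second factor is unitary, and so is $U_G$.
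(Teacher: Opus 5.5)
Your proof is correct and is essentially the paper's: choose an involution $\pi$ with $AP_\pi\neq A$, take the shunts $Q_i=P_iP_\pi$ so that $S_{\mathcal J_1}=S_{G_1}(P_\pi\otimes I_d)$, and use the symmetry of $P_\pi$ to check the conditions of Proposition~\ref{prop4}. Three remarks. Your explicit transposition fills a gap the paper leaves open: it simply posits such an involution, but none exists when $A=0$ or $A=J_n$, so your added hypothesis $0<d<n$ is genuinely needed. Your step commuting $P_\pi\otimes I_d$ past the coin assumes a uniform coin, whereas the paper's form $U_{G_1}^\dagger U_{\mathcal J_1}=C^\dagger(P_\pi\otimes I_d)C$ works for any unitary coin. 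Finally, delete the garbled first display for $U_{\mathcal J_1}$.
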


\begin{proof}
Relabel $V(G_2)$ so that
\(
A(G_1)=A(G_2)=A.
\)
Let $\pi$ be an involution on the common vertex-label set such that
\(
\pi^2=\mathrm{id},
\quad
AP_\pi\neq A.
\)
Define
\(
A(\mathcal J_1)=A(\mathcal J_2)=AP_\pi,
\)
where $\mathcal J_1$ consists of arcs from $G_1$ to $G_2$ and
$\mathcal J_2$ consists of arcs from $G_2$ to $G_1$. Since right
multiplication by $P_\pi$ preserves the row and column sums of $A$,
$AP_\pi$ is a $d$-regular $0$-$1$ matrix. Hence
$\mathcal J=\mathcal J_1\cup\mathcal J_2$ is a $d$-regular directed
bipartite coupling graph.
Assign positive signs to the arcs of $\mathcal J_1$ and negative signs
to the arcs of $\mathcal J_2$. Thus, the signed coupling adjacency
matrix is
\[
A_{\mathcal J_\pm}
=
\begin{pmatrix}
0&AP_\pi\\
-AP_\pi&0
\end{pmatrix}.
\]
Let $A=\sum_{i=1}^{d}P_i$ be a shunt decomposition of $G_1$. Since $A(G_2)=A(G_1)$, choose the same shunt decomposition for $G_2$. Define
\(
Q_i=P_iP_\pi, \quad i=1,\ldots,d.
\)
Each $Q_i$ is a permutation matrix, and
\(
\sum_{i=1}^{d}Q_i = \left(\sum_{i=1}^{d}P_i\right)P_\pi = AP_\pi,
\)
so $\{Q_i\}_{i=1}^{d}$ gives a shunt decomposition of both $\mathcal{J}_1$ and $\mathcal{J}_2$. The corresponding shift matrices are
\(
S_{p}=\sum_{i=1}^{d}P_i\otimes E_{ii},
\quad
S_{q}=\sum_{i=1}^{d}Q_i\otimes E_{ii}.
\)
Using the same unitary coin $C$, we have
\(
U_{G_1}=U_{G_2}=S_{p}C,
\quad
U_{\mathcal{J}_1}=U_{\mathcal{J}_2}=S_{q}C.
\) Hence,
\[
S_p^\dagger S_q
=\sum_{i=1}^{d}P_i^\dagger Q_i\otimes E_{ii}
=\sum_{i=1}^{d}P_i^\dagger P_iP_\pi\otimes E_{ii}
=\sum_{i=1}^{d}P_\pi\otimes E_{ii}
=P_\pi\otimes I_d,
\]
using $P_i^\dagger P_i=I$ for each permutation matrix $P_i$. Since $\pi$ is an involution,
$P_\pi^\dagger=P_\pi$, and hence
$P_\pi\otimes I_d$ is Hermitian. Consequently,
\[
U_{G_1}^\dagger U_{\mathcal{J}_1} = (S_pC)^\dagger(S_qC) = C^\dagger(S_p^\dagger S_q)C = C^\dagger(P_\pi\otimes I_d)C
\]
is Hermitian (conjugation of a Hermitian matrix by any $C$ preserves the Hermitian property), so
\(
U_{G_1}^\dagger U_{\mathcal{J}_1} = U_{\mathcal{J}_1}^\dagger U_{G_1}.
\)
Since $U_{G_2}=U_{G_1}$ and $U_{\mathcal{J}_2}=U_{\mathcal{J}_1}$, this gives
\[
U_{G_1}^\dagger U_{\mathcal{J}_1} = U_{\mathcal{J}_1}^\dagger U_{G_1} = U_{\mathcal{J}_2}^\dagger U_{G_2},
\qquad
U_{\mathcal{J}_1}^\dagger U_{G_1} = U_{G_1}^\dagger U_{\mathcal{J}_1} = U_{G_2}^\dagger U_{\mathcal{J}_2}.
\]
Thus the compatibility conditions of Proposition~\ref{prop4} are satisfied. Since $U_{G_1},U_{G_2},U_{\mathcal{J}_1},U_{\mathcal{J}_2}$ are unitary, Proposition~\ref{prop4} implies that $U_G$ is unitary.
\end{proof}

\begin{cor}\label{cor_complement_loop_coupling}
Let $G_1\cong G_2$ be $d$-regular directed loopless graphs on
$n=2d$ vertices, with a shunt decomposition
\(
A(G_1)=\sum_{i=1}^d P_i.
\)
Let $J_n$ denote the $n\times n$ all-ones matrix, and define
\[
A(\mathcal{J}_1)=J_n-A(G_1),
\]
so that $\mathcal{J}_1$ is obtained from the complement of $G_1$ by
adding a loop at every vertex. Set $\mathcal{J}_2=\mathcal{J}_1$.
Then
\(
A(\mathcal{J}_1)=\sum_{i=1}^d Q_i
\)
is a shunt decomposition of the coupling matrix, and
\(
A(\mathcal{J}_1)\neq A(G_1).
\)
Moreover,
\[
U_G\text{ is unitary}
\iff
P_i^\top Q_i\text{ is symmetric for every }i=1,\ldots,d.
\]
\end{cor}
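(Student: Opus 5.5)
The plan has three parts: check that the coupling graph is $d$-regular and admits a shunt decomposition, check that it differs from $A(G_1)$, and reduce unitarity to Proposition~\ref{prop4}. The reduction mirrors the computation in Lemma~\ref{lem_shared_pi_coupling}.

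\textbf{Regularity and shunt decomposition.} Since $G_1$ is loopless and $d$-regular on $n=2d$ vertices, $A(G_1)$ is a $0$-$1$ matrix with zero diagonal and all row and column sums equal to $d$. Hence $J_n-A(G_1)$ is also a $0$-$1$ matrix, with row and column sums $n-d=d$. Its diagonal is all ones, which is exactly the complement of $G_1$ with a loop added at every vertex. So $\mathcal{J}_1=\mathcal{J}_2$ is $d$-regular, and Lemma~\ref{lemma 2.1} (equivalently Proposition~\ref{prop:signed-coupling-shunt}) gives permutation matrices $Q_1,\ldots,Q_d$ with $A(\mathcal{J}_1)=\sum_i Q_i$.

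\textbf{Distinctness.} $A(\mathcal{J}_1)\neq A(G_1)$ holds immediately, because the diagonals differ: all ones versus all zeros. By Lemma~\ref{lem:2d-regular}, $G$ is then a $2d$-regular partial join.

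\textbf{Reduction of unitarity.} Use the same shunt decomposition and the same coin $C$ for both copies, so that
\[
U_{G_1}=U_{G_2}=S_pC,\qquad U_{\mathcal{J}_1}=U_{\mathcal{J}_2}=S_qC,
\]
where $S_p=\sum_i P_i\otimes E_{ii}$ and $S_q=\sum_i Q_i\otimes E_{ii}$. Under these equalities, the two conditions of Proposition~\ref{prop4} both become
\[
U_{G_1}^\dagger U_{\mathcal{J}_1}=U_{\mathcal{J}_1}^\dagger U_{G_1}.
\]
That is, $U_G$ is unitary if and only if $M=C^\dagger(S_p^\dagger S_q)C$ is Hermitian. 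Since $C$ is unitary, conjugation by $C$ preserves Hermiticity in both directions, so $M$ is Hermitian if and only if $S_p^\dagger S_q$ is. Next compute
\[
S_p^\dagger S_q=\sum_{i=1}^d P_i^\top Q_i\otimes E_{ii},
\]
which is block diagonal with blocks $P_i^\top Q_i$; here the entries are real, so $\dagger=\top$. A block-diagonal real matrix is Hermitian exactly when each block is symmetric. This gives the equivalence: $U_G$ is unitary if and only if $P_i^\top Q_i$ is symmetric for every $i$.

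\textbf{Main obstacle.} The main subtlety is the ``only if'' direction. Proposition~\ref{prop4} is an if-and-only-if statement, and the step from Hermiticity of $M$ back to Hermiticity of $S_p^\dagger S_q$ is reversible because $C$ is unitary, so this direction goes through. We must also check that the hypotheses of Proposition~\ref{prop4} hold, namely that each block is unitary. Each of $S_p$ and $S_q$ is a permutation matrix, since it is a direct sum over coin indices of permutations, so $S_pC$ and $S_qC$ are unitary.

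The equivalence is stated for the fixed choice of $Q_i$, and I will make explicit that the shift pairing $P_i\leftrightarrow Q_i$ is the one used in building $U_G$.
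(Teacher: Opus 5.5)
Your proposal is correct and follows the paper's proof essentially step for step. It uses the row/column-sum count for $J_n-A(G_1)$ with $n=2d$, the diagonal comparison for $A(\mathcal{J}_1)\neq A(G_1)$, and the reduction via Proposition~\ref{prop4} (with $U_{G_1}=U_{G_2}=S_pC$, $U_{\mathcal{J}_1}=U_{\mathcal{J}_2}=S_qC$) to Hermiticity of $S_p^\dagger S_q=\sum_i P_i^\top Q_i\otimes E_{ii}$; the only step you leave implicit, which the paper states, is relabeling $V(G_2)$ so that $A(G_2)=A(G_1)$ before using a common shunt decomposition for both copies.
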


\begin{proof}
Since $J_n$ is $n$-regular and $A(G_1)$ is $d$-regular, with
$n=2d$, every row and column of
\(
A(\mathcal{J}_1)=J_n-A(G_1)
\)
has sum $d$. Thus, both coupling parts $\mathcal{J}_1$ and
$\mathcal{J}_2$ have coupling degree $d$, and
\(
\mathcal{J}=\mathcal{J}_1\cup\mathcal{J}_2
\)
is a $d$-regular directed bipartite coupling graph. By the shunt
decomposition result in Proposition~\ref{prop:signed-coupling-shunt},
there exist permutation matrices $Q_1,\ldots,Q_d$ such that
\(
A(\mathcal{J}_1)=A(\mathcal{J}_2)
=\sum_{i=1}^d Q_i.
\)
Since $G_1$ is loopless, $A(G_1)$ has zero diagonal, whereas
$A(\mathcal{J}_1)$ has diagonal entries equal to one. Hence
\(
A(\mathcal{J}_1)\neq A(G_1).
\)
Since $G_1\cong G_2$, relabel $V(G_2)$ so that
\(
A(G_2)=A(G_1).
\)
Choose the same shunt decomposition $\{P_i\}_{i=1}^d$ and the same
unitary coin $C$ for $G_1$ and $G_2$. Thus,
\(
U_{G_1}=U_{G_2}.
\)
Similarly, since $\mathcal{J}_1=\mathcal{J}_2$, choose the same shunt
decomposition $\{Q_i\}_{i=1}^d$ and the same coin for both coupling
parts, giving
\(
U_{\mathcal{J}_1}=U_{\mathcal{J}_2}.
\) 
Since $U_{G_1}=U_{G_2}$ and
$U_{\mathcal{J}_1}=U_{\mathcal{J}_2}$, the two cross-conditions in
Proposition~\ref{prop4} reduce to
\(
U_{G_1}^{\dagger}U_{\mathcal{J}_1}
=
U_{\mathcal{J}_1}^{\dagger}U_{G_1}.
\)
Thus, $U_G$ is unitary if and only if
$U_{G_1}^{\dagger}U_{\mathcal{J}_1}$ is Hermitian. Write
\[
U_{G_1}=U_{G_1}=S_pC,
\qquad
U_{\mathcal{J}_1}=U_{\mathcal{J}_2}=S_qC,
\]
where
\(
S_p=\sum_{i=1}^d P_i\otimes E_{ii},
\quad
S_q=\sum_{i=1}^d Q_i\otimes E_{ii},
\)
and $C$ is the common unitary coin. Then
\(
U_{G_1}^\dagger U_{\mathcal{J}_1}
=
C^\dagger S_p^\dagger S_q C.
\)
Since conjugation by a unitary matrix preserves Hermiticity,
\[
U_{G_1}^\dagger U_{\mathcal{J}_1}
\text{ is Hermitian}
\iff
S_p^\dagger S_q
\text{ is Hermitian}.
\]
Furthermore,
\(
S_p^\dagger S_q
=
\sum_{i=1}^d
(P_i^\top Q_i)\otimes E_{ii}.
\)
This matrix is block-diagonal with respect to the coin index.
Therefore,
\(
S_p^\dagger S_q
\text{ is Hermitian}
\iff
P_i^\top Q_i
\text{ is symmetric for every }i=1,\ldots,d.
\)
Hence, by Proposition~\ref{prop4},
\(
U_G\text{ is unitary}
\iff
P_i^\top Q_i\text{ is symmetric for every }i=1,\ldots,d.
\)

\end{proof}

\begin{ex}
Let $G_1\cong C_4$, with vertices labelled $1,2,3,4$ cyclically. Then $A(C_4)=P_1+P_2$, where $P_1$ is the cyclic shift by $-1$ and $P_2=P_1^\top$ is the shift by $+1$. The all-ones diagonal of $A(\mathcal{J}_1)$ forces
\(
Q_1=I, \quad Q_2=A(\mathcal{J}_1)-I=P_\sigma, \quad \sigma=(1\,3)(2\,4).
\)
Then
\(
P_1^\top Q_1 = P_1^\top = P_2,
\)
which is the order-$4$ cyclic shift, and satisfies $P_2^\top=P_1\neq P_2$, so $P_1^\top Q_1$ is not symmetric. The criterion of Corollary~\ref{cor_complement_loop_coupling} therefore fails, so $U_G$ is not unitary.
\end{ex}

\section{Spectral Conditions for Periodicity and PST in Directed Partial Join Graphs}
\label{sec6}

In this section, we consider the case in which the two component
graphs and the two coupling parts have the same adjacency matrix,
namely,
\begin{equation}\label{similar_adjacency}
A(G_1)=A(G_2)=A(\mathcal{J}_1)=A(\mathcal{J}_2)=A.
\end{equation}
Thus, the signed adjacency matrix of the corresponding partial join is
\begin{equation}
A_{\pm}(G)
=
\begin{pmatrix}
A&A\\
-A&A
\end{pmatrix}.
\end{equation}
This structure occurs for several graph families introduced in
Section~\ref{sec4}, including the complete graph with loops at all ends
$K_{2n}^{\circlearrowleft}$, the complete bipartite graph $K_{n,n}$
in Eq.~\eqref{signed adjacency matrices}, the tensor product graph
$(K_{n,n})^{\otimes n}$, $K_2^{\otimes n}$, and the circulant join
graph $C_n(S)$ with $S=\{1,\ldots,n/2\}$, all with signed coupling, as
described in Proposition~\ref{prop:general-even-n}.

We now derive spectral conditions for periodicity and PST for this class of $d$-regular directed partial join
graphs, as also defined in Eq.
\eqref{block_equal}.
To describe the coupling between the two copies more generally, let
\(
0\leq\phi\leq\frac{\pi}{2},
\) so that $\cos\phi$ and $\sin\phi$ are nonnegative weights.
 Assign the weight $\cos\phi$ to the internal parts $G_1$ and
$G_2$, and the weight $\sin\phi$ to the coupling parts
$\mathcal{J}_1$ and $\mathcal{J}_2$. Thus, the weighted signed adjacency matrix is given by
\begin{equation}\label{eq:weighted-signed-adjacency}
A_{\pm}^{(\phi)}(G)
=
\begin{pmatrix}
\cos\phi\,A(G_1)&\sin\phi\,A(\mathcal{J}_1)\\
-\sin\phi\,A(\mathcal{J}_2)&\cos\phi\,A(G_2)
\end{pmatrix}.
\end{equation}
Under the assumption given in~\eqref{similar_adjacency},
Eq.~\eqref{eq:weighted-signed-adjacency} becomes
\begin{equation}\label{eq:weighted-signed-adjacency-factor}
A_{\pm}^{(\phi)}(G)
=
\begin{pmatrix}
\cos\phi\,A&\sin\phi\,A\\
-\sin\phi\,A&\cos\phi\,A
\end{pmatrix}
=
M_\phi\otimes A,
\end{equation}
where
\begin{equation}
    M_\phi
=
\begin{pmatrix}
\cos\phi&\sin\phi\\
-\sin\phi&\cos\phi
\end{pmatrix}.
\end{equation}
Thus, $M_\phi$, also called the \emph{rotation matrix}, describes the
weighted signed coupling structure between the two copies, while $A$
describes the common adjacency matrix within each block. The transition operator inherits the same structure. Suppose
that the corresponding transition operators satisfy
\begin{equation}\label{equal_transition}
U_{G_1}=U_{G_2}
=U_{\mathcal{J}_1}=U_{\mathcal{J}_2}=U.
\end{equation}
Then the weighted signed transition operator is
\begin{equation}\label{eq:coupled-transition}
U_G^{(\phi)}
=
\begin{pmatrix}
\cos\phi\,U&\sin\phi\,U\\
-\sin\phi\,U&\cos\phi\,U
\end{pmatrix}
=
M_\phi\otimes U,
\end{equation}
where $U$ describes the common quantum-walk evolution
associated with the underlying graph. Since
\(
M_\phi^{\dagger}M_\phi=I_2
\)
for every $\phi\in\mathbb{R}$, $M_\phi$ is unitary. Therefore, if
$U$ is unitary, then
\[
U_G^{(\phi)}=M_\phi\otimes U
\]
is unitary.
For
\(
\phi=\frac{\pi}{4},
\)
we obtain
\(
M_{\pi/4}
=
\frac{1}{\sqrt{2}}
\begin{pmatrix}
1&1\\
-1&1
\end{pmatrix},
\)
and hence
\begin{equation}\label{eq:equal-transition}
U_G^{(\pi/4)}
=
\frac{1}{\sqrt{2}}
\begin{pmatrix}
U&U\\
-U&U
\end{pmatrix}.
\end{equation}
Thus,
\(
U_G^{(\pi/4)}=U_G,
\)
where $U_G$ is the transition operator in
Eq.~\eqref{transition_operator_join _Graphs} when all four block
transition operators are equal, as in
Eq.~\eqref{equal_transition}. We refer to $U_G^{(\phi)}$ as a \emph{coupling} of the quantum walks
determined by $U_{G_1}$ and $U_{G_2}$, following the coupling
construction discussed in \cite[Sec.~8.2]{Godsil_Zhan_2023}. This
representation separates the internal quantum evolution, described by
$U$, from the two-component coupling described by the rotation matrix
$M_\phi$. Consequently, the eigenvalues of $U_G^{(\phi)}$ can be
determined directly from the eigenvalues of $U$.

Every unitary matrix $U$ admits a spectral decomposition
\begin{equation}
U = \sum_{r} e^{i\theta_r} E_r,
\end{equation}
where each $e^{i\theta_r}$ is an eigenvalue of $U$ and $E_r$ is the Hermitian projection onto the corresponding eigenspace~\cite[Sec.~7]{GodsilZhan2019}. Given a vertex state $e_a$, we define its \emph{eigenvalue support}
\begin{equation}
\Theta_{e_a} = \{\, r : E_r e_a \neq 0 \,\}
\end{equation}
as the set of indices $r$ for which $e_a$ has a nonzero component in the $r$-th eigenspace~\cite[Sec.~4]{ChanZhan2023}.

\begin{prop}\label{prop:general-angle}
Fix any $\phi\in\mathbb{R}$, and let $U_G^{(\phi)}$ be the coupled walk determined by $U$. Let $U=\sum_r e^{i\theta_r}E_r$ be the spectral decomposition of $U$. Then the spectrum of $U_G^{(\phi)}$ is
\(
\operatorname{Spec}(U_G^{(\phi)}) = \{e^{i(\theta_r+\phi)},\ e^{i(\theta_r-\phi)}\}.
\)
Moreover, for $v\in\operatorname{im}(E_r)$, the vectors $\begin{pmatrix}v\\ \pm iv\end{pmatrix}$ are eigenvectors of $U_G^{(\phi)}$ with eigenvalues $e^{i(\theta_r\pm\phi)}$.
\end{prop}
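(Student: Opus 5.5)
The plan is to exploit the factorization $U_G^{(\phi)}=M_\phi\otimes U$ from~\eqref{eq:coupled-transition}. The spectrum of a Kronecker product of two diagonalizable matrices consists of the pairwise products of their eigenvalues, with eigenvectors given by tensor products of eigenvectors. So the proof reduces to diagonalizing the $2\times 2$ rotation matrix $M_\phi$ and then combining this with the spectral decomposition $U=\sum_r e^{i\theta_r}E_r$.

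\emph{Step 1: diagonalize $M_\phi$.} I would verify directly that
\[
M_\phi\begin{pmatrix}1\\ i\end{pmatrix}=e^{i\phi}\begin{pmatrix}1\\ i\end{pmatrix},
\qquad
M_\phi\begin{pmatrix}1\\ -i\end{pmatrix}=e^{-i\phi}\begin{pmatrix}1\\ -i\end{pmatrix}.
\]
For the first identity, the top entry of the left side is $\cos\phi+i\sin\phi$, and the bottom entry is $-\sin\phi+i\cos\phi=i(\cos\phi+i\sin\phi)$. The second identity is the complex conjugate of the first. Because $(1,i)^\top$ and $(1,-i)^\top$ are linearly independent, $M_\phi$ is diagonalizable for every real $\phi$, with eigenvalues $e^{\pm i\phi}$.

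\emph{Step 2: tensor with eigenvectors of $U$.} For $v\in\operatorname{im}(E_r)$ we have $Uv=e^{i\theta_r}v$. Using $(M_\phi\otimes U)(w\otimes v)=(M_\phi w)\otimes(Uv)$ with $w=(1,\pm i)^\top$, we get
\[
\begin{pmatrix}1\\ \pm i\end{pmatrix}\otimes v=\begin{pmatrix}v\\ \pm iv\end{pmatrix},
\]
and this vector is an eigenvector with eigenvalue $e^{\pm i\phi}e^{i\theta_r}=e^{i(\theta_r\pm\phi)}$. This proves the eigenvector claim and shows that every $e^{i(\theta_r\pm\phi)}$ belongs to $\operatorname{Spec}(U_G^{(\phi)})$.

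\emph{Step 3: no other eigenvalues.} This is the only point needing care. Since $U$ is unitary, $\sum_r E_r=I$, so choosing a basis of each $\operatorname{im}(E_r)$ gives a basis $\{v_k\}$ of $\mathbb{C}^{nd}$ consisting of eigenvectors of $U$. The $2nd$ vectors $(1,\pm i)^\top\otimes v_k$ are linearly independent, because a tensor product of two bases is a basis. Hence they form an eigenbasis of $U_G^{(\phi)}$. A matrix with an eigenbasis has no eigenvalues other than those attached to the basis vectors, so the spectrum is exactly $\{e^{i(\theta_r+\phi)},e^{i(\theta_r-\phi)}\}$.

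Alternatively, one can note that $U_G^{(\phi)}=\sum_r\bigl(e^{i(\theta_r+\phi)}F_+ + e^{i(\theta_r-\phi)}F_-\bigr)\otimes E_r$, where $F_\pm$ are the orthogonal projections onto $(1,\pm i)^\top$. This gives an explicit decomposition into mutually orthogonal projections summing to the identity, which also settles completeness.
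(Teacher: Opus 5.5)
Your proposal is correct and follows the paper's route: you factor $U_G^{(\phi)}=M_\phi\otimes U$, diagonalize $M_\phi$ with eigenvectors $(1,\pm i)^\top$, and check that $(v,\pm iv)^\top$ is an eigenvector with eigenvalue $e^{i(\theta_r\pm\phi)}$. Your Step 3, the tensor-product eigenbasis showing there are no further eigenvalues, is a worthwhile addition, because the paper's proof only exhibits these eigenvectors and leaves the reverse inclusion of the spectrum implicit.
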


\begin{proof}
We have by~\eqref{eq:coupled-transition}, $U_G^{(\phi)} = M_\phi\otimes U$, where $M_\phi=\begin{pmatrix}\cos\phi&\sin\phi\\
-\sin\phi&\cos\phi\end{pmatrix}$ is a real rotation matrix. Since $M_\phi$ is orthogonal
($M_\phi^\top M_\phi=I$) and $U$ is unitary, $U_G^{(\phi)}$ is unitary. The eigenvalues of
$M_\phi$ are $e^{\pm i\phi}$ (standard fact for a rotation of $2\times2$ at angle $\phi$), with
eigenvectors $(1,\pm i)$, indeed $M_\phi(1,i)^T = (\cos\phi+i\sin\phi,\,-\sin\phi+i\cos\phi)^T
= e^{i\phi}(1,i)^T$, and similarly for $(1,-i)$ with eigenvalue $e^{-i\phi}$. Hence, for
$v\in\operatorname{im}(E_r)$,
\[
U_G^{(\phi)}\begin{pmatrix}v\\ \pm iv\end{pmatrix}
= (M_\phi\otimes U)\begin{pmatrix}v\\ \pm iv\end{pmatrix}
= \begin{pmatrix}\cos\phi\,Uv+\sin\phi\,U(\pm iv)\\ -\sin\phi\,Uv+\cos\phi\,U(\pm iv)\end{pmatrix}
= e^{i\theta_r}\begin{pmatrix}\cos\phi\pm i\sin\phi\\ -\sin\phi\pm i\cos\phi\end{pmatrix}v,
\]
using $Uv=e^{i\theta_r}v$. Since $\cos\phi\pm i\sin\phi=e^{\pm i\phi}$ and $-\sin\phi\pm
i\cos\phi=\pm i(\cos\phi\pm i\sin\phi)=\pm ie^{\pm i\phi}$, this equals
$e^{i(\theta_r\pm\phi)}\begin{pmatrix}v\\\pm iv\end{pmatrix}$, as claimed.
\end{proof}
\begin{cor}
\label{cor:Knn-general-angle}
Let $U=\sum_r e^{i\theta_r}E_r$ be the unitary transition matrix associated with the signed coupling of
$K_{n,n}$ . At $\phi=\pi/2$, the transition operator in
Proposition~\ref{prop:general-angle} is
\[
U_{K_{n,n}}
=
\begin{pmatrix}
0&U\\
-U&0
\end{pmatrix},
\]  since $\cos\phi=0$ and $\sin\phi=1$. Then $U_{K_{n,n}}$ is unitary, with spectrum
\(
\operatorname{Spec}\big(U_{K_{n,n}}\big) = \{e^{i(\theta_r+\pi/2)},\ e^{i(\theta_r-\pi/2)}\},
\)
and for $v\in\operatorname{im}(E_r)$, the vectors $\begin{pmatrix}v\\ \pm iv\end{pmatrix}$ are
eigenvectors of $U_{K_{n,n}}$ with eigenvalues $e^{i(\theta_r\pm\pi/2)}$, respectively.
\end{cor}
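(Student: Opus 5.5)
This is a direct specialization of Proposition~\ref{prop:general-angle} to the angle $\phi=\pi/2$, so the plan is to substitute and check that nothing degenerates. First I will record that the signed coupling of $K_{n,n}$ lies in the second block case of Section~\ref{sec4}. There $A(G_1)=A(G_2)=0$ and $A(\mathcal{J}_1)=A(\mathcal{J}_2)=J_n$, so the signed adjacency matrix is the one in~\eqref{signed adjacency matrices}.

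In the weighted framework~\eqref{eq:weighted-signed-adjacency}, taking $\phi=\pi/2$ gives $\cos\phi=0$ and $\sin\phi=1$. The internal weight then kills the (empty) diagonal blocks, and the coupling blocks carry the common transition matrix $U$. Hence
\[
U_{K_{n,n}}=M_{\pi/2}\otimes U=\begin{pmatrix}0&1\\-1&0\end{pmatrix}\otimes U=\begin{pmatrix}0&U\\-U&0\end{pmatrix}.
\]

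For unitarity I will use the fact that $M_{\pi/2}$ is real orthogonal and that the Kronecker product of unitaries is unitary. As a direct check,
\[
U_{K_{n,n}}^\dagger U_{K_{n,n}}=\begin{pmatrix}U^\dagger U&0\\0&U^\dagger U\end{pmatrix}=I.
\]
This can also be read off from Proposition~\ref{prop4} with $U_{G_1}=U_{G_2}=0$ blocks absent, but the direct computation is cleaner.

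For the spectrum and eigenvectors, I invoke Proposition~\ref{prop:general-angle} with $\phi=\pi/2$. For $v\in\operatorname{im}(E_r)$ it says $\begin{pmatrix}v\\ \pm iv\end{pmatrix}$ is an eigenvector with eigenvalue $e^{i(\theta_r\pm\pi/2)}$. As a sanity check, the top entry of $U_{K_{n,n}}(v,\pm iv)^\top$ is $\pm ie^{i\theta_r}v=e^{i(\theta_r\pm\pi/2)}v$. The bottom entry is $-e^{i\theta_r}v=(\pm i)\,e^{i(\theta_r\pm\pi/2)}v$, since $(\pm i)(\pm i)=-1$.

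Finally, completeness of the spectrum follows by dimension count. The vectors $(v,\pm iv)$, with $v$ ranging over orthonormal bases of the $\operatorname{im}(E_r)$, span $\mathbb{C}^{2nd}$, because $\sum_r E_r=I$ and the two sign choices are orthogonal. So no other eigenvalues occur. I expect no step here to be a real obstacle; the only point needing care is this completeness argument, and it is inherited verbatim from the general proposition.
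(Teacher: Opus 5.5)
Your proposal is correct and is essentially the paper's proof, which simply substitutes $\phi=\pi/2$ into Proposition~\ref{prop:general-angle}; your direct unitarity computation and your orthogonality and dimension-count argument for completeness of the spectrum only make explicit what the paper leaves implicit. Your aside about reading unitarity off Proposition~\ref{prop4} would not work as stated, since that result assumes all four blocks are unitary and carries the $1/\sqrt{2}$ normalization, but nothing in your argument depends on it.
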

\begin{proof}
Immediate specialization of Proposition~\ref{prop:general-angle} at $\phi=\pi/2$: the diagonal
blocks $\cos\phi\,U$ vanish and the off-diagonal blocks $\mp\sin\phi\,U$ become $\mp U$,
matching the block structure $U_{K_{n,n}}=\begin{pmatrix}0&U\\-U&0\end{pmatrix}$.
\end{proof}

\begin{lemma}\label{lem:periodicity-coupled-walk}
Let 
\(
U=\sum_{r}e^{i\theta_r}E_r,
\) be the spectral decomposition of unitary matrix \(U\). Let \(a_i\in V(G_1)\) with $e_{a_i}=
\begin{pmatrix}
e_{a_i}^{(1)}\\
0
\end{pmatrix}$ be the corresponding basis state. Define the eigenvalue support of
$e_{a_i}^{(1)}$ by
\(
\Theta_{e_{a_i}^{(1)}}.
\)
Consider the coupled transition matrix
\(
U_G^{(\phi)}\).
Then $G$ is periodic at \(a_i\) with period \(\tau\in \mathbb{Z}_{>0}\)
if and only if the following
conditions hold:

\begin{enumerate}
\item[(i)] 
\(
\tau\phi\in\pi\mathbb{Z};
\)
\item[(ii)]
\(
\tau(\theta_r-\theta_s)\in2\pi\mathbb{Z}
\quad
\text{for all }r,s\in\Theta_{e_a}.
\)
\end{enumerate}
In particular, if
\(
\phi=\frac{p\pi}{q},
\quad \gcd(p,q)=1,
\)
then condition~(i) is equivalent to
\(
q\mid\tau.
\)
Moreover, suppose there exists a positive real number $g$ such that
\(\
\theta_r-\theta_0=k_r g,
\quad
k_r\in\mathbb{Z},
\quad
r\in\Theta_{e_{a_i}^{(1)}},
\)
with
\(
\gcd\{k_r:r\in\Theta_{e_{a_i}^{(1)}}\}=1.
\)
If
\(
g=l\psi,
\quad
\psi=\frac{u\pi}{v},
\)
for positive integers $l,u,v$, then the minimum positive integer
satisfying condition~(ii) is
\(
\tau_{\min}^{(ii)}
=
\frac{2v}{\gcd(lu,2v)}.
\)
Consequently, the minimum period of $G$ at $a_i$ is
\[
{
\tau_{\min}
=
\operatorname{lcm}
\left(
q,\,
\frac{2v}{\gcd(lu,2v)}
\right).
}
\]
\end{lemma}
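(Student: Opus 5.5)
The plan is to avoid eigenvectors at first and work directly with powers of the tensor factorisation $U_G^{(\phi)}=M_\phi\otimes U$ from \eqref{eq:coupled-transition}. Since $(M_\phi\otimes U)^\tau=M_\phi^\tau\otimes U^\tau$ and $M_\phi^\tau=M_{\tau\phi}$ (composition of plane rotations), applying this to $e_{a_i}=\begin{pmatrix}x\\0\end{pmatrix}$ with $x=e^{(1)}_{a_i}$ gives
\[
\bigl(U_G^{(\phi)}\bigr)^\tau e_{a_i}=\begin{pmatrix}\cos(\tau\phi)\,U^\tau x\\ -\sin(\tau\phi)\,U^\tau x\end{pmatrix}.
\]
Equivalently, one can write $e_{a_i}=\tfrac12\bigl[\begin{pmatrix}x\\ix\end{pmatrix}+\begin{pmatrix}x\\-ix\end{pmatrix}\bigr]$ and apply Proposition~\ref{prop:general-angle} term by term; this yields the same formula.

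For the characterisation, periodicity means $(U_G^{(\phi)})^\tau e_{a_i}=\gamma e_{a_i}$ with $|\gamma|=1$. Comparing the second block gives $\sin(\tau\phi)U^\tau x=0$. Because $U^\tau x\neq0$, this forces $\sin(\tau\phi)=0$, i.e.\ $\tau\phi\in\pi\mathbb Z$, which is condition (i). Then $\cos(\tau\phi)=\pm1$, and the first block reduces to $U^\tau x=\pm\gamma x$. Writing $U^\tau x=\sum_{r\in\Theta}e^{i\tau\theta_r}E_rx$ with the $E_rx\neq0$ mutually orthogonal, this holds exactly when all $e^{i\tau\theta_r}$, $r\in\Theta_{e^{(1)}_{a_i}}$, coincide, which is condition (ii). The converse is the same computation read backwards, with $\gamma=\pm e^{i\tau\theta_0}$.

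The arithmetic consequences follow from these two conditions.
\begin{itemize}
\item If $\phi=p\pi/q$ with $\gcd(p,q)=1$, then $\tau p/q\in\mathbb Z$ if and only if $q\mid\tau$.
\item For (ii), it suffices to compare each $\theta_r$ with a fixed $\theta_0$ in the support, since $\theta_r-\theta_s=(\theta_r-\theta_0)-(\theta_s-\theta_0)$. The condition then reads $\tau k_r g\in2\pi\mathbb Z$ for all $r$.
\item Since $\gcd\{k_r\}=1$, Bézout gives an integer combination $\sum c_rk_r=1$, so this is equivalent to $\tau g\in2\pi\mathbb Z$.
\item With $g=lu\pi/v$, this says $\tau lu/(2v)\in\mathbb Z$, i.e.\ $\frac{2v}{\gcd(lu,2v)}\mid\tau$. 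That yields $\tau^{(ii)}_{\min}$.
\item The set of $\tau$ satisfying both conditions is the set of common multiples, so the minimum period is the stated $\operatorname{lcm}$.
\end{itemize}

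The main obstacle I anticipate is bookkeeping rather than depth. First, the reference index $0$ must be taken inside the support $\Theta_{e^{(1)}_{a_i}}$, and the statement's $\Theta_{e_a}$ must be read as the support of $e^{(1)}_{a_i}$ for $U$. Second, the sign $\cos(\tau\phi)=\pm1$ has to be absorbed into the global phase $\gamma$, so that (i) only requires $\pi\mathbb Z$ rather than $2\pi\mathbb Z$. Third, I will handle the degenerate case $p=0$ (which gives $q=1$) separately.
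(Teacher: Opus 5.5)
Your proposal is correct and follows essentially the same route as the paper. You expand $(M_\phi\otimes U)^\tau$ blockwise, read off $\sin(\tau\phi)=0$ from the lower block, get equality of the phases $e^{i\tau\theta_r}$ on the support from the upper block, and then reduce (ii) to $2v\mid\tau lu$ using $\gcd\{k_r\}=1$. Your B\'ezout step is a cleaner phrasing of the paper's observation that the differences $k_r-k_s$ have greatest common divisor $1$. Likewise, your requirement that the reference index $0$ lie in the support is exactly what the paper uses implicitly when it sets $k_0=0$.
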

\begin{proof}
We have by~\eqref{eq:coupled-transition}
\(
U_G^{(\phi)}
=
M_\phi\otimes U,
\qquad
M_\phi=
\begin{pmatrix}
\cos\phi&\sin\phi\\
-\sin\phi&\cos\phi
\end{pmatrix},
\)
we get
\begin{equation}
\left(U_G^{(\phi)}\right)^\tau
=
M_\phi^\tau\otimes U^\tau
=
\begin{pmatrix}
\cos(\tau\phi)\,U^\tau&
\sin(\tau\phi)\,U^\tau\\
-\sin(\tau\phi)\,U^\tau&
\cos(\tau\phi)\,U^\tau
\end{pmatrix}.
\end{equation}
Therefore,
\begin{equation}\label{eq:power-coupled}
\left(U_G^{(\phi)}\right)^\tau
\begin{pmatrix}
e_{a_i}^{(1)}\\
0
\end{pmatrix}
=
\begin{pmatrix}
\cos(\tau\phi)\,U^\tau e_{a_i}^{(1)}\\
-\sin(\tau\phi)\,U^\tau e_{a_i}^{(1)}
\end{pmatrix}.
\end{equation}

\noindent\textbf{($\Rightarrow$).}
Suppose that $G$ is periodic at $a_i$ at time
$\tau\in\mathbb{Z}_{>0}$. Then there exists a unimodular constant
$\gamma$, $|\gamma|=1$, such that
\[
\left(U_G^{(\phi)}\right)^\tau
\begin{pmatrix}
e_{a_i}^{(1)}\\
0
\end{pmatrix}
=
\gamma
\begin{pmatrix}
e_{a_i}^{(1)}\\
0
\end{pmatrix}.
\]
Comparing this equation with \eqref{eq:power-coupled}, the lower block
gives
\(
\sin(\tau\phi)\,U^\tau e_{a_i}^{(1)}=0.
\)
Since $U$ is unitary and $e_{a_i}^{(1)}\neq0$,
\(
U^\tau e_{a_i}^{(1)}\neq0.
\)
Hence
\(
\sin(\tau\phi)=0,
\)
and therefore
\(
\tau\phi\in\pi\mathbb{Z}.
\)
This proves condition~(i). Now the upper block of \eqref{eq:power-coupled} gives
\(
\cos(\tau\phi)\,U^\tau e_{a_i}^{(1)}
=
\gamma e_{a_i}^{(1)}.
\)
Since $\sin(\tau\phi)=0$, we have
\(
\cos(\tau\phi)\in\{1,-1\}.
\)
Thus there exists a unimodular constant $\gamma'$ such that
\(
U^\tau e_{a_i}^{(1)}=\gamma' e_{a_i}^{(1)}.
\)
Using the spectral decomposition
\(
U=\sum_r e^{i\theta_r}E_r,
\)
we obtain
\begin{equation}
U^\tau e_{a_i}^{(1)}
=
\sum_{r\in\Theta_{e_{a_i}^{(1)}}}
e^{i\tau\theta_r}E_r e_{a_i}^{(1)}.
\end{equation}
Since
\(
e_{a_i}^{(1)}
=
\sum_{r\in\Theta_{e_{a_i}^{(1)}}}E_r e_{a_i}^{(1)},
\)
the equality
\(
U^\tau e_{a_i}^{(1)}=\gamma' e_{a_i}^{(1)}
\)
holds if and only if
\(
e^{i\tau\theta_r}=\gamma'
\quad
\text{for all }r\in\Theta_{e_{a_i}^{(1)}}.
\)
Equivalently,
\(
e^{i\tau(\theta_r-\theta_s)}=1
\quad
\text{for all }r,s\in\Theta_{e_{a_i}^{(1)}},
\)
which is equivalent to
\(
\tau(\theta_r-\theta_s)\in2\pi\mathbb{Z}
\quad
\text{for all }r,s\in\Theta_{e_{a_i}^{(1)}}.
\)
Thus condition~(ii) holds.
If
\(
\phi=\frac{p\pi}{q},
\quad
\gcd(p,q)=1,
\)
then condition~(i) becomes
\(
\frac{\tau p}{q}\in\mathbb{Z}.
\)
Since $\gcd(p,q)=1$, this is equivalent to
\(
q\mid\tau.
\)

\noindent\textbf{($\Leftarrow$).}
Conversely, suppose that $\tau\in\mathbb{Z}_{>0}$ satisfies
conditions~(i) and~(ii). From condition~(i),
\(
\sin(\tau\phi)=0.
\)
Hence the lower block in \eqref{eq:power-coupled} vanishes.
By condition~(ii), for all $r,s\in\Theta_{e_{a_i}^{(1)}}$,
\(
e^{i\tau(\theta_r-\theta_s)}=1.
\)
Fix $r_0\in\Theta_{e_{a_i}^{(1)}}$. Then
\(
e^{i\tau\theta_r}
=
e^{i\tau\theta_{r_0}}
e^{i\tau(\theta_r-\theta_{r_0})}
=
e^{i\tau\theta_{r_0}}
=\lambda
\)
for every $r\in\Theta_{{e_{a_i}^{(1)}}}$. Since $|\lambda|=1$,
\begin{equation}
U^\tau e_{a_i}^{(1)}
=
\sum_{r\in\Theta_{e_{a_i}^{(1)}}}
e^{i\tau\theta_r}E_r e_{a_i}^{(1)}
=
\lambda
\sum_{r\in\Theta_{e_{a_i}^{(1)}}}E_r e_{a_i}^{(1)}
=
\lambda e_{a_i}^{(1)}.
\end{equation}
Moreover, since $\tau\phi\in\pi\mathbb{Z}$,
\(
\cos(\tau\phi)\in\{1,-1\}.
\)
Therefore, setting
\(
\gamma
=
\cos(\tau\phi)\lambda,
\)
we have $|\gamma|=1$, and
\begin{equation}
\left(U_G^{(\phi)}\right)^\tau
\begin{pmatrix}
e_{a_i}^{(1)}\\
0
\end{pmatrix}
=
\begin{pmatrix}
\gamma e_{a_i}^{(1)}\\
0
\end{pmatrix}
=
\gamma
\begin{pmatrix}
e_{a_i}^{(1)}\\
0
\end{pmatrix}.
\end{equation}
Thus, $\tau$ is a period of $G$ at $a_i$.
Now we will show that the minimum value of $\tau$ satisfying
condition~(ii). Suppose that
\(
\theta_r-\theta_0=k_rg,
\quad
k_r\in\mathbb{Z},
\)
where
\(
\gcd\{k_r:r\in\Theta_{e_{a_i}^{(1)}}\}=1.
\)
Then
\[
\theta_r-\theta_s
=
(k_r-k_s)g
=
k_{rs}g,
\qquad
k_{rs}\in\mathbb{Z}.
\]
If
\(
g=l\psi,
\quad
\psi=\frac{u\pi}{v},
\)
then
\(
g=\frac{lu\pi}{v},
\)
and condition~(ii) becomes
\(
\tau k_{rs}\frac{lu\pi}{v}
\in2\pi\mathbb{Z}.
\)
Equivalently,
\(
2v\mid\tau k_{rs}lu
\)
for all $r,s\in\Theta_{e_{a_i}^{(1)}}$.
Since
\(
\gcd\{k_r:r\in\Theta_{e_{a_i}^{(1)}}\}=1,
\)
we also have
\(
\gcd\{k_{rs}:r,s\in\Theta_{e_{a_i}^{(1)}}\}=1.
\)
Indeed, taking $r=0$ gives
\(
k_{0s}=k_0-k_s=-k_s,
\)
where $k_0=0$. Thus, the set of differences $\{k_{rs}\}$ contains
$\{-k_s\}$, and hence has greatest common divisor~$1$.
Consequently,
\(
2v\mid\tau k_{rs}lu
\quad\text{for all }r,s
\)
is equivalent to
\(
2v\mid\tau lu.
\)
Therefore, the minimum positive integer satisfying condition~(ii) is
\(
\tau_{\min}^{(ii)}
=
\frac{2v}{\gcd(lu,2v)}.
\)
Finally, condition~(i) requires
\(
q\mid\tau,
\)
while condition~(ii) requires
\(
\tau_{\min}^{(ii)}\mid\tau.
\)
Hence the minimum positive integer satisfying both conditions is
\begin{equation}
{
\tau_{\min}
=
\operatorname{lcm}
\left(
q,\,
\frac{2v}{\gcd(lu,2v)}
\right).
}
\end{equation}
By the equivalence proved above, this value is indeed a period, and no
smaller positive integer can be a period. This completes the proof.
\end{proof}
\begin{remark}
Assume that $\Theta_{e_a^{(1)}}$ contains at least two distinct eigenvalue
angles. Then periodicity of $G$ at $a_i$ requires both
$\phi/\pi$ and $g/\pi$ to be rational. If either $\phi/\pi$ or
$g/\pi$ is irrational, then $G$ is not periodic at $a_i$.
\end{remark}

\begin{cor}
\label{cor:periodicity-phi-pi4}
Under the assumptions of Lemma~\ref{lem:periodicity-coupled-walk},
suppose, in addition, that
\(
\psi=\phi=\frac{p\pi}{q}
\quad\text{and}\quad
g=l\psi.
\)
Then
\[
\tau_{\min}
=
\operatorname{lcm}\left(
q,\,
\frac{2q}{\gcd(l p,2q)}
\right).
\]
In particular, for
\(
\phi=\psi=\frac{\pi}{4},
\quad p=1,\quad q=4,
\)
we have
\(
g=l\psi=\frac{l\pi}{4}.
\) $G$ is periodic at \(a_i\) with period \(\tau\in \mathbb{Z}_{>0}\) if and only if
\(
4\mid\tau
\)
and
\(
\tau(\theta_r-\theta_s)\in2\pi\mathbb{Z}
\quad
\text{for all }r,s\in\Theta_{e_a^{(1)}}.
\)
Moreover,
\[
{
\tau_{\min}
=
\operatorname{lcm}\left(
4,\,
\frac{8}{\gcd(l,8)}
\right)
\in\{4,8\}.
}
\]
\end{cor}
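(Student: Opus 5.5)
This corollary is a direct specialization of Lemma~\ref{lem:periodicity-coupled-walk}, so the plan is to substitute the extra hypotheses into the formula for $\tau_{\min}$ there and simplify.

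\emph{Step 1: general case.} By assumption $\psi=\phi=p\pi/q$ with $\gcd(p,q)=1$, so in the notation of the lemma $u=p$ and $v=q$. Since $g=l\psi$, the hypotheses of the lemma hold with these values. The lemma's formula $\tau_{\min}=\operatorname{lcm}\bigl(q,\,2v/\gcd(lu,2v)\bigr)$ then becomes $\operatorname{lcm}\bigl(q,\,2q/\gcd(lp,2q)\bigr)$, which is the first claim.

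\emph{Step 2: the case $\phi=\pi/4$.} Take $p=1$ and $q=4$, so $g=l\pi/4$. The lemma gives the periodicity criterion at once. Condition (i) is $\tau\pi/4\in\pi\mathbb{Z}$, which is equivalent to $4\mid\tau$ because $\gcd(1,4)=1$. Condition (ii) is the stated phase-difference condition on $\Theta_{e_a^{(1)}}$. Both directions of the ``if and only if'' are inherited directly from the lemma, so no new argument is needed.

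\emph{Step 3: evaluating $\tau_{\min}$.} Substituting $p=1$, $q=4$ gives $\tau_{\min}=\operatorname{lcm}(4,8/\gcd(l,8))$. I will check the value by the parity of $l$:
\begin{itemize}
\item[] if $l$ is odd, then $\gcd(l,8)=1$ and $\operatorname{lcm}(4,8)=8$;
\item[] if $l\equiv 2\pmod 4$, then $8/\gcd(l,8)=4$ and the lcm is $4$;
\item[] if $4\mid l$, then $8/\gcd(l,8)\in\{2,1\}$ and the lcm is $4$.
\end{itemize}
Hence $\tau_{\min}\in\{4,8\}$.

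The only point needing care is to confirm that the lemma's standing assumptions carry over unchanged. In particular, the gcd normalization of the integers $k_r$ must still hold, and $l$ must be a positive integer. Both are part of the corollary's hypotheses, so the argument is routine throughout.
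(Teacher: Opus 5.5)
Your proposal is correct and takes the same approach as the paper: the paper gives no separate proof, because the corollary is the direct specialization of Lemma~\ref{lem:periodicity-coupled-walk} with $u=p$ and $v=q$, which is exactly your substitution. Your three-way case split on $\gcd(l,8)$ (by the power of $2$ dividing $l$, not just its parity) correctly gives $\tau_{\min}=\operatorname{lcm}\bigl(4,\,8/\gcd(l,8)\bigr)\in\{4,8\}$.
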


\begin{ex}
\label{ex:periodicity-Knn}
Consider the coupled transition operator at
\(
\phi=\frac{\pi}{2},
\)
given by
\begin{equation}\label{transition_K_n_n}
U_G^{(\phi)}
=
\begin{pmatrix}
0&U\\
-U&0
\end{pmatrix},
\qquad
U=\sum_r e^{i\theta_r}E_r.
\end{equation}
This case applies to the graph families discussed in Section~\ref{sec4},
including the complete bipartite graph $K_{n,n}$ in Eq.~\eqref{signed adjacency matrices}, the tensor product
graph $(K_{n,n})^{\otimes n}$, and $K_2^{\otimes n}$, all with signed coupling. For $a_i\in V(G_1)$, let
\(
e_{a_i}=
\begin{pmatrix}
e_{a_i}^{(1)}\\
0
\end{pmatrix},
\quad
g=\gcd_{2\pi}\{\theta_0-\theta_r:r\in\Theta_{e_{a_i}^{(1)}}\},
\)
where
\(
\theta_0=\max_{r\in\Theta_{e_{a_i}^{(1)}}}\theta_r.
\)
Writing
\(
\frac{g}{\pi}=\frac{u}{v}
\)
in lowest terms, periodicity of $G$ at $a_i$ requires
\(
2\mid\tau
\)
and
\(
\tau(\theta_r-\theta_s)\in2\pi\mathbb{Z}
\quad
\text{for all }r,s\in\Theta_{e_{a_i}^{(1)}}.
\)
Hence, the minimum period is
\begin{equation}
\tau_{\min}
=
\operatorname{lcm}
\left(
2,\frac{2v}{\gcd(u,2v)}
\right).
\end{equation}
In particular, if
\(
g=\frac{l\pi}{4}
\)
for a unique positive integer $l$, then
\(
\frac{g}{\pi}=\frac{l}{4},
\)
and therefore
\(
\tau_{\min}
=
\operatorname{lcm}\left(
2,\frac{8}{\gcd(l,8)}
\right)
\in\{2,4,8\}.
\)
More explicitly,
\begin{center}
{\small
\begin{tabular}{cccc}
\hline
\(\gcd(l,8)\) & \(8/\gcd(l,8)\) &
\(\operatorname{lcm}(2,8/\gcd(l,8))\) & \(\tau_{\min}\)\\
\hline
1 & 8 & 8 & 8\\
2 & 4 & 4 & 4\\
4 & 2 & 2 & 2\\
8 & 1 & 2 & 2\\
\hline
\end{tabular}}
\end{center}
Thus, in this case, the minimum period can only be
\(
{\tau_{\min}\in\{2,4,8\}},
\)
with its value determined by the eigenvalue support of ${e_{a_i}^{(1)}}$ under
$U$.
\end{ex}
Using the preceding condition of periodicity, we now derive
necessary and sufficient conditions for PST in the following theorems and illustrate them with explicit
example.
\begin{Theorem}
\label{thm:pst-coupling}
Let
\(
G=G_1\overset{(d,d)}{\vec{\vee}}G_2
\)
be a $d$-regular directed partial join with signed coupling and couple
transition operator $U_G^{(\phi)}$,  $U_G^{(\phi)}$, where
\(
\phi=\frac{p\pi}{q},
\quad
\gcd(p,q)=1.
\)
Fix $a_i,a_j\in V(G_1)$, and let $b_i,b_j\in V(G_2)$ denote the
vertices corresponding to $a_i,a_j$, respectively. Define
\[
e_{a_i}=
\begin{pmatrix}
e_{a_i}^{(1)}\\
0
\end{pmatrix},
\qquad
e_{b_j}=
\begin{pmatrix}
0\\
e_{a_j}^{(1)}
\end{pmatrix}.
\]
Then coupling PST from $a_i$ to $b_j$ under
$U_G^{(\phi)}$ occurs at a positive integer time $k$ if and only if
\[
k\equiv\frac{q}{2}\pmod q
\]
and
\(
U^k e_{a_i}^{(1)}
=
\mu e_{a_j}^{(1)}
\)
for some $\mu\in\mathbb{C}$ with $|\mu|=1$.
In particular, coupling PST is possible only when $q$ is even.
If $G$ is periodic at $a_i$ with minimum period
\(
\tau_{\min}=sq,
\quad
s\in\mathbb{Z}_{>0},
\)
then the possible coupling-PST times within one period are
\[
\left\{
\frac{q}{2},
\frac{3q}{2},
\ldots,
\frac{(2s-1)q}{2}
\right\}.
\]
Consequently, if coupling PST occurs, its minimum time is the
smallest element of this set for which
\[
U^k e_{a_i}^{(1)}
=
\mu e_{a_j}^{(1)},
\qquad
|\mu|=1.
\]
\end{Theorem}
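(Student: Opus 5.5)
The argument reuses the block-power formula already derived in the proof of Lemma~\ref{lem:periodicity-coupled-walk}. Since $U_G^{(\phi)}=M_\phi\otimes U$, we have $(U_G^{(\phi)})^k=M_\phi^k\otimes U^k$, and $M_\phi^k=M_{k\phi}$. Applying this to $e_{a_i}$ gives, exactly as in \eqref{eq:power-coupled},
\[
\left(U_G^{(\phi)}\right)^k e_{a_i}=\begin{pmatrix}\cos(k\phi)\,U^k e_{a_i}^{(1)}\\ -\sin(k\phi)\,U^k e_{a_i}^{(1)}\end{pmatrix}.
\]
Coupling PST to $b_j$ at time $k$ means this vector equals $\gamma\begin{pmatrix}0\\ e_{a_j}^{(1)}\end{pmatrix}$ for some unimodular $\gamma$. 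So the whole proof reduces to comparing the two blocks.

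\emph{Necessity.} The upper block gives $\cos(k\phi)\,U^k e_{a_i}^{(1)}=0$. Because $U$ is unitary, $U^k e_{a_i}^{(1)}\neq 0$, so $\cos(k\phi)=0$. With $\phi=p\pi/q$ this means $kp/q\in\tfrac12+\mathbb{Z}$, that is, $2kp\equiv q\pmod{2q}$.

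This is the one genuinely number-theoretic step. Since the right side $q$ must be even modulo the even modulus $2q$, $q$ must be even. Write $q=2q'$; then $\gcd(p,q)=1$ forces $p$ odd. The congruence becomes $kp\equiv q'\pmod{q}$. Reducing modulo $2$ and using that $p$ is odd and invertible modulo $q$, I expect this to reduce to $k\equiv q/2\pmod q$. The one thing to check is whether multiplying by $p^{-1}$ genuinely fixes $q/2$ modulo $q$. It does: $p\cdot\tfrac q2=\tfrac q2+\tfrac{(p-1)}{2}q\equiv \tfrac q2\pmod q$ because $p$ is odd.

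Once $\cos(k\phi)=0$ we have $\sin(k\phi)=\pm1$. The lower block then gives $U^ke_{a_i}^{(1)}=\mp\gamma e_{a_j}^{(1)}$, so we may take $\mu=\mp\gamma$.

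\emph{Sufficiency.} Conversely, suppose $k\equiv q/2\pmod q$. Then $k\phi=(q/2+mq)p\pi/q=p\pi/2+mp\pi$, which is an odd multiple of $\pi/2$ since $p$ is odd. Hence $\cos(k\phi)=0$ and $\sin(k\phi)=\pm1$. Combined with $U^ke_{a_i}^{(1)}=\mu e_{a_j}^{(1)}$, the block formula gives the target vector with $\gamma=\mp\mu$. The statement that coupling PST requires $q$ to be even falls out of the necessity argument.

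\emph{Times within one period.} Suppose $G$ is periodic at $a_i$ with $\tau_{\min}=sq$; by Lemma~\ref{lem:periodicity-coupled-walk}(i), $q\mid\tau_{\min}$. The times $k\in(0,sq]$ with $k\equiv q/2\pmod q$ are exactly $(2t-1)q/2$ for $t=1,\dots,s$.

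PST times are determined modulo the period. If $k$ is a PST time, then $(U_G^{(\phi)})^{k+\tau_{\min}}e_{a_i}=\gamma'(U_G^{(\phi)})^k e_{a_i}$, so $k+\tau_{\min}$ is also a PST time. Conversely, if $k>\tau_{\min}$ is a PST time, then so is $k-\tau_{\min}$, because the state at time $k$ determines the state at time $k-\tau_{\min}$ up to phase. Hence the minimum PST time lies in the listed set. It is the least element of that set satisfying $U^ke_{a_i}^{(1)}=\mu e_{a_j}^{(1)}$.
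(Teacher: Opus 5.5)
Your proposal is correct and follows essentially the same route as the paper: the block-power formula $\bigl(U_G^{(\phi)}\bigr)^k=M_\phi^k\otimes U^k$, the vanishing upper block forcing $\cos(k\phi)=0$, the converse via $\sin(k\phi)=\pm1$, and reduction of PST times modulo the period at $a_i$. The only difference is in the arithmetic step: you work with $2kp\equiv q\pmod{2q}$ and invert $p$ modulo $q$, using $p\cdot\tfrac q2\equiv\tfrac q2$, whereas the paper uses $q\mid 2kp\iff q\mid 2k$, writes $k=\tfrac q2\,t$, and notes that $tp$ is odd iff $t$ is odd.
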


\begin{proof}
For every $k\geq0$, by~\eqref{eq:coupled-transition}, we have
\begin{equation}
\bigl(U_G^{(\phi)}\bigr)^k e_{a_i}
=
\begin{pmatrix}
\cos(k\phi)\,U^k e_{a_i}^{(1)}\\
-\sin(k\phi)\,U^k e_{a_i}^{(1)}
\end{pmatrix}.
\end{equation}
For coupling PST from $a_i$ to $b_j$, the upper block must vanish.
Since $U$ is unitary,
\(
U^k e_{a_i}^{(1)}\neq0,
\)
and hence
\(
\cos(k\phi)=0.
\)
Equivalently,
\(
\frac{2kp}{q}\in2\mathbb{Z}+1.
\)
Since $\gcd(p,q)=1$,
\(
q\mid2kp
\iff
q\mid2k.
\)
If $q$ is odd, then $q\mid2k$ implies $q\mid k$, and therefore
\(
\frac{2kp}{q}
\)
is even. Hence, it cannot belong to $2\mathbb{Z}+1$, so no such
$k$ exists. Thus, coupling PST is impossible when $q$ is odd.
Now suppose that $q=2q'$ is even. Since $\gcd(p,q)=1$, $p$ is odd.
Moreover,
\(
q\mid2k
\iff
q'\mid k.
\)
Writing $k=q't$, we obtain
\(
\frac{2kp}{q}=tp,
\)
which is odd if and only if $t$ is odd. Hence,
\begin{equation}\label{eq_16-phi_}
\cos(k\phi)=0
\iff
k=q'(2m+1)
\iff
k\equiv\frac{q}{2}\pmod q.
\end{equation}
Therefore, coupling PST is possible only for even $q$, and at such
times
\(
|\sin(k\phi)|=1.
\)

Conversely, suppose that
\(
k\equiv\frac{q}{2}\pmod q
\)
and
\(
U^k e_{a_i}^{(1)}
=
\mu e_{a_j}^{(1)},
\quad
|\mu|=1.
\)
Then
\(
\cos(k\phi)=0
\)
and
\(
\sin(k\phi)=\pm1.
\)
Therefore,
\begin{equation}
\bigl(U_G^{(\phi)}\bigr)^k e_{a_i}
=
\begin{pmatrix}
0\\
-\sin(k\phi)\,\mu e_{a_j}^{(1)}
\end{pmatrix}.
\end{equation}
By the definition of the corresponding state $e_{b_j}$,
and hence
\(
\bigl(U_G^{(\phi)}\bigr)^k e_{a_i}
=
\gamma e_{b_j},
\)
where
\(
\gamma=-\sin(k\phi)\,\mu.
\)
Since
\(
|\sin(k\phi)|=1
\quad\text{and}\quad
|\mu|=1,
\)
we have
\(
|\gamma|=1.
\)
Thus, coupling PST from $a_i$ to $b_j$ occurs at time $k$.

Now assume that $G$ is periodic at $a_i$ with minimum
period
\(
\tau_{\min}=sq.
\)
Then
\begin{equation}
\bigl(U_G^{(\phi)}\bigr)^{k+\tau_{\min}} e_{a_i}
=
\delta\,
\bigl(U_G^{(\phi)}\bigr)^k e_{a_i},
\qquad
|\delta|=1,
\end{equation}
for all $k\geq1$. Hence, it suffices to search for the minimum
coupling-PST time in the range
\(
1\leq k<\tau_{\min}.
\)
Within this range, the times satisfying \eqref{eq_16-phi_} are exactly
\(
k_0^{(l)}
=
\frac{q}{2}+lq,
\quad
l=0,1,\ldots,s-1.
\)
Thus, within one period, coupling PST can occur only at the times
\begin{equation}
\left\{
\frac{q}{2},
\frac{3q}{2},
\ldots,
\frac{(2s-1)q}{2}
\right\}.
\end{equation}
Therefore, the minimum coupling-PST is the smallest time in this set for
which
\(
U^k  e_{a_i}^{(1)}
=
\mu  e_{a_j}^{(1)},
\quad
|\mu|=1.
\)
In particular, if $s=1$, then $k_{\min}=q/2$. If $s=2$, then
\(
k_{\min}\in\left\{\frac{q}{2},\frac{3q}{2}\right\}.
\)
In general,
\[
k_{\min}\in
\left\{
\frac{q}{2},\frac{3q}{2},\ldots,
\frac{(2s-1)q}{2}
\right\}.
\]
Finally, at $k=k_{\min}$,
\(
\cos(k_{\min}\phi)=0,
\quad
\sin(k_{\min}\phi)=\pm1,
\)
and hence
\begin{equation}
\bigl(U_G^{(\phi)}\bigr)^{k_{\min}}e_{a_i}
=
\gamma e_{b_j},
\quad
\gamma=-\sin(k_{\min}\phi)\mu,
\quad
|\gamma|=1.
\end{equation}
Thus, coupling PST from $a_i$ to $b_j$ occurs at time $k_{\min}$.
\end{proof}
\begin{Theorem}
\label{thm:pst-internal}
Let
\(
G=G_1\overset{(d,d)}{\vec{\vee}}G_2
\)
be a $d$-regular directed partial join with signed coupling and coupled
transition operator $U_G^{(\phi)}$, where
\(
\phi=\frac{p\pi}{q},
\quad
\gcd(p,q)=1.
\) Fix distinct vertices $a_i,a_j\in V(G_1)$, and let
\[
e_{a_i}=
\begin{pmatrix}
e_{a_i}^{(1)}\\
0
\end{pmatrix},
\qquad
e_{a_j}=
\begin{pmatrix}
e_{a_j}^{(1)}\\
0
\end{pmatrix}.
\]
Then, for a positive integer time $k$,
\(
\bigl(U_G^{(\phi)}\bigr)^k e_{a_i}
=
\gamma e_{a_j},
\quad
|\gamma|=1,
\)
if and only if
\(
k\equiv0\pmod q
\)
and
\(
U^k e_{a_i}^{(1)}
=
\mu e_{a_j}^{(1)},
\quad
|\mu|=1.
\)
If $G$ is periodic at $a$ with minimum period
\(
\tau_{\min}=sq,
\quad
s\in\mathbb{Z}_{>0},
\)
then the possible internal-PST times within one period are
\[
\{q,2q,\ldots,(s-1)q\}.
\]

\end{Theorem}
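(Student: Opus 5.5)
The plan is to mirror the proof of Theorem~\ref{thm:pst-coupling}, using the factorization $U_G^{(\phi)}=M_\phi\otimes U$ from~\eqref{eq:coupled-transition}. For every $k\geq 0$ this gives
\[
\bigl(U_G^{(\phi)}\bigr)^k e_{a_i}
=
\begin{pmatrix}
\cos(k\phi)\,U^k e_{a_i}^{(1)}\\
-\sin(k\phi)\,U^k e_{a_i}^{(1)}
\end{pmatrix},
\]
exactly as in~\eqref{eq:power-coupled}. The target state $e_{a_j}$ has vanishing lower block. So internal PST forces the lower block to vanish, and the upper block then carries all the information.

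\textbf{Necessity.} Suppose $\bigl(U_G^{(\phi)}\bigr)^k e_{a_i}=\gamma e_{a_j}$ with $|\gamma|=1$.
\begin{itemize}
\item[(1)] Since $U$ is unitary, $U^k e_{a_i}^{(1)}\neq 0$. The lower block then gives $\sin(k\phi)=0$, i.e.\ $kp/q\in\mathbb{Z}$.
\item[(2)] Because $\gcd(p,q)=1$, this is equivalent to $q\mid k$. This is the same reduction used for condition~(i) in Lemma~\ref{lem:periodicity-coupled-walk}.
\item[(3)] With $q\mid k$ we have $\cos(k\phi)=\pm1$. The upper block reads $\pm U^k e_{a_i}^{(1)}=\gamma e_{a_j}^{(1)}$, so $\mu=\pm\gamma$ is unimodular.
\end{itemize}

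\textbf{Sufficiency.} Conversely, assume $q\mid k$ and $U^k e_{a_i}^{(1)}=\mu e_{a_j}^{(1)}$ with $|\mu|=1$. Then $\sin(k\phi)=0$ and $\cos(k\phi)=\pm1$, and the display above becomes $\gamma e_{a_j}$ with $\gamma=\cos(k\phi)\mu$, which is unimodular.

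\textbf{Times within one period.} Assume $G$ is periodic at $a_i$ with $\tau_{\min}=sq$. Lemma~\ref{lem:periodicity-coupled-walk} gives $\bigl(U_G^{(\phi)}\bigr)^{k+\tau_{\min}}e_{a_i}=\delta\bigl(U_G^{(\phi)}\bigr)^k e_{a_i}$ with $|\delta|=1$. So it suffices to look at $1\leq k\leq \tau_{\min}$.
\begin{itemize}
\item[(1)] By the first part, the only candidates are multiples of $q$, namely $q,2q,\dots,sq$.
\item[(2)] The time $k=sq=\tau_{\min}$ must be excluded. At that time the state is $\gamma' e_{a_i}$, and since $a_i\neq a_j$ the basis vectors $e_{a_i}$ and $e_{a_j}$ are orthogonal, so it cannot also equal $\gamma e_{a_j}$.
\item[(3)] This leaves $\{q,2q,\dots,(s-1)q\}$, which is empty when $s=1$.
\end{itemize}

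The step needing the most care is this last one. The periodicity statement must be applied to the full coupled state, not only to the first block. One also has to check that the residue-class description (multiples of $q$) interacts correctly with $\tau_{\min}$ being a multiple of $q$. The latter holds because condition~(i) of Lemma~\ref{lem:periodicity-coupled-walk} forces $q\mid\tau_{\min}$.
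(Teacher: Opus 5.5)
Your proposal is correct and follows essentially the same route as the paper. You factor $U_G^{(\phi)}=M_\phi\otimes U$, use the vanishing lower block to force $\sin(k\phi)=0$ and hence $q\mid k$, read off $\mu=\pm\gamma$ from the upper block, and then reduce to one period. Your explicit exclusion of $k=\tau_{\min}$, using the orthogonality of $e_{a_i}$ and $e_{a_j}$, makes precise what the paper does implicitly when it simply restricts to $1\le k<\tau_{\min}$.
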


\begin{proof}
For every $k\geq0$, by~\eqref{eq:coupled-transition}, we have
\begin{equation}
\bigl(U_G^{(\phi)}\bigr)^k e_{a_i}
=
\begin{pmatrix}
\cos(k\phi)\,U^k e_{a_i}^{(1)}\\
-\sin(k\phi)\,U^k e_{a_i}^{(1)}
\end{pmatrix}.
\end{equation}
For internal PST from ${a_i}$ to ${a_j}$, the lower block must vanish. Since
$U$ is unitary,
\(
U^k e_{a_i}^{(1)}\neq0,
\)
and hence
\(
\sin(k\phi)=0.
\)
Thus,
\(
\frac{kp}{q}\in\mathbb{Z}.
\)
Since $\gcd(p,q)=1$, this is equivalent to
\begin{equation}\label{eq_17-phi__}
\sin(k\phi)=0
\iff
k\equiv0\pmod q.
\end{equation}
For $k=jq$, we have
\(
\cos(k\phi)
=
\cos(jp\pi)
=
(-1)^{jp}.
\)
Hence, whenever \eqref{eq_17-phi__} holds,
\begin{equation}
\bigl(U_G^{(\phi)}\bigr)^k e_{a_i}
=
\begin{pmatrix}
(-1)^{jp}U^k e_{a_i}^{(1)}\\
0
\end{pmatrix}.
\end{equation}
Therefore,
\(
\bigl(U_G^{(\phi)}\bigr)^k e_{a_i}
=
\gamma e_{a_j},
\quad |\gamma|=1,
\)
if and only if
\(
U^k e_{a_i}^{(1)}
=
\mu e_{a_j}^{(1)},
\quad |\mu|=1,
\)
where
\(
\gamma=(-1)^{jp}\mu.
\)
Conversely, if $U^k e_{a_i}^{(1)}=\mu e_{a_j}^{(1)}$ with $|\mu|=1$, then
$\gamma=(-1)^{jp}\mu$ satisfies $|\gamma|=1$ and gives
\begin{equation}
\bigl(U_G^{(\phi)}\bigr)^k e_{a_i}=\gamma e_{a_j}.
\end{equation}
This proves the stated equivalence.
Now suppose that $G$ is periodic at $a_i$ with minimum
period
\(
\tau_{\min}=sq.
\)
Then
\begin{equation}
\bigl(U_G^{(\phi)}\bigr)^{k+\tau_{\min}}e_{a_i}
=
\delta\,
\bigl(U_G^{(\phi)}\bigr)^k e_{a_i},
\qquad
|\delta|=1,
\end{equation}
for all $k\geq1$. Hence, it suffices to consider
\(
1\leq k<\tau_{\min}.
\)
Since $k\equiv0\pmod q$, the possible times in this range are
\begin{equation}
q,2q,\ldots,(s-1)q.
\end{equation}
This set is empty when $s=1$, so internal PST is impossible in that
case. For $s\geq2$, the minimum internal-PST time is the smallest
time in this set for which
\(
U^k e_{a_i}^{(1)}
=
\mu e_{a_j}^{(1)},
\quad
|\mu|=1.
\)
Thus, at $k=k_{\min}$,
\(
\bigl(U_G^{(\phi)}\bigr)^{k_{\min}}e_{a_i}
=
\gamma e_{a_j},
\quad
|\gamma|=1,
\)
and internal PST from $a_i$ to $a_j$ occurs at time $k_{\min}$.
\end{proof}
\begin{cor}
\label{cor:pst-coupling-phi-pi4}
Under the assumptions of Theorem~\ref{thm:pst-coupling}, let
\(
\phi=\frac{\pi}{4},
\quad
g=l\phi,
\quad
\tau_{\min}\in\{4,8\}.
\)
For $a_i, a_j\in V(G_1)$ and $b_i, b_j\in V(G_2)$, coupling PST from $a_i$ to $b_j$
can occur only at
\(
k\in\{2,6\}.
\)
If $\tau_{\min}=4$, then $k_{\min}=2$. If $\tau_{\min}=8$, then
\(
k_{\min}\in\{2,6\},
\)
where $k_{\min}$ is the smallest $k$ for which
\[
U^k e_{a_i}^{(1)}=\mu e_{a_j}^{(1)},
\qquad |\mu|=1.
\]
\end{cor}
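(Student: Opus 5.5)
This corollary is a direct specialization of Theorem~\ref{thm:pst-coupling} to $\phi=\pi/4$, combined with the period computation of Corollary~\ref{cor:periodicity-phi-pi4}. The plan is to first set $p=1$ and $q=4$ in Theorem~\ref{thm:pst-coupling}. Then $\gcd(p,q)=1$ and $q$ is even, so coupling PST is not excluded. The theorem's criterion becomes: coupling PST from $a_i$ to $b_j$ at time $k$ holds if and only if $k\equiv 2\pmod 4$ and $U^k e_{a_i}^{(1)}=\mu e_{a_j}^{(1)}$ with $|\mu|=1$. As a sanity check, $\cos(k\pi/4)=0$ exactly when $k\in\{2,6,10,\dots\}$.

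Next I invoke Corollary~\ref{cor:periodicity-phi-pi4}, which under $g=l\phi$ gives $\tau_{\min}=\operatorname{lcm}(4,8/\gcd(l,8))\in\{4,8\}$. Writing $\tau_{\min}=sq$ with $q=4$ gives $s\in\{1,2\}$. The periodicity clause of Theorem~\ref{thm:pst-coupling} then lists the admissible coupling-PST times within one period as $\{q/2,\dots,(2s-1)q/2\}$:
\begin{itemize}
\item for $s=1$ this set is $\{2\}$;
\item for $s=2$ it is $\{2,6\}$.
\end{itemize}
Because $(U_G^{(\phi)})^{k+\tau_{\min}}e_{a_i}$ differs from $(U_G^{(\phi)})^{k}e_{a_i}$ only by a unimodular phase, every later PST time reduces modulo $\tau_{\min}$ to one of these. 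This yields $k\in\{2,6\}$ as the possible times within a period.

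For the case split:
\begin{itemize}
\item If $\tau_{\min}=4$, the only candidate is $k=2$, so $k_{\min}=2$, provided coupling PST occurs at all.
\item If $\tau_{\min}=8$, then $k_{\min}$ is the smaller element of $\{2,6\}$ satisfying $U^k e_{a_i}^{(1)}=\mu e_{a_j}^{(1)}$, exactly as in the theorem's final clause.
\end{itemize}

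I expect no real obstacle. The only points needing care are two implicit assumptions. First, ``$k_{\min}=2$'' for $\tau_{\min}=4$ should be read under the hypothesis that coupling PST occurs. Second, the reduction to one period uses periodicity at $a_i$, which holds here because $\phi$ and $g$ are rational multiples of $\pi$ (Lemma~\ref{lem:periodicity-coupled-walk}). I would state both explicitly and otherwise just cite the earlier results.
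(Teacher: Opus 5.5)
Your proposal is correct and follows essentially the same route as the paper. You specialize Theorem~\ref{thm:pst-coupling} to $p=1$, $q=4$, which gives the condition $k\equiv 2\pmod 4$. You then list the admissible times below $\tau_{\min}\in\{4,8\}$, namely $\{2\}$ or $\{2,6\}$.

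Your two caveats are sound clarifications of points the paper's statement leaves implicit. First, $k_{\min}=2$ only holds if coupling PST occurs at all. Second, ``smallest $k$'' must be taken over $k\equiv 2\pmod 4$, not over all $k$.
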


\begin{proof}
By Theorem~\ref{thm:pst-coupling}, coupling PST can occur only at
times satisfying
\(
k\equiv\frac{q}{2}\pmod q.
\)
For $q=4$, this becomes
\(
k\equiv2\pmod4.
\)
If $\tau_{\min}=4$, the only such time with
$1\leq k<\tau_{\min}$ is $k=2$. If $\tau_{\min}=8$, the possible times are
$k=2$ and $k=6$. The stated conclusion follows directly from the
PST condition for $U$.
\end{proof}

\begin{cor}
\label{cor:pst-internal-phi-pi4}
Under the assumptions of Theorem~\ref{thm:pst-internal}, let
\(
\phi=\frac{\pi}{4},
\quad
g=l\phi,
\quad
\tau_{\min}\in\{4,8\}.
\)
For distinct $a_i,a_j\in V(G_1)$, internal PST from $a_i$ to $a_j$ is
possible only when
\(
\tau_{\min}=8,
\)
in which case
\(
k_{\min}=4,
\)
provided that
\[
U^4e_{a_i}^{(1)}
=
\mu e_{a_j}^{(1)},
\qquad
|\mu|=1.
\]
\end{cor}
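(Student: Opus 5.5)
This corollary will follow by specializing Theorem~\ref{thm:pst-internal} to $\phi=\pi/4$, so $p=1$, $q=4$, and combining it with the value of $\tau_{\min}$ from Corollary~\ref{cor:periodicity-phi-pi4}. By Theorem~\ref{thm:pst-internal}, internal PST from $a_i$ to $a_j$ at time $k$ happens exactly when two conditions hold. The first is $k\equiv 0\pmod 4$, which is forced by the vanishing of the lower block $-\sin(k\pi/4)\,U^k e_{a_i}^{(1)}$. The second is $U^k e_{a_i}^{(1)}=\mu e_{a_j}^{(1)}$ with $|\mu|=1$.

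Next I will use the periodicity reduction from the proof of Theorem~\ref{thm:pst-internal}. Since $\bigl(U_G^{(\phi)}\bigr)^{k+\tau_{\min}}e_{a_i}$ differs from $\bigl(U_G^{(\phi)}\bigr)^{k}e_{a_i}$ only by a unimodular scalar, it suffices to look for transfer times with $1\le k<\tau_{\min}$. By Corollary~\ref{cor:periodicity-phi-pi4}, $\tau_{\min}=sq\in\{4,8\}$, so $s\in\{1,2\}$. I will then split into two cases.

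\emph{Case $s=1$ ($\tau_{\min}=4$).} The candidate set $\{q,\dots,(s-1)q\}$ is empty. Equivalently, the only multiples of $4$ are the multiples of $\tau_{\min}$. At such times $\bigl(U_G^{(\phi)}\bigr)^{k}e_{a_i}$ is a unimodular multiple of $e_{a_i}$, which cannot equal $\gamma e_{a_j}$ for $a_j\neq a_i$ because distinct basis vectors are orthogonal. So no internal PST occurs.

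\emph{Case $s=2$ ($\tau_{\min}=8$).} The only candidate in one period is $k=4$. At $k=4$ we have $\cos(4\pi/4)=-1$, so $\bigl(U_G^{(\pi/4)}\bigr)^4e_{a_i}=\bigl(-U^4e_{a_i}^{(1)},0\bigr)^{\top}$. Therefore internal PST occurs exactly when $U^4e_{a_i}^{(1)}=\mu e_{a_j}^{(1)}$, with $\gamma=-\mu$, and in that case $k_{\min}=4$.

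The only delicate point is the periodicity reduction: we must check that any later time $k\ge\tau_{\min}$ reduces to $k\bmod\tau_{\min}$ while preserving the target vertex up to phase. This holds because the reduction multiplies the state by the global phase $\delta$ coming from periodicity at $a_i$. The same phase argument is what rules out $k\equiv 0\pmod{\tau_{\min}}$ in both cases.
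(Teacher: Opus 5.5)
Your proposal is correct and follows essentially the same route as the paper. You specialize Theorem~\ref{thm:pst-internal} to $q=4$, restrict to $1\le k<\tau_{\min}$ using periodicity at $a_i$, and note that the candidate set of multiples of $4$ is empty when $\tau_{\min}=4$ and is $\{4\}$ when $\tau_{\min}=8$; your orthogonality argument excluding $k\equiv 0 \pmod{\tau_{\min}}$ and the explicit phase $\gamma=-\mu$ only spell out what the paper's short proof leaves implicit.
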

\begin{proof}
By Theorem~\ref{thm:pst-internal}, internal PST can occur only at
times satisfying
\(
k\equiv0\pmod q.
\)
For $q=4$, this gives
\(
k\equiv0\pmod4.
\)
If $\tau_{\min}=4$, there is no positive multiple of $4$ satisfying
$k<\tau_{\min}$. Hence internal PST is impossible. If $\tau_{\min}=8$,
the only possible time is $k=4$. Therefore, internal PST occurs at
$k_{\min}=4$ precisely when
\(
U^4e_{a_i}^{(1)}
=
\mu e_{a_j}^{(1)},
\quad
|\mu|=1.
\)
\end{proof}

\begin{ex}
Consider two copies of the cycle graph $C_4$,
\(
G_1=G_2=C_4,
\)
with vertex sets
\(
V(G_1)=\{a_1,a_2,a_3,a_4\},
\quad
V(G_2)=\{b_1,b_2,b_3,b_4\},
\)
and vertex ordering
\(
(a_1,a_2,a_3,a_4,b_1,b_2,b_3,b_4).
\)
We use the $2$-regular bipartite coupling
\(
a_1\sim b_2,b_4,\quad
a_2\sim b_1,b_3,\quad
a_3\sim b_2,b_4,\quad
a_4\sim b_1,b_3
\) (see Fig.~\ref{fig:C4-partial-join-case-a}).
The corresponding coupling matrix is
\[
\mathcal{J}_1=
\begin{pmatrix}
0&1&0&1\\
1&0&1&0\\
0&1&0&1\\
1&0&1&0
\end{pmatrix}
=A(C_4).
\]
Hence, the signed adjacency matrix of the partial join is
\[
A_{\pm}(G)=
\begin{pmatrix}
A(C_4)&\mathcal{J}_1\\
-\mathcal{J}_1^{\top}&A(C_4)
\end{pmatrix}.
\]
For the coupled walk, we use the states
\(
e_{a_i}=
\begin{pmatrix}
e_{a_i}^{(1)}\\
0
\end{pmatrix},
\quad
e_{b_j}=
\begin{pmatrix}
0\\
e_{a_j}^{(1)}
\end{pmatrix}.
\)
The adjacency matrix of $C_4$ admits the shunt decomposition
\(
A(C_4)=P_1+P_2,
\)
are the permutation matrices given in
Example~\ref{eq.2.2}.
Hence, the shift operator is
\(
S=P_1\otimes E_{11}+P_2\otimes E_{22}.
\)
First, consider the Hadamard coin
\(
H=
\frac{1}{\sqrt2}
\begin{pmatrix}
1&1\\
1&-1
\end{pmatrix},
\quad
C_H=I_4\otimes H,
\)
and let
\(
U_H=SC_H.
\)
A direct calculation gives
\(
U_H^4=
\begin{pmatrix}
0&I_4\\
I_4&0
\end{pmatrix}.
\)
Therefore,
\(
U_H^4e_{a_1}^{(1)}=e_{a_3}^{(1)},
\quad
U_H^4e_{a_2}^{(1)}=e_{a_4}^{(1)}.
\)
Thus, the base walk exhibits nontrivial internal PST at time $k=4$
between antipodal vertices of $C_4$, with transfer phase $\mu=1$.
Since
\(
U_{G_1}=U_{G_2}
=U_{\mathcal J_1}=U_{\mathcal J_2}
=U_H,
\)
the coupled transition operator for $\phi=\pi/4$ is
\[
U_G^{(\pi/4)}
=
\frac{1}{\sqrt2}
\begin{pmatrix}
U_H&U_H\\
-U_H&U_H
\end{pmatrix}
=
M_{\pi/4}\otimes U_H,
\]
where
\(
M_{\pi/4}
=
\frac{1}{\sqrt2}
\begin{pmatrix}
1&1\\
-1&1
\end{pmatrix}.
\)
Its spectrum is
\(
\operatorname{Spec}\bigl(U_G^{(\pi/4)}\bigr)
=
\left\{
1,-1,\pm i,
e^{\pm i\pi/4},
e^{\pm i3\pi/4}
\right\},
\)
with each eigenvalue having multiplicity $2$. Hence,
\(
\left(U_G^{(\pi/4)}\right)^8=I,
\)
and, since $e^{i\pi/4}$ has order $8$, the walk is periodic with
period $8$.
Moreover,
\(
M_{\pi/4}^4=-I_2,
\)
and therefore
\(
\left(U_G^{(\pi/4)}\right)^4
=
-I_2\otimes U_H^4.
\)
Thus,
\(
\left(U_G^{(\pi/4)}\right)^4e_{a_1}
=
-e_{a_3},
\quad
\left(U_G^{(\pi/4)}\right)^4e_{a_2}
=
-e_{a_4},
\)
and similarly for the states initially localized in $G_2$. Hence,
the full coupled walk exhibits internal PST at time $k=4$ between the
antipodal pairs
\(
a_1\leftrightarrow a_3,\quad
a_2\leftrightarrow a_4; \quad
b_1\leftrightarrow b_3,\quad
b_2\leftrightarrow b_4.
\)
Next, consider the  coin
\(
Z=
\begin{pmatrix}
1&0\\
0&-1
\end{pmatrix},
\quad
C_Z=I_4\otimes Z,
\)
and let
\(
U_Z=SC_Z.
\)
Since
\(
Z^2=I_2
\quad\text{and}\quad
P_1P_2=P_2P_1=I_4,
\)
we obtain
\(
U_Z^2=
\begin{pmatrix}
P_1^2&0\\
0&P_2^2
\end{pmatrix}.
\)
In particular,
\(
U_Ze_{a_1}^{(1)}=e_{a_4}^{(1)},
\quad
U_Z^2e_{a_1}^{(1)}=e_{a_3}^{(1)},
\quad
U_Z^3e_{a_1}^{(1)}=e_{a_2}^{(1)}.
\)
Now fix
\(
\phi=\frac{\pi}{4}.
\)
Then $q=4$, so Theorem~\ref{thm:pst-coupling} requires
\(
k\equiv\frac{q}{2}=2\pmod4.
\)
Since
\(
U_Z^2e_{a_1}^{(1)}=e_{a_3}^{(1)},
\)
the theorem gives coupling PST at $k=2$. Indeed,
\[
\left(U_G^{(Z)}\right)^2e_{a_1}
=
\begin{pmatrix}
0\\
-e_{a_3}^{(1)}
\end{pmatrix}
=
-e_{b_3}.
\]
Thus, the signed partial join exhibits coupling PST
\(
a_1\longrightarrow b_3
\)
at time $k=2$, with transfer phase
\(
\gamma=-1.
\)
Similarly,
\[
\left(U_G^{(Z)}\right)^2e_{a_2}=-e_{b_4},
\qquad
\left(U_G^{(Z)}\right)^2e_{a_3}=-e_{b_1},
\qquad
\left(U_G^{(Z)}\right)^2e_{a_4}=-e_{b_2}.
\]

This example explicitly demonstrates the coupling-PST condition in
Theorem~\ref{thm:pst-coupling} and Theorem~\ref{thm:pst-internal}. The operator $U_Z$ exhibits PST
between distinct vertices within one copy, while the factor
$M_{\pi/4}^2$ transfers the resulting state to the corresponding
vertex in the second copy.
\end{ex}

From~\eqref{transition_K_n_n}, we obtain
\(
\bigl(U_G^{(\phi)}\bigr)^2
=
-\begin{pmatrix}
U^2&0\\
0&U^2
\end{pmatrix},
\)
and hence
\begin{equation}
\bigl(U_G^{(\phi)}\bigr)^{2k}
=
(-1)^k
\begin{pmatrix}
U^{2k}&0\\
0&U^{2k}
\end{pmatrix},
\qquad
\bigl(U_G^{(\phi)}\bigr)^{2k+1}
=
(-1)^k
\begin{pmatrix}
0&U^{2k+1}\\
-U^{2k+1}&0
\end{pmatrix}.
\end{equation}
Therefore, even powers preserve the two components, while odd powers
transfer states between $G_1$ and $G_2$.
\begin{cor}
\label{thm:pst-coupling-phi-pi2} Let
\(
G=G_1\overset{(d,d)}{\vec{\vee}}G_2
\)
be a $d$-regular directed partial join with signed coupling and coupled
transition 
operator 
\[
U_G^{(\phi)}
=
\begin{pmatrix}
0&U\\
-U&0
\end{pmatrix},
\qquad
\phi=\frac{\pi}{2},
\]
where $U$ is unitary. Fix $a_i,a_j\in V(G_1)$ and $b_i,b_j\in V(G_2)$, and let
\(
e_{a_i}=
\begin{pmatrix}
e_{a_i}^{(1)}\\
0
\end{pmatrix},
\quad
e_{b_j}=
\begin{pmatrix}
0\\
e_{a_j}^{(1)}
\end{pmatrix}.
\)
Then coupling PST from $a_i$ to $b_j$ under $U_G^{(\phi)}$ occurs at
time $k$ if and only if $k$ is odd and
\(
U^k e_{a_i}^{(1)}
=
\mu e_{a_j}^{(2)},
\quad
|\mu|=1.
\)

\end{cor}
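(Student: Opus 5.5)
The plan is to use the explicit power formulas displayed just before the statement, which come from the block form of $U_G^{(\phi)}$ at $\phi=\pi/2$, rather than any spectral argument. For every $k\geq 0$, splitting into even and odd $k$, we have
\[
\bigl(U_G^{(\phi)}\bigr)^{2m}e_{a_i}=(-1)^m\begin{pmatrix}U^{2m}e_{a_i}^{(1)}\\ 0\end{pmatrix},
\qquad
\bigl(U_G^{(\phi)}\bigr)^{2m+1}e_{a_i}=(-1)^m\begin{pmatrix}0\\ -U^{2m+1}e_{a_i}^{(1)}\end{pmatrix}.
\]
As an alternative route, the statement is the specialization $q=2$, $p=1$ of Theorem~\ref{thm:pst-coupling}. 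There the condition $k\equiv q/2 \pmod q$ becomes $k\equiv 1\pmod 2$, that is, $k$ odd. I would mention this as a consistency check, but write out the direct argument.

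For necessity, suppose $\bigl(U_G^{(\phi)}\bigr)^k e_{a_i}=\gamma e_{b_j}$ with $|\gamma|=1$.
\begin{enumerate}
\item If $k$ were even, the lower block of the left-hand side would be zero.
\item The right-hand side has lower block $\gamma e_{a_j}^{(1)}\neq 0$, so this is a contradiction.
\item Hence $k=2m+1$ is odd.
\item Comparing lower blocks then gives $-(-1)^m U^k e_{a_i}^{(1)}=\gamma e_{a_j}^{(1)}$, that is, $U^k e_{a_i}^{(1)}=\mu e_{a_j}^{(1)}$ with $\mu=(-1)^{m+1}\gamma$.
\item Since $|\mu|=1$, this gives the required condition.
\end{enumerate}

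For sufficiency, take $k=2m+1$ and assume $U^k e_{a_i}^{(1)}=\mu e_{a_j}^{(1)}$ with $|\mu|=1$. The odd-power formula gives $\bigl(U_G^{(\phi)}\bigr)^k e_{a_i}=(-1)^{m+1}\mu\, e_{b_j}$, and the phase $(-1)^{m+1}\mu$ is unimodular.

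The only delicate point is notational. The statement writes $e_{a_j}^{(2)}$, whereas $e_{b_j}$ is defined as $(0,e_{a_j}^{(1)})^{\top}$. I will read $e_{a_j}^{(2)}$ as the same coordinate vector $e_{a_j}^{(1)}$ placed in the second block, consistent with Theorem~\ref{thm:pst-coupling}, and say so explicitly. Beyond this identification, and verifying the power formulas by induction from $\bigl(U_G^{(\phi)}\bigr)^2=-I_2\otimes U^2$, I expect no real obstacle.
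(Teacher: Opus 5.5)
Your proposal is correct and follows essentially the paper's argument: the paper writes $\bigl(U_G^{(\phi)}\bigr)^k e_{a_i}$ with blocks $\cos(k\pi/2)\,U^k e_{a_i}^{(1)}$ and $-\sin(k\pi/2)\,U^k e_{a_i}^{(1)}$, which is the same as your even/odd power formulas. It forces $k$ odd by requiring the upper block to vanish (using $U^k e_{a_i}^{(1)}\neq 0$), whereas you use the nonzero lower block of the target, and your reading of $e_{a_j}^{(2)}$ as $e_{a_j}^{(1)}$ placed in the second block is exactly what the paper's proof itself establishes.
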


\begin{proof}
For every $k\geq0$,
\(
\bigl(U_G^{(\phi)}\bigr)^k e_{a_i}
=
\begin{pmatrix}
\cos(k\pi/2)\,U^k e_{a_i}^{(1)}\\
-\sin(k\pi/2)\,U^k e_{a_i}^{(1)}
\end{pmatrix}.
\)
For coupling PST, the upper block must vanish. Since $U$ is unitary,
$U^k e_{a_i}^{(1)}\neq0$, and therefore
\(
\cos(k\pi/2)=0
\iff
k\ \text{is odd}.
\)
For odd $k$,
\(
|\sin(k\pi/2)|=1.
\)
Thus, coupling PST occurs precisely when
\(
U^k e_{a_i}^{(1)}
=
\mu e_{a_j}^{(1)},
\quad
|\mu|=1.
\)
By Example~\ref{ex:periodicity-Knn}, the minimum period
of $G$ at $a_i$ satisfies
\(
\tau_{\min}\in\{2,4,8\}.
\)
Therefore, within one period, the possible coupling-PST times are
\begin{equation}
\begin{cases}
\{1\}, & \tau_{\min}=2,\\
\{1,3\}, & \tau_{\min}=4,\\
\{1,3,5,7\}, & \tau_{\min}=8.
\end{cases}
\end{equation}
Hence, the minimum coupling-PST time is the smallest time in the
corresponding set for which
\(
U^k e_{a_i}^{(1)}
=
\mu e_{a_j}^{(2)},
\quad
|\mu|=1.
\)

\end{proof}

\begin{cor}
\label{thm:pst-internal-phi-pi2}
Let
\(
G=G_1\overset{(d,d)}{\vec{\vee}}G_2
\)
be a $d$-regular directed partial join with signed coupling and coupled
transition operator 
\[
U_G^{(\phi)}
=
\begin{pmatrix}
0&U\\
-U&0
\end{pmatrix},
\qquad
\phi=\frac{\pi}{2},
\]
where $U$ is unitary. Fix distinct $a_i,a_j\in V(G_1)$, and let
\(
e_{a_i}=
\begin{pmatrix}
e_{a_i}^{(1)}\\
0
\end{pmatrix},
\quad
e_{a_j}=
\begin{pmatrix}
e_{a_j}^{(1)}\\
0
\end{pmatrix}.
\)
Then internal PST from $a_i$ to $a_j$ under $U_G^{(\phi)}$ occurs at
time $k$ if and only if $k$ is even and
\(
U^k e_{a_i}^{(1)}
=
\mu e_{a_j}^{(1)},
\quad
|\mu|=1.
\)

\end{cor}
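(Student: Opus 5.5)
The plan is to argue directly from the block-power formula for the $\phi=\pi/2$ operator recorded just before Corollary~\ref{thm:pst-coupling-phi-pi2}, mirroring the proof of Theorem~\ref{thm:pst-internal} with $p=1$, $q=2$. Since $U_G^{(\pi/2)}=M_{\pi/2}\otimes U$, we have $\bigl(U_G^{(\phi)}\bigr)^k=M_{\pi/2}^k\otimes U^k$. Applied to $e_{a_i}$ this gives
\[
\bigl(U_G^{(\phi)}\bigr)^k e_{a_i}=
\begin{pmatrix}
\cos(k\pi/2)\,U^k e_{a_i}^{(1)}\\
-\sin(k\pi/2)\,U^k e_{a_i}^{(1)}
\end{pmatrix}.
\]
This can be read off from the explicit even/odd formulas for $\bigl(U_G^{(\phi)}\bigr)^{2m}$ and $\bigl(U_G^{(\phi)}\bigr)^{2m+1}$ stated earlier.

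For the forward direction, assume $\bigl(U_G^{(\phi)}\bigr)^k e_{a_i}=\gamma e_{a_j}$ with $|\gamma|=1$. The lower block of $e_{a_j}$ is zero, so $\sin(k\pi/2)\,U^k e_{a_i}^{(1)}=0$. Since $U$ is unitary, $U^k e_{a_i}^{(1)}\neq 0$, and hence $\sin(k\pi/2)=0$, i.e.\ $k$ is even. Write $k=2m$. Then $\cos(k\pi/2)=(-1)^m$, and the upper block gives $(-1)^mU^k e_{a_i}^{(1)}=\gamma e_{a_j}^{(1)}$. This is the required relation with $\mu=(-1)^m\gamma$, which satisfies $|\mu|=1$.

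For the converse, assume $k=2m$ and $U^k e_{a_i}^{(1)}=\mu e_{a_j}^{(1)}$ with $|\mu|=1$. The even-power formula gives $\bigl(U_G^{(\phi)}\bigr)^k e_{a_i}=(-1)^m\mu\, e_{a_j}$. The phase $\gamma=(-1)^m\mu$ is unimodular, so internal PST occurs at time $k$.

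There is no real obstacle here; the only point needing care is that the lower block vanishes because of $\sin(k\pi/2)$ and not because of $U^k e_{a_i}^{(1)}$, which unitarity rules out. Optionally, one can combine this with Example~\ref{ex:periodicity-Knn}, which gives $\tau_{\min}\in\{2,4,8\}$. Within one period the candidate internal-PST times are then $\emptyset$, $\{2\}$, or $\{2,4,6\}$ respectively, exactly as in the last part of the proof of Theorem~\ref{thm:pst-internal}.
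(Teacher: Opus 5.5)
Your proof is correct and follows essentially the same route as the paper. Both expand $\bigl(U_G^{(\phi)}\bigr)^k e_{a_i}$ via $M_{\pi/2}^k\otimes U^k$, use unitarity of $U$ to force $\sin(k\pi/2)=0$ (so $k$ is even), and read the upper block with $\cos(k\pi/2)=\pm1$; your explicit phase $\mu=(-1)^{k/2}\gamma$ and the optional candidate-time remark ($\varnothing$, $\{2\}$, $\{2,4,6\}$) match the paper's argument.
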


\begin{proof}
For every $k\geq0$,
\(
\bigl(U_G^{(\phi)}\bigr)^k e_{a_i}
=
\begin{pmatrix}
\cos(k\pi/2)\,U^k e_{a_i}^{(1)}\\
-\sin(k\pi/2)\,U^k e_{a_i}^{(1)}
\end{pmatrix}.
\)
For internal PST, the lower block must vanish. Since $U$ is unitary,
$U^k e_{a_i}^{(1)}\neq0$, and hence
\(
\sin(k\pi/2)=0
\iff
k\ \text{is even}.
\)
For even $k$,
\(
\cos(k\pi/2)=\pm1.
\)
Therefore, internal PST occurs precisely when
\(
U^ke_{a_i}^{(1)}
=
\mu e_{a_j}^{(1)},
\quad
|\mu|=1.
\)
By Example~\ref{ex:periodicity-Knn}, the minimum period
of $G$ at $a_i$ satisfies
\(
\tau_{\min}\in\{2,4,8\}.
\)
Thus, the possible internal-PST times within one period are
\begin{equation}
\begin{cases}
\varnothing, & \tau_{\min}=2,\\
\{2\}, & \tau_{\min}=4,\\
\{2,4,6\}, & \tau_{\min}=8.
\end{cases}
\end{equation}
Hence, the minimum internal PST time is the smallest time in the
corresponding set for which
\(
U^k e_{a_i}^{(1)}
=
\mu e_{a_j}^{(1)},
\quad
|\mu|=1.
\)
\end{proof}

\section{Double Cover Conditions for Periodicity and PST in Directed Partial Joins }
\label{sec7}
Continuous-time quantum walks on double covers, and the conditions under which they admit PST, were analyzed by Coutinho and Godsil~\cite{coutinho2016perfect}. Here we consider the discrete-time analogue. The shunt-decomposition walk studied here, however, differs fundamentally from the continuous-time setting, since its transition operator acts on a space of a different dimension. Consequently, their results cannot be applied directly. Instead, we construct the
double cover explicitly in a manner that preserves PST. This construction
enables us to analyze graphs for which the transition operators $U_{G_1}$ and $U_{\mathcal{J}_1}$ do not necessarily
commute.

Let \(G_1=G_2\) be a \(d\)-regular directed graph, and let
\(\mathcal{J}_1\) and \(\mathcal{J}_2\) denote the two directed coupling
parts, where \(\mathcal{J}_1\) consists of \(d_1\) arcs from \(G_1\) to
\(G_2\), while \(\mathcal{J}_2\) consists of \(d_1\) arcs from \(G_2\) to
\(G_1\). Thus, each vertex has \(d_1\) outgoing coupling arcs in each
direction, and
\(
\mathcal{J}=\mathcal{J}_1\cup\mathcal{J}_2
\)
is a \(d_1\)-regular directed bipartite coupling graph. When
\(\mathcal{J}_1=\mathcal{J}_2\), we consider the directed partial join
\(
G=G_1\overset{(d_1,d_1)}{\vec{\vee}}G_1,
\)
whose adjacency matrix is
\begin{equation}\label{eq_46_}
A(G)
=
\begin{pmatrix}
A(G_1) & A(\mathcal{J}_1)\\
A(\mathcal{J}_1) & A(G_1)
\end{pmatrix}.
\end{equation}
Here, \(d_1\) is not required to equal the degree \(d\) of \(G_1\).
An example satisfying \(d_1=d\) is given in
Proposition~\ref{cor:identical-circulant-coupling}. Following the
notation of~\cite{coutinho2016perfect}, we may also write
\(
G=G_1\ltimes \mathcal{J}_1.
\) Suppose  that the nonzero entries of $A(G_1)$ and $A(\mathcal{J}_1)$ are disjoint, that is,
\(
A(G_1)\circ A(\mathcal{J}_1) = 0,
\)
where $\circ$ denotes the Hadamard (entrywise) product. Then $G_1\ltimes \mathcal{J}_1$ is the double
cover of the graph whose adjacency matrix is
\(
A(G_1) + A(\mathcal{J}_1).
\)
Two particular choices of $\mathcal{J}_1$ are useful in what follows.
\begin{enumerate}
\item \textit{Loopless complementary coupling}:
First, suppose that $G_1$ is loopless and take $\mathcal{J}_1=\overline{G_1}$,
where the complement is also taken without loops. Then
\(
A(\mathcal{J}_1)
=
J_n-I-A(G_1),
\)
and hence
\begin{equation}
A(G_1)+A(\mathcal{J}_1)=J_n-I=A(K_n).
\end{equation}
Therefore,
\(
G_1\ltimes\overline{G_1}
\)
is a double cover of the loopless complete graph $K_n$. This graph is
also the switching graph associated with $G_1$
\cite[Sec.~11.5]{GodsilRoyle2001}.
Moreover,
\begin{equation}
A(\mathcal{J}_1)-A(G_1)
=
J_n-I-2A(G_1)
=
\mathcal{S}(G_1),
\end{equation}
where
\(
\mathcal{S}(G_1)=J-I-2A(G_1)
\)
is the Seidel matrix of $G_1$~\cite{brouwer2012spectra}. Thus, the sum and difference of the two
adjacency matrices are
\(
A(G_1)+A(\mathcal{J}_1)=A(K_n),
\quad
A(\mathcal{J}_1)-A(G_1)=\mathcal{S}(G_1).
\)

\item \textit{Complementary coupling with loops}:
Alternatively, define $\mathcal{J}_1$ by
\(
A(\mathcal{J}_1)=J_n-A(G_1).
\)
If $G_1$ is loopless, then $\mathcal{J}_1$ contains a loop at every vertex, and
\begin{equation}\label{coupling_loops_complementary}
A(G_1)+A(\mathcal{J}_1)=J_n.
\end{equation}
Hence, in this case, $G_1\ltimes \mathcal{J}_1$ is a double cover of the complete
graph with a loop at every vertex. The corresponding difference is
\begin{equation}
A(\mathcal{J}_1)-A(G_1)
=
J_n-2A(G_1).
\end{equation}
\end{enumerate}

The adjacency matrix in~\eqref{eq_46_} can be written as
\(
A(G)
=
I_2\otimes A(G_1)
+
A(K_2)\otimes A(\mathcal{J}_1),
\)
where
\(
A(K_2)=
\begin{pmatrix}
0&1\\
1&0
\end{pmatrix}.
\)
Since $I_2$ and $A(K_2)$ commute, they are simultaneously diagonalized
by the Hadamard matrix
\(
H
=
\frac{1}{\sqrt{2}}
\begin{pmatrix}
1&1\\
1&-1
\end{pmatrix}.
\)
Consequently,
\begin{equation}
(H\otimes I)^\dagger A(G)(H\otimes I)
=
\begin{pmatrix}
A(G_1)+A(\mathcal{J}_1)&0\\
0&A(G_1)-A(\mathcal{J}_1)
\end{pmatrix}.
\end{equation}
Thus, the double-cover index separates into two Hadamard eigenspaces,
on which the sum and difference operators act independently.
Let $U_{G_1}$ and $U_{\mathcal{J}_1}$ denote the discrete-time transition
operators associated with $G_1$ and $\mathcal{J}_1$, respectively. In general,
\(
U_{G_1}U_{\mathcal{J}_1}\neq U_{\mathcal{J}_1}U_{G_1},
\)
so the spectral arguments used for commuting transition matrices do
not apply directly. The double-cover decomposition provides an
alternative way to organize the two components and motivates the
construction below for studying periodicity and PST in the non-commuting case.
\begin{prop}\label{cor_shift_block_explicit_general}
Let \(G_1=G_2\) be a \(d\)-regular directed graph on \(n\) vertices, and let
\(\mathcal J_1=\mathcal J_2\) be a \(d_1\)-regular directed coupling part.
Then
\(
G=G_1\overset{(d_1,d_1)}{\vec{\vee}}G_1
\)
has total degree \(D=d+d_1\). If
\(
S_G
=
I_2\otimes S_{G_1}
+
\sigma_x\otimes S_{\mathcal J_1},
\quad
\sigma_x=
\begin{pmatrix}0&1\\1&0\end{pmatrix},
\) where $S_{G_1}$ is the shift operator of $G_1$ and
$S_{\mathcal{J}_1}$ is the shift operator associated with the coupling
part $\mathcal{J}_1$.
Then
\[
(H\otimes I_n\otimes I_D)\,
S_G\,
(H\otimes I_n\otimes I_D)
=
\operatorname{diag}
\left(
S_{G_1}+S_{\mathcal J_1},
S_{G_1}-S_{\mathcal J_1}
\right),
\]
where \(H\) is the Hadamard matrix.

\end{prop}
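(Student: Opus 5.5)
The proof has two parts: a degree count for the first claim, and a Kronecker-product computation for the block diagonalization.

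For the degree claim, each vertex of $G_1$ in $G$ has $d$ out-arcs inside its own copy and $d_1$ coupling out-arcs to the other copy. The same holds for in-arcs, because $A(\mathcal J_1)$ has all row and column sums equal to $d_1$. The vertices of the second copy behave identically, since $G_2=G_1$ and $\mathcal J_2=\mathcal J_1$. This is the argument of Lemma~\ref{lem:2d-regular} with $d$ replaced by $d_1$ in the coupling block, and it gives total degree $D=d+d_1$. Consequently $S_{G_1}$ and $S_{\mathcal J_1}$ must both be regarded as operators on $\mathbb C^n\otimes\mathbb C^D$. I will realize this by indexing the shunts of $G_1$ by coin labels $1,\dots,d$ and the shunts of $\mathcal J_1$ by labels $d+1,\dots,D$. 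Then
\[
S_{G_1}=\sum_{j=1}^{d}P_j\otimes E_{jj},\qquad
S_{\mathcal J_1}=\sum_{i=1}^{d_1}Q_i\otimes E_{d+i,d+i},
\]
with all $E_{jj}\in\mathbb C^{D\times D}$. This makes $S_G=I_2\otimes S_{G_1}+\sigma_x\otimes S_{\mathcal J_1}$ a well-defined $2nD\times 2nD$ matrix, namely the shunt-decomposition shift of the block adjacency matrix~\eqref{eq_46_}. I expect this dimensional bookkeeping, which is needed to avoid the mismatch of Example~\ref{eq.2.2}, to be the only delicate point.

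For the main identity, I will use the mixed-product property of the Kronecker product together with $H=H^\dagger=H^{-1}$:
\[
(H\otimes I_{nD})(I_2\otimes S_{G_1})(H\otimes I_{nD})=(HI_2H)\otimes S_{G_1}=I_2\otimes S_{G_1},
\]
\[
(H\otimes I_{nD})(\sigma_x\otimes S_{\mathcal J_1})(H\otimes I_{nD})=(H\sigma_xH)\otimes S_{\mathcal J_1}.
\]
A direct $2\times2$ computation gives $H\sigma_xH=\sigma_z=\operatorname{diag}(1,-1)$. This is the same simultaneous diagonalization of $I_2$ and $A(K_2)$ already used for $A(G)$ just before the proposition. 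Adding the two pieces,
\[
I_2\otimes S_{G_1}+\sigma_z\otimes S_{\mathcal J_1}=\operatorname{diag}\bigl(S_{G_1}+S_{\mathcal J_1},\,S_{G_1}-S_{\mathcal J_1}\bigr),
\]
which is the stated block diagonal form.
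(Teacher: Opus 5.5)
Your proof is correct and follows the paper's argument. Both conjugate by $H\otimes I_n\otimes I_D$, use $HI_2H=I_2$ and $H\sigma_xH=\sigma_z$ via the mixed-product property, and read off $\operatorname{diag}(S_{G_1}+S_{\mathcal J_1},\,S_{G_1}-S_{\mathcal J_1})$. Your explicit placement of the $\mathcal J_1$-shunts at coin labels $d+1,\dots,D$ (that is, $Q_i\otimes E_{d+i,d+i}$) is cleaner than the paper's write-up, which writes $Q_i\otimes E_{ii}$ and only fixes the offset implicitly through the combined decomposition $\{R_1,\dots,R_D\}$.
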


\begin{proof}
 Since $H\sigma_xH=\sigma_z$ and $HI_2H=I_2$, we have 
\[
(H\otimes I\otimes I)S_G(H\otimes I\otimes I)
=I_2\otimes S_{G_1}+(H\sigma_xH)\otimes S_{\mathcal{J}_1}
=I_2\otimes S_{G_1}+\sigma_z\otimes S_{\mathcal{J}_1}
=\operatorname{diag}(S_{G_1}+S_{\mathcal{J}_1},\,S_{G_1}-S_{\mathcal{J}_1}),
\]
using $\sigma_z=\operatorname{diag}(1,-1)$ blockwise. 
 Since \(G_1\) is a \(d\)-regular directed graph and
\(\mathcal{J}=\mathcal{J}_1\cup\mathcal{J}_2\) is a \(d_1\)-regular
directed bipartite coupling graph, the graphs \(G_1\) and
\(\mathcal{J}_1\) admit shunt decompositions
\(
A(G_1)=\sum_{i=1}^{d}P_i,
\quad
A(\mathcal{J}_1)=\sum_{i=1}^{d_1}Q_i,
\)
by Lemma~\ref{lemma 2.1} and Proposition~\ref{prop:signed-coupling-shunt}, with corresponding shift operators
$S_{G_1}=\sum_{i=1}^d P_i\otimes E_{ii}$, $S_{\mathcal{J}_1}=\sum_{i=1}^{d_1} Q_i\otimes E_{ii}$
on $\mathbb C^n\otimes\mathbb C^D$. Explicitly,
\begin{equation}
S_{G_1}+S_{\mathcal{J}_1}=\sum_{i=1}^d P_i\otimes E_{ii}+\sum_{i=1}^{d_1} Q_i\otimes E_{ii},
\qquad
S_{G_1}-S_{\mathcal{J}_1}=\sum_{i=1}^d P_i\otimes E_{ii}-\sum_{i=1}^{d_1} Q_i\otimes E_{ii}.
\end{equation}
These are  the shift operators $\sum_{r=1}^{D}R_r\otimes E_{rr}$
and $\sum_{r=1}^{D}\varepsilon_rR_r\otimes E_{rr}$ from the combined shunt decomposition
$\{R_1,\dots,R_{D}\}=\{P_1,\dots,P_d,Q_1,\dots,Q_{d_1}\}$ of
$A(G_1)+A(\mathcal{J}_1)=\sum_{r=1}^{D}R_r$. With sign \[
\varepsilon_r=
\begin{cases}
+1,&1\leq r\leq d,\\
-1,&d+1\leq r\leq D.
\end{cases}
\]
Thus,
\(
\sum_{r=1}^{D}\varepsilon_rR_r
=
A(G_1)-A(\mathcal{J}_1).
\) Hence
$S_{G_1}+S_{\mathcal{J}_1}$ and $S_{G_1}-S_{\mathcal{J}_1}$ are precisely the shifts for
$A(G_1)+A(\mathcal{J}_1)$ and $A(G_1)-A(\mathcal{J}_1)$ respectively, and the claimed
identity follows.
\end{proof}
Following the analogous argument in Lemma~5.1 of~\cite{coutinho2016perfect},
we derive the relation between \(U_G\) and
\(U_{G_1+\mathcal{J}_1}\), as well as their \(k\)-th powers.
\begin{Theorem}\label{lem_double_Cover}
Let $G_1$ and \(G_2\) be $d$-regular directed graph on the vertex set
\(
V(G_1)=\{1,\ldots,n\}.
\) Let
\(
\mathcal{J}=\mathcal{J}_1\cup\mathcal{J}_2
\)
be a $d_1$-regular directed bipartite coupling graph on
\(
V(G_1)\cup V(G_2),
\)
where $\mathcal{J}_1=\mathcal{J}_2, G_1=G_2$.
Let $U_G$ denote the transition operator of $G$, and
let $U_{G_1+\mathcal{J}_1}$ and $U_{G_1-\mathcal{J}_1}$ denote the transition operators
associated with the  matrices
\(
A(G_1)+A(\mathcal{J}_1)
\quad\text{and}\quad
A(G_1)-A(\mathcal{J}_1),
\)
respectively. Then the following statements hold.
\begin{enumerate}
\item For all $k\ge0$,
\[
(H\otimes I_n)\,U_G^{\,k}\,(H\otimes I_n)
=\begin{pmatrix}U_{G_1+\mathcal{J}_1}^{\,k}&0\\0&U_{G_1-\mathcal{J}_1}^{\,k}\end{pmatrix},
\qquad
H=\frac{1}{\sqrt2}\begin{pmatrix}1&1\\1&-1\end{pmatrix}.
\]
\item If there exist vertices $a_i,a_j\in V(G_1)$, a time $k\ge0$, and a
unit-modulus scalar $\gamma$ such that
\[
U_{G_1+\mathcal{J}_1}^{\,k}e_{a_i}^{(1)}=\gamma e_{a_j}^{(1)}
\qquad\text{and}\qquad
U_{G_1-\mathcal{J}_1}^{\,k}e_{a_i}^{(1)}=\gamma e_{a_j}^{(1)}.
\]
Then $U_G^ke_{a_i}=\gamma e_{a_j} $, where
    \(
    e_{a_i}=\begin{pmatrix}e_{a_i}^{(1)}\\0\end{pmatrix}
    \quad
    e_{a_j}=\begin{pmatrix}e_{a_j}^{(1)}\\0\end{pmatrix}
    \), that is, G exhibits internal PST from \(a_i\) to \(a_j\)
    at time $k$, with transfer phase $\gamma$.
\end{enumerate}
\end{Theorem}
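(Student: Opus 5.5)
The plan is to reduce everything to Proposition~\ref{cor_shift_block_explicit_general} by writing $U_G$ itself in Kronecker form. Since $G_1=G_2$ and $\mathcal J_1=\mathcal J_2$, the shift of $G$ is $S_G=I_2\otimes S_{G_1}+\sigma_x\otimes S_{\mathcal J_1}$. We take a uniform coin, so $C_G=I_2\otimes C$ with $C=I_n\otimes C_0$ acting on $\mathbb C^n\otimes\mathbb C^D$. Then $U_G=S_GC_G$. The first step is to observe that $H\otimes I$ (with $I=I_n\otimes I_D$, which is what $H\otimes I_n$ in the statement abbreviates) is a real symmetric involution, $(H\otimes I)^2=I$. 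It also commutes with $C_G=I_2\otimes C$, because $H I_2=I_2H$.

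Hence
\[
(H\otimes I)U_G(H\otimes I)=(H\otimes I)S_G(H\otimes I)\,(H\otimes I)C_G(H\otimes I)=\operatorname{diag}(S_{G_1}+S_{\mathcal J_1},\,S_{G_1}-S_{\mathcal J_1})\,(I_2\otimes C),
\]
using Proposition~\ref{cor_shift_block_explicit_general} for the shift factor. This equals $\operatorname{diag}\bigl((S_{G_1}+S_{\mathcal J_1})C,\,(S_{G_1}-S_{\mathcal J_1})C\bigr)$. By that same proposition, these diagonal blocks are exactly the shunt-decomposition walks for $A(G_1)\pm A(\mathcal J_1)$, with the combined decomposition $\{R_r\}$ and signs $\varepsilon_r$. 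So they are $U_{G_1+\mathcal J_1}$ and $U_{G_1-\mathcal J_1}$, and this is the case $k=1$ of statement 1. For general $k$, since $(H\otimes I)^2=I$, conjugation by $H\otimes I$ is multiplicative. So
\[
(H\otimes I)U_G^k(H\otimes I)=\bigl((H\otimes I)U_G(H\otimes I)\bigr)^k,
\]
and the $k$-th power of a block diagonal matrix is blockwise. The case $k=0$ is trivial.

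For statement 2, we invert the relation to get $U_G^k=(H\otimes I)\operatorname{diag}(U_+^k,U_-^k)(H\otimes I)$, where $U_\pm:=U_{G_1\pm\mathcal J_1}$. Applying $H\otimes I$ to $e_{a_i}=(e^{(1)}_{a_i},0)^\top$ gives $\tfrac1{\sqrt2}(e^{(1)}_{a_i},e^{(1)}_{a_i})^\top$. The block diagonal operator then sends this to $\tfrac1{\sqrt2}(\gamma e^{(1)}_{a_j},\gamma e^{(1)}_{a_j})^\top$ by the two hypotheses. Applying $H\otimes I$ once more yields $\gamma(e^{(1)}_{a_j},0)^\top=\gamma e_{a_j}$. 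Equivalently, $U_G^k$ has blocks $\tfrac12(U_+^k\pm U_-^k)$, so the off-diagonal block kills $e_{a_i}$ when the two actions agree.

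The main obstacle is the identification step in statement 1, namely that $(S_{G_1}\pm S_{\mathcal J_1})C$ really are well-defined transition operators $U_{G_1\pm\mathcal J_1}$. Three things must be checked here. First, both $S_{G_1}$ and $S_{\mathcal J_1}$ must be embedded in the common $D=d+d_1$ dimensional coin space using disjoint coin indices: $P_i\otimes E_{ii}$ for $i\le d$, and $Q_i\otimes E_{d+i,d+i}$ for the coupling. Second, the coin must be the uniform $D$-dimensional coin on both sheets. Third, the sign-twisted shift $\sum\varepsilon_rR_r\otimes E_{rr}$ must be unitary, which holds because it is a signed permutation matrix. I would make this indexing explicit, since the printed proposition writes $Q_i\otimes E_{ii}$ and glosses over it. Once it is fixed, the rest is formal.
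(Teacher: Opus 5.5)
Your proposal is correct and follows essentially the same route as the paper. You conjugate $U_G=S_GC_G$ by the involution $H\otimes I$, use Proposition~\ref{cor_shift_block_explicit_general} for the shift factor, pass to $k$-th powers via $(H\otimes I)^2=I$, and apply the resulting block form $\tfrac12\bigl(U_{G_1+\mathcal J_1}^{k}\pm U_{G_1-\mathcal J_1}^{k}\bigr)$ to $(e_{a_i}^{(1)},0)^\top$. Two details you make explicit are left implicit in the paper: that $I_2\otimes C$ commutes with $H\otimes I$, and that the coupling shunts sit on coin indices $d+1,\dots,D$ (as in Eq.~\eqref{63}) rather than the overlapping $Q_i\otimes E_{ii}$ printed in the proposition.
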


\begin{proof}
\textbf{(1)}
We have
\begin{equation}
A(G)=\begin{pmatrix}A(G_1)&A(\mathcal{J}_1)\\A(\mathcal{J}_1)&A(G_1)\end{pmatrix}
=I_2\otimes A(G_1)+\begin{pmatrix}0&1\\1&0\end{pmatrix}\otimes A(\mathcal{J}_1).
\end{equation}
Since $H^2=I_2$ diagonalizes $\begin{pmatrix}0&1\\1&0\end{pmatrix}$ as
$H\begin{pmatrix}0&1\\1&0\end{pmatrix}H=\begin{pmatrix}1&0\\0&-1\end{pmatrix}$,
and $HI_2H=I_2$, conjugation gives
\begin{equation}
(H\otimes I_n)\,A(G)\,(H\otimes I_n)
=\begin{pmatrix}A(G_1)+A(\mathcal{J}_1)&0\\0&A(G_1)-A(\mathcal{J}_1)\end{pmatrix}.
\end{equation}
By Proposition~\ref{cor_shift_block_explicit_general}, the shift $S_G$ itself satisfies $S_G=I_2\otimes S_{G_1}+\sigma_x\otimes S_{\mathcal{J}_1}$,
so the same conjugation gives
$(H\otimes I_n)S_G(H\otimes I_n)=\operatorname{diag}(S_{G_1}+S_{\mathcal{J}_1},\,S_{G_1}-S_{\mathcal{J}_1})$,
where $S_{G_1}+S_{\mathcal{J}_1}$ and $S_{G_1}-S_{\mathcal{J}_1}$ are precisely the shifts
of $U_{G_1+\mathcal{J}_1}$ and $U_{G_1-\mathcal{J}_1}$. Hence conjugating $U_G$ itself by
$H\otimes I_n$ gives
\begin{equation}\label{eq.39}
(H\otimes I_n)\,U_G\,(H\otimes I_n)
=\begin{pmatrix}U_{G_1+\mathcal{J}_1}&0\\0&U_{G_1-\mathcal{J}_1}\end{pmatrix}.
\end{equation}
Since $H\otimes I_n$ is an involution, $(H\otimes I_n)^2=I_{2n}$, so
raising both sides of~\eqref{eq.39} to the $k$-th power gives
\begin{equation}\label{eq.40}
(H\otimes I_n)\,U_G^{\,k}\,(H\otimes I_n)
=\begin{pmatrix}U_{G_1+\mathcal{J}_1}^{\,k}&0\\0&U_{G_1-\mathcal{J}_1}^{\,k}\end{pmatrix}.
\end{equation}
Conjugating~\eqref{eq.40} by $H\otimes I_n$ and using $(H\otimes I_n)^2=I_{2n}$
yields
\begin{equation}\label{eq.41}
U_G^{\,k}
=\frac12\begin{pmatrix}
U_{G_1+\mathcal{J}_1}^{\,k}+U_{G_1-\mathcal{J}_1}^{\,k} & U_{G_1+\mathcal{J}_1}^{\,k}-U_{G_1-\mathcal{J}_1}^{\,k}\\[4pt]
U_{G_1+\mathcal{J}_1}^{\,k}-U_{G_1-\mathcal{J}_1}^{\,k} & U_{G_1+\mathcal{J}_1}^{\,k}+U_{G_1-\mathcal{J}_1}^{\,k}
\end{pmatrix}.
\end{equation}
\textbf{(2)}
Applying the operator $U_G^k$ in~\eqref{eq.41} to the initial state
$(e_{a_i}^{(1)},0)$, we obtain
\begin{equation}
U_G^{\,k}\begin{pmatrix}e_{a_i}^{(1)}\\0\end{pmatrix}
=\frac12\begin{pmatrix}
U_{G_1+\mathcal{J}_1}^{\,k}e_{a_i}^{(1)}+U_{G_1-\mathcal{J}_1}^{\,k}e_{a_i}^{(1)}\\[4pt]
U_{G_1+\mathcal{J}_1}^{\,k}e_{a_i}^{(1)}-U_{G_1-\mathcal{J}_1}^{\,k}e_{a_i}^{(1)}
\end{pmatrix}.
\end{equation}
By hypothesis $U_{G_1+\mathcal{J}_1}^{\,k} e_{a_i}^{(1)}=U_{G_1-\mathcal{J}_1}^{\,k}e_{a_i}^{(1)}=\gamma e_{a_j}^{(1)},$ so
\begin{equation}
U_G^{\,k}\begin{pmatrix}e_{a_i}^{(1)}\\0\end{pmatrix}
=\frac{1}{2}\begin{pmatrix}\gamma e_{a_j}^{(1)}+\gamma e_{a_j}^{(1)}\\ \gamma e_{a_j}^{(1)}-\gamma e_{a_j}^{(1)}\end{pmatrix}
=\begin{pmatrix}\gamma e_{a_j}^{(1)}\\0\end{pmatrix}.
\end{equation}
Since $|\gamma|=1$, this shows $U_G^{\,k}$ maps $(e_{a_i}^{(1)},0)$ to
$\gamma\,(e_{a_j}^{(1)},0)$. Thus, PST from $a_i$ to
$a_j$ occurs at time $k$.
\end{proof}

We now establish the relationship between the transition operators
$U_{G_1+\mathcal{J}_1}$ and $U_{G_1-\mathcal{J}_1}$,
 showing that $U_{G_1-\mathcal{J}_1}$ can be obtained from
$U_{G_1+\mathcal{J}_1}$ by conjugation with a fixed diagonal sign matrix
acting on the coin space. This conjugation preserves both PST and periodicity.

\begin{lemma}\label{PST_Double_Cover}
Let \(G_1=G_2\) be a \(d\)-regular directed graph on
\(V(G_1)=\{1,\ldots,n\}\), and let
\(\mathcal{J}=\mathcal{J}_1\cup\mathcal{J}_2\) be a
\(d_1\)-regular directed bipartite coupling graph, with
\(
\mathcal{J}_1=\mathcal{J}_2.
\)
Let
\(
G=G_1\overset{(d_1,d_1)}{\vec{\vee}}G_1.
\)
Then \(G_1+\mathcal{J}_1\) is \((D=d+d_1)\)-regular.
Let \(U_{G_1+\mathcal{J}_1}\) denote the transition operator associated
with the \(D\)-shunt decomposition of \(G_1+\mathcal{J}_1\). Then the
signed matrix
\(
A(G_1-\mathcal{J}_1)
=
A(G_1)-A(\mathcal{J}_1)
\)
admits the transition operator
\(
{
U_{G_1-\mathcal{J}_1}
=
(I_n\otimes\mathcal E)\,
U_{G_1+\mathcal{J}_1}\,
(I_n\otimes\mathcal E),
}
\) where \(
\mathcal E=
\operatorname{diag}
(\underbrace{1,\ldots,1}_{d},
\underbrace{-1,\ldots,-1}_{d_1})
\in\mathbb C^{D\times D}.
\)
Moreover, the following statements hold.

\begin{enumerate}
\item
Suppose that
\(
U_{G_1+\mathcal{J}_1}^{\,k}
\left(e_{a_i}\otimes\phi_0\right)
=
\gamma
\left(e_{a_j}\otimes\phi_1\right),
\quad
|\gamma|=1,
\)
for unit coin states
\(\phi_0,\phi_1\in\mathbb C^{D}\). Then
\[
U_{G_1-\mathcal{J}_1}^{\,k}
\left(e_{a_i}\otimes\mathcal E\phi_0\right)
=
\gamma
\left(e_{a_j}\otimes\mathcal E\phi_1\right).
\]

\item
If \(G_1+\mathcal{J}_1\) is periodic at time \(\tau\), namely
\(
U_{G_1+\mathcal{J}_1}^{\,\tau}
=
\eta I_n\otimes I_D,
\quad |\eta|=1,
\)
then \(U_{G_1-\mathcal{J}_1}\) is also periodic at the same time
\(\tau\) with the same phase \(\eta\).
\end{enumerate}
\end{lemma}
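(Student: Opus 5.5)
The plan is to read the displayed formula as a definition of the signed walk via a specific shunt–coin pair, and then derive the two numbered statements from the fact that \(I_n\otimes\mathcal E\) is a unitary involution.

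\textbf{Regularity and the shift.} By Proposition~\ref{cor_shift_block_explicit_general}, the combined family \(\{R_1,\dots,R_D\}=\{P_1,\dots,P_d,Q_1,\dots,Q_{d_1}\}\) is a shunt decomposition of \(A(G_1)+A(\mathcal J_1)\). Since each \(R_r\) is a permutation matrix, every row and column sum equals \(d+d_1=D\), so \(G_1+\mathcal J_1\) is \(D\)-regular. Its shift is \(S_+=\sum_r R_r\otimes E_{rr}\), and by the same proposition the shift of the signed matrix \(A(G_1)-A(\mathcal J_1)\) is \(S_-=\sum_r\varepsilon_rR_r\otimes E_{rr}\). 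Because \(\mathcal E=\sum_r\varepsilon_rE_{rr}\) is diagonal in the coin basis, \(E_{rr}\mathcal E=\mathcal E E_{rr}=\varepsilon_rE_{rr}\). This gives the key identity
\[
S_-=S_+(I_n\otimes\mathcal E)=(I_n\otimes\mathcal E)S_+ .
\]

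\textbf{The signed transition operator.} Write \(U_{G_1+\mathcal J_1}=S_+(I_n\otimes C)\) with a uniform unitary coin \(C\). Then
\[
(I_n\otimes\mathcal E)\,U_{G_1+\mathcal J_1}\,(I_n\otimes\mathcal E)=S_-\,(I_n\otimes C\mathcal E).
\]
So this operator is exactly a shunt-decomposition walk for the signed matrix \(A(G_1)-A(\mathcal J_1)\): its shift is the signed shift \(S_-\) and its uniform coin is \(C\mathcal E\), which is unitary. I take this as the meaning of ``admits the transition operator''.

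This coin choice is the one delicate step. If one insisted on the same coin \(C\) for the signed walk, the identity would need \([C,\mathcal E]=0\), and this fails for the Grover coin. So I will state explicitly that the signed walk carries the coin \(C\mathcal E\), equivalently \(\mathcal E C\mathcal E\) up to the left sign factor already absorbed into \(S_-\). This is consistent with how the block operator in Theorem~\ref{lem_double_Cover} is formed.

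\textbf{Powers.} Since \((I_n\otimes\mathcal E)^2=I\), conjugation by \(I_n\otimes\mathcal E\) is multiplicative, and induction on \(k\) gives
\[
U_{G_1-\mathcal J_1}^{k}=(I_n\otimes\mathcal E)\,U_{G_1+\mathcal J_1}^{k}\,(I_n\otimes\mathcal E).
\]

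\textbf{Statement (1).} Apply the power identity to \(e_{a_i}\otimes\mathcal E\phi_0\). The right-hand factor sends it to \(e_{a_i}\otimes\mathcal E^2\phi_0=e_{a_i}\otimes\phi_0\). The hypothesis then gives \(\gamma\,(e_{a_j}\otimes\phi_1)\). The left-hand factor produces \(\gamma\,(e_{a_j}\otimes\mathcal E\phi_1)\), which is the claim. Unit norms are preserved because \(\mathcal E\) is unitary.

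\textbf{Statement (2).} If \(U_{G_1+\mathcal J_1}^{\tau}=\eta\, I_n\otimes I_D\), then
\[
U_{G_1-\mathcal J_1}^{\tau}=\eta\,(I_n\otimes\mathcal E)(I_n\otimes\mathcal E)=\eta\, I_n\otimes I_D .
\]
This gives periodicity at the same time \(\tau\) with the same phase \(\eta\).
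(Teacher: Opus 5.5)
Your proof is correct and follows the paper's argument essentially step for step: the shift identity $(I_n\otimes\mathcal E)S_+=S_-$, the signed walk $S_-(I_n\otimes C\mathcal E)$ with unitary coin $C\mathcal E$, multiplicativity of conjugation by the involution $I_n\otimes\mathcal E$, and then statements (1) and (2) by direct substitution.

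Your side remarks on the coin are not load-bearing, but they are inaccurate and you should drop them. Keeping the coin $C$ would force $C\mathcal E=C$, i.e.\ $\mathcal E=I$; if $C$ merely commuted with $\mathcal E$, the conjugate would just be $U_{G_1+\mathcal J_1}$ itself. Also, the choice $C\mathcal E$ is \emph{not} what Theorem~\ref{lem_double_Cover} produces: conjugating the uniform-coin walk $S_G(I_2\otimes I_n\otimes C)$ by $H\otimes I$ gives the lower block $S_-(I_n\otimes C)$, not $S_-(I_n\otimes C\mathcal E)$, so the claimed consistency is false.
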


\begin{proof}
Suppose that $G_1$ is a $d$-regular directed graph and
$\mathcal{J}=\mathcal{J}_1\cup\mathcal{J}_2$ is a $d_1$-regular directed
coupling graph. Then by
Lemma~\ref{lemma 2.1} and Proposition~\ref{prop:signed-coupling-shunt}, $G_1$ and the coupling part $\mathcal{J}_1$ admit
shunt decompositions
\(
A(G_1)=\sum_{i=1}^d P_i,
\quad
A(\mathcal{J}_1)=\sum_{i=1}^{d_1} Q_i,
\)
where $P_i$ and $Q_i$ are permutation matrices.
Hence,
\[
A(G_1)+A(\mathcal{J}_1)
=
\sum_{i=1}^d P_i+\sum_{i=1}^{d_1} Q_i
\]
is the adjacency matrix of a $D$-regular directed graph \(G_1+J_1\). Therefore,
the $D$ permutation matrices
\(
P_1,\ldots,P_d,Q_1,\ldots,Q_{d_1}
\)
form a shunt decomposition of $G_1+\mathcal{J}_1$. On the other hand,
\[
A(G_1)-A(\mathcal{J}_1)
=
\sum_{i=1}^d P_i-\sum_{i=1}^{d_1} Q_i
\]
is the corresponding signed matrix, obtained by assigning positive
signs to the $P_i$-shunts and negative signs to the $Q_i$-shunts.
 Let
\(
S_{G_1+\mathcal{J}_1}
=
\sum_{i=1}^dP_i\otimes E_{ii}
+
\sum_{i=1}^{d_1}Q_i\otimes E_{d_1+i,d_1+i}
\)
be the shift operator associated with the combined $D$-shunt
decomposition. With
\(
\mathcal E
=
\operatorname{diag}
(\underbrace{1,\ldots,1}_{d},
\underbrace{-1,\ldots,-1}_{d_1}),
\)
we have
\(
\mathcal E E_{ii}=E_{ii},
\quad
\mathcal E E_{d_1+i,d_1+i}=-E_{d_1+i,d_1+i}.
\)
Therefore,
\begin{equation}\label{63}
(I_n\otimes\mathcal E)S_{G_1+\mathcal{J}_1}
=
\sum_{i=1}^dP_i\otimes E_{ii}
-
\sum_{i=1}^{d_1}Q_i\otimes E_{d+i,d+i},
=S_{G_1-\mathcal{J}_1}\end{equation}
which is the signed shift corresponding to
$A(G_1)-A(\mathcal{J}_1)$. Let the transition operator for $G_1+\mathcal{J}_1$ be
\begin{equation}\label{64}
U_{G_1+\mathcal{J}_1}
=
S_{G_1+\mathcal{J}_1}(I_n\otimes C),
\end{equation}
where $C$ is a unitary coin operator. Using
\eqref{63} and \eqref{64}, we obtain
\[
\begin{aligned}
U_{G_1-\mathcal{J}_1}=S_{G_1-\mathcal{J}_1}(I_n\otimes C\mathcal E)
&=
(I_n\otimes\mathcal E)
S_{G_1+\mathcal{J}_1}(I_n\otimes C)
(I_n\otimes\mathcal E)
=(I_n\otimes\mathcal E)U_{G_1+\mathcal{J}_1}
\bigl(I_n\otimes \mathcal E\bigr).
\end{aligned}
\]
Since $\mathcal E$ and $C$ are unitary, $C\mathcal E$ is unitary.
Thus $U_{G_1-\mathcal{J}_1}$ is a valid transition operator associated with the
signed matrix $A(G_1)-A(\mathcal{J}_1)$.
Since
\(
\mathcal E^\dagger=\mathcal E,
\quad
\mathcal E^2=I_{D},
\)
we have, for every $t\geq0$,
\begin{equation}\label{eq.74}
U_{G_1-\mathcal{J}_1}^{\,t}
=
(I_n\otimes\mathcal E)
U_{G_1+\mathcal{J}_1}^{\,t}
(I_n\otimes\mathcal E).
\end{equation}
Suppose
\(
U_{G_1+\mathcal{J}_1}^{\,k}
(e_{a_i}^{(1)}\otimes\phi_0)
=
\gamma(e_{a_j}^{(1)}\otimes\phi_1).
\)
Then by~\eqref{eq.74}, we have
\begin{equation}
U_{G_1-\mathcal{J}_1}^{\,k}
(e_{a_i}^{(1)}\otimes\mathcal E\phi_0)
=
(I_n\otimes\mathcal E)
U_{G_1+\mathcal{J}_1}^{\,k}
(e_{a_i}^{(1)}\otimes\phi_0)
=
\gamma(I_n\otimes\mathcal E)
(e_{a_j}^{(1)}\otimes\phi_1)
=
\gamma (e_{a_j}^{(1)}\otimes\mathcal E\phi_1).
\end{equation}
Since $\mathcal E$ is unitary, $\mathcal E\phi_0$ and
$\mathcal E\phi_1$ are unit coin states. Hence PST is preserved at the
same time $k$.
Finally, if
\(
U_{G_1+\mathcal{J}_1}^{\,\tau}
=
\eta I_n\otimes I_{D},
\quad |\eta|=1,
\)
then
\begin{equation}
U_{G_1-\mathcal{J}_1}^{\,\tau}
=
(I_n\otimes\mathcal E)
U_{G_1+\mathcal{J}_1}^{\,\tau}
(I_n\otimes\mathcal E)
=
\eta(I_n\otimes\mathcal E^2)
=
\eta I_n\otimes I_{D}.
\end{equation}
Thus, $U_{G_1-\mathcal{J}_1}$ is periodic at the same time $\tau$ with the same
phase $\eta$.
\end{proof}
We now derive the PST condition for the double cover of the complete graph with a loop at every vertex, arising from the complementary coupling with loops defined in~\eqref{coupling_loops_complementary}, in the case \(d_1=d\).

\begin{cor}\label{cor_U_Uprime_relation} 
Let 
\(
U=U_{G_1'+\mathcal{J}_1'}, 
\quad 
U'=U_{G_1'-\mathcal{J}_1'} 
= 
(I_n\otimes\mathcal E)\,
U\,
(I_n\otimes\mathcal E),
\)
be as in Lemma~\ref{PST_Double_Cover} at the $n$-vertex level. Suppose that $G_1'$ and $\mathcal{J}_1'$ admit
$d$-shunt decompositions with disjoint arc and
\(
A(G_1')+A(\mathcal{J}_1')=A(K_n^{\circlearrowleft}).
\)
Let $U_{J_n}$ and $U_{G_1-\mathcal{J}_1}$ be the corresponding $2n$-level
transition operators
\[
U_{J_n}
=
\frac{1}{\sqrt{2}}
\begin{pmatrix}
U&U\\
-U&U
\end{pmatrix},
\qquad
U_{G_1-\mathcal{J}_1}
=
\frac{1}{\sqrt{2}}
\begin{pmatrix}
U'&U'\\
-U'&U'
\end{pmatrix}.
\]
Then the following conditions hold.
\begin{enumerate}
    \item[(a)] The transition operators $U$ and $U'$ are related by
    \(
    U'
    =
    (I_n\otimes\mathcal{E})\,
    U\,
    (I_n\otimes\mathcal{E}).
    \)

    \item[(b)] The corresponding $2n$-vertex transition operators
    satisfy
    \(
    U_{G_1-\mathcal{J}_1}
    =
    (I_2\otimes I_n\otimes\mathcal{E})\,
    U_{J_n}\,
    (I_2\otimes I_n\otimes\mathcal{E}).
    \)

    \item[(c)]
    If $U^k(e_{a_i}^{(1)}\otimes\phi_0)=\gamma(e_{a_j}^{(1)}\otimes\phi_1) \quad |\gamma|=1$. Then \(U'^k(e_{a_i}^{(1)}\otimes\mathcal E\phi_0)=\gamma (e_{a_j}^{(1)}\otimes\mathcal E\phi_1) \). Moreover, if $U$ is periodic at time $\tau$,
    then $U'$ is periodic at the same time $\tau$.

    \item[(d)]
    Suppose that $\phi_0$ and $\phi_1$ are eigenvectors of $\mathcal E$,
    with
    \(
    \mathcal E\phi_0=\varepsilon_0\phi_0,
    \quad
    \mathcal E\phi_1=\varepsilon_1\phi_1,
    \quad
    \varepsilon_0,\varepsilon_1\in\{+1,-1\}.
    \)
    If $U_{J_n}^{\,k}(e_{a_i}^{(1)}\otimes\phi_0,0)^\top=\gamma\,(e_{a_j}^{(1)}\otimes\phi_1,0)^\top, \quad |\gamma|=1,$ then $U_{G_1-\mathcal{J}_1}^{\,k}(e_{a_i}^{(1)}\otimes\phi_0,0)^\top=\varepsilon_0\varepsilon_1\gamma\,(e_{a_j}^{(1)}\otimes\phi_1,0)^\top$. Thus, the PST phase for $U_{G_1-\mathcal{J}_1}$ is
    $\varepsilon_0\varepsilon_1\gamma$. In particular, the phase
    remains $\gamma$ when $\varepsilon_0=\varepsilon_1$, and changes to
    $-\gamma$ when $\varepsilon_0\neq\varepsilon_1$ with phase
    \(
    \varepsilon_0\varepsilon_1\gamma.
    \)
    In particular, the phase is $\gamma$ when
    $\varepsilon_0=\varepsilon_1$, and $-\gamma$ when
    $\varepsilon_0\neq\varepsilon_1$.
\end{enumerate}
\end{cor}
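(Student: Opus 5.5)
Part (a) needs no new argument: it is Lemma~\ref{PST_Double_Cover} applied at the $n$-vertex level to $G_1'$ and $\mathcal{J}_1'$. The hypotheses there are that $G_1'$ and $\mathcal{J}_1'$ admit $d$-shunt decompositions with disjoint arcs, that the combined shunt decomposition $\{P_1,\dots,P_d,Q_1,\dots,Q_d\}$ of $A(K_n^{\circlearrowleft})=J_n$ yields $U$, and that $\mathcal E=\operatorname{diag}(I_d,-I_d)$. These are exactly the hypotheses of the present statement. So $U'=(I_n\otimes\mathcal E)U(I_n\otimes\mathcal E)$ holds by the identity $(I_n\otimes\mathcal E)S_{G_1'+\mathcal J_1'}=S_{G_1'-\mathcal J_1'}$ established in that proof.

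For (b), write $U_{J_n}=M_{\pi/4}\otimes U$ and $U_{G_1-\mathcal{J}_1}=M_{\pi/4}\otimes U'$ as in \eqref{eq:coupled-transition}. Put $F=I_2\otimes I_n\otimes\mathcal E$. The mixed-product rule gives
\[
F\,(M_{\pi/4}\otimes U)\,F=M_{\pi/4}\otimes (I_n\otimes\mathcal E)U(I_n\otimes\mathcal E)=M_{\pi/4}\otimes U',
\]
using (a). This is blockwise: each of the four blocks $\pm U/\sqrt2$ is conjugated by the same $I_n\otimes\mathcal E$. Part (c) is then statements 1 and 2 of Lemma~\ref{PST_Double_Cover} read off verbatim. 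The only point to note is that $\mathcal E^2=I_D$ gives $U'^{\,t}=(I_n\otimes\mathcal E)U^t(I_n\otimes\mathcal E)$ for all $t$. Applying this to $e_{a_i}^{(1)}\otimes\mathcal E\phi_0$ gives the transfer, and applying it with $U^\tau=\eta I$ gives periodicity at the same time $\tau$.

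For (d), iterate (b) using $F^2=I$ to get $U_{G_1-\mathcal J_1}^{\,k}=F\,U_{J_n}^{\,k}F$. Then apply this to $x=(e_{a_i}^{(1)}\otimes\phi_0,0)^\top$. Since $\mathcal E\phi_0=\varepsilon_0\phi_0$, we have $Fx=\varepsilon_0x$. Hence $U_{G_1-\mathcal J_1}^{\,k}x=\varepsilon_0F\,U_{J_n}^{\,k}x=\varepsilon_0\gamma\,F(e_{a_j}^{(1)}\otimes\phi_1,0)^\top=\varepsilon_0\varepsilon_1\gamma\,(e_{a_j}^{(1)}\otimes\phi_1,0)^\top$. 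The phase $\varepsilon_0\varepsilon_1\gamma$ has modulus one, and it equals $\gamma$ or $-\gamma$ according as $\varepsilon_0=\varepsilon_1$ or not.

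The only step needing care is the bookkeeping of tensor orderings in (b). The coupling factor $M_{\pi/4}$ acts on the outer $\mathbb C^2$ index, while $\mathcal E$ acts on the innermost coin index. I must therefore check that $F$ commutes past $M_{\pi/4}\otimes I$. This holds because the two factors act on different tensor slots. I must also confirm that the coin convention $C\mathcal E$ used in Lemma~\ref{PST_Double_Cover} is the one implicit in defining $U'$, which it is by part (a). Everything else is direct substitution.
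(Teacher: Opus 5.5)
Your proposal is correct and follows essentially the same route as the paper. Parts (a) and (c) are read off from Lemma~\ref{PST_Double_Cover}; (b) is blockwise conjugation by $I_2\otimes I_n\otimes\mathcal E$, for which your mixed-product form $M_{\pi/4}\otimes U$ is just a compact restatement; and (d) iterates (b) using $F^2=I$. The only cosmetic difference is in (d): you use $Fx=\varepsilon_0x$ directly, whereas the paper first derives the transfer for $(e_{a_i}^{(1)}\otimes\mathcal E\phi_0,0)^\top$ and then substitutes the eigenvalues, so your version is slightly cleaner.
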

\begin{proof}
Since
\(
A(K_n^{\circlearrowleft})=J_n
\)
is $n$-regular, while $G_1'$ and $\mathcal{J}_1'$ are each $d$-regular, their
disjoint union is $2d$-regular. Hence
\(
n=2d,
\)
and, in particular, $n$ is even.

\textbf{(a)}
This follows directly from Lemma~\ref{PST_Double_Cover} applied to
the $n$-vertex operators
\(
U=U_{G_1'+\mathcal{J}_1'},
\quad
U'=U_{G_1'-\mathcal{J}_1'},
\)
using the fact that $G_1'$ and $\mathcal{J}_1'$ are $d$-regular with disjoint
arc sets and
\(
A(G_1')+A(\mathcal{J}_1')=A(K_n^{\circlearrowleft}).
\)

\textbf{(b)}
Since $\mathcal E$ acts only on the coin space, the same conjugation $I_2\otimes I_n\otimes\mathcal E$ is applied to each block of 
of $U_J$ . Hence,
\begin{equation}
(I_2\otimes I_n\otimes\mathcal E)\,
U_{J_n}\,
(I_2\otimes I_n\otimes\mathcal E)=
\frac{1}{\sqrt{2}}
\begin{pmatrix}
(I_n\otimes\mathcal E)U(I_n\otimes\mathcal E)
&
(I_n\otimes\mathcal E)U(I_n\otimes\mathcal E)
\\[2mm]
-(I_n\otimes\mathcal E)U(I_n\otimes\mathcal E)
&
(I_n\otimes\mathcal E)U(I_n\otimes\mathcal E)
\end{pmatrix}.
\end{equation}
By part~(a), each block is equal to $U'$. Therefore,
\begin{equation}
(I_2\otimes I_n\otimes\mathcal E)\,
U_{J_n}\,
(I_2\otimes I_n\otimes\mathcal E)
=
U_{G_1-\mathcal{J}_1}.
\end{equation}

\textbf{(c)}
This follows directly from Lemma~\ref{PST_Double_Cover}(2), applied to
the transition operators
\(
U=U_{G_1'+\mathcal{J}_1'}
\quad\text{and}\quad
U'=U_{G_1'-\mathcal{J}_1'}.
\)

\textbf{(d)}
Since $(I_2\otimes I_P\otimes\mathcal E)^2=I_2\otimes I_P\otimes\mathcal E^2=I$
(as $\mathcal E^2=I_C$), induction on $k$ using part (b) gives
\begin{equation}\label{eq.100}
U_{G_1-\mathcal{J}_1}^{\,k}=(I_2\otimes I_P\otimes\mathcal E)\,U_{J_n}^{\,k}\,(I_2\otimes I_P\otimes\mathcal E)
\quad\text{for all }k\ge0.
\end{equation}
Suppose $U_{J_n}^{\,k}(e_{a_i}^{(1)}\otimes\phi_0,0)^\top=\gamma\,(e_{a_j}^{(1)}\otimes\phi_1,0)^\top$.
Using $(I_2\otimes I_P\otimes\mathcal E)(x,0)^\top=((I_P\otimes\mathcal E)x,0)^\top$, we have by~\eqref{eq.100}
\begin{equation}
U_{G_1-\mathcal{J}_1}^{\,k}\big((I_2\otimes I_P\otimes\mathcal E)(e_{a_i}^{(1)}\otimes\phi_0,0)^\top\big)
=(I_2\otimes I_P\otimes\mathcal E)\,\gamma\,(e_{a_j}^{(1)}\otimes\phi_1,0)^\top.
\end{equation} Thus,
\( U_{G_1-\mathcal{J}_1}^{\,k}(e_{a_i}^{(1)}\otimes\mathcal E\phi_0,0)^\top=\gamma\,(e_{a_j}^{(1)}\otimes\mathcal E\phi_1,0)^\top.
\)
Substituting $\mathcal E\phi_0=\varepsilon_0\phi_0$, $\mathcal E\phi_1=\varepsilon_1\phi_1$, and using linearity, we have
\begin{equation}
\varepsilon_0\,U_{G_1-\mathcal{J}_1}^{\,k}(e_{a_i}^{(1)}\otimes\phi_0,0)^\top=\varepsilon_1\gamma\,(e_{a_j}^{(1)}\otimes\phi_1,0)^\top,
\end{equation}
so, since $\varepsilon_0\in\{\pm1\}$ is invertible with $\varepsilon_0^{-1}=\varepsilon_0$, we obtain
\begin{equation}
U_{G_1-\mathcal{J}_1}^{\,k}(e_{a_i}^{(1)}\otimes\phi_0,0)^\top=\varepsilon_0\varepsilon_1\gamma\,(e_{a_j}^{(1)}\otimes\phi_1,0)^\top.
\end{equation}
Hence, PST occurs at the same time $k$, with phase
\(
\varepsilon_0\varepsilon_1\gamma.
\)
If $\varepsilon_0=\varepsilon_1$, then
$\varepsilon_0\varepsilon_1=1$, so the phase remains $\gamma$.
If $\varepsilon_0\neq\varepsilon_1$, then
$\varepsilon_0\varepsilon_1=-1$, so the phase becomes $-\gamma$.
\end{proof}
\begin{cor}\label{cor_combine_thm_and_switching_phi}
Fix $\phi\in\mathbb R$ with
\(
\frac{\phi}{\pi}=\frac{p}{q},
\quad
\gcd(p,q)=1,
\)
and let $U,U',U_{J_n},U_{G_1-\mathcal{J}_1}$ be the $\phi$-analogues of the
operators in Corollary~\ref{cor_U_Uprime_relation}. Let the minimum period of $G$ be
\[
\tau_{\min}
=
\operatorname{lcm}\!\left(
q,\frac{2v}{\gcd(lu,2v)}
\right),
\qquad
s=\frac{\tau_{\min}}{q}\in\mathbb Z_{>0},
\]
as in Lemma~\ref{lem:periodicity-coupled-walk}.
Suppose   $U^k(e_{a_i}^{(1)}\otimes\phi_0)=\gamma(e_{a_j}^{(1)}\otimes\phi_1) \quad |\gamma|=1$,
where $\phi_0,\phi_1$ are eigenvectors of $\mathcal E$ satisfying
\(
\mathcal E\phi_0=\varepsilon_0\phi_0,
\quad
\mathcal E\phi_1=\varepsilon_1\phi_1,
\quad
\varepsilon_0,\varepsilon_1\in\{+1,-1\}.
\)
Let $U_{G_1\ltimes \mathcal{J}_1}$ denote the corresponding $4n$-level
double-cover transition operator.  Then $U_{G_1\ltimes \mathcal{J}_1}$ exhibits PST at time $k_{\min}$, where the
transfer between the two graphs \(G_1\) and \(\mathcal{J}_1\) is determined by the eigenvalues
$\varepsilon_0$ and $\varepsilon_1$ of $\mathcal E$ as follows:

\begin{enumerate}
\item[(1)]
If $\varepsilon_0=\varepsilon_1$, then
\(
U_{G_1\ltimes \mathcal{J}_1}^{\,k_{\min}}
\begin{pmatrix}
(e_{a_i}^{(1)}\otimes\phi_0,0)^\top\\
0
\end{pmatrix}
=
\begin{pmatrix}
\gamma(e_{a_j}^{(1)}\otimes\phi_1,0)^\top\\
0
\end{pmatrix},
\quad
|\gamma|=1.
\)

\item[(2)]
If $\varepsilon_0\neq\varepsilon_1$, then
\(
U_{G_1\ltimes \mathcal{J}_1}^{\,k_{\min}}
\begin{pmatrix}
(e_{a_i}^{(1)}\otimes\phi_0,0)^\top\\
0
\end{pmatrix}
=
\begin{pmatrix}
0\\
\gamma(e_{a_j}^{(1)}\otimes\phi_1,0)^\top
\end{pmatrix},
\quad
|\gamma|=1.
\)
\end{enumerate}

\end{cor}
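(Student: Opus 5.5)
The plan is to combine the double-cover identity of Theorem~\ref{lem_double_Cover} with the sign-switching relation of Corollary~\ref{cor_U_Uprime_relation}(d). First fix the time. By Lemma~\ref{lem:periodicity-coupled-walk} and Theorem~\ref{thm:pst-internal}, the admissible internal-transfer times for the coupled operator $U_{J_n}$ are the multiples of $q$ below $\tau_{\min}=sq$. Let $k_{\min}$ be the smallest such $k$ with $U^k(e_{a_i}^{(1)}\otimes\phi_0)=\gamma(e_{a_j}^{(1)}\otimes\phi_1)$. For this $k_{\min}$ we have $\sin(k_{\min}\phi)=0$, so
\[
U_{J_n}^{k_{\min}}(e_{a_i}^{(1)}\otimes\phi_0,0)^\top=\pm\gamma\,(e_{a_j}^{(1)}\otimes\phi_1,0)^\top .
\]
We absorb the sign $\cos(k_{\min}\phi)=\pm1$ into $\gamma$; it still has modulus one. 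Write $x=(e_{a_i}^{(1)}\otimes\phi_0,0)^\top$ and $y=(e_{a_j}^{(1)}\otimes\phi_1,0)^\top$.

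Next, apply Corollary~\ref{cor_U_Uprime_relation}(d), which rests on part (b) and on $\mathcal E^2=I$. At the same time $k_{\min}$ it gives
\[
U_{G_1-\mathcal J_1}^{k_{\min}}x=\varepsilon_0\varepsilon_1\gamma\, y .
\]
So the "sum" operator and the "difference" operator send $x$ to the same state $y$, with phases $\gamma$ and $\varepsilon_0\varepsilon_1\gamma$ respectively.

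Now lift to the $4n$-level double cover. Theorem~\ref{lem_double_Cover}(1) is applied with $U_{J_n}$ in the role of $U_{G_1+\mathcal J_1}$ and $U_{G_1-\mathcal J_1}$ in the role of the difference operator. Conjugating back by $H\otimes I$, as in \eqref{eq.41}, gives
\[
U_{G_1\ltimes\mathcal J_1}^{k}\begin{pmatrix}x\\0\end{pmatrix}
=\frac12\begin{pmatrix}U_{J_n}^kx+U_{G_1-\mathcal J_1}^kx\\ U_{J_n}^kx-U_{G_1-\mathcal J_1}^kx\end{pmatrix}.
\]
At $k=k_{\min}$ the two vectors inside are $\gamma y$ and $\varepsilon_0\varepsilon_1\gamma y$. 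If $\varepsilon_0=\varepsilon_1$, they agree, the lower block cancels, and the upper block is $\gamma y$. This is case (1), which is exactly Theorem~\ref{lem_double_Cover}(2). If $\varepsilon_0\neq\varepsilon_1$, they are $\gamma y$ and $-\gamma y$, so the upper block cancels and the lower block is $\gamma y$. This is case (2): the state moves to the second copy of the cover.

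The main obstacle is the bookkeeping of identifications rather than any estimate. Theorem~\ref{lem_double_Cover} requires the shift of the cover to be $I_2\otimes S_{G_1}+\sigma_x\otimes S_{\mathcal J_1}$ with the same coin on both sheets, as in Proposition~\ref{cor_shift_block_explicit_general}. We must check that the $2n$-level operators $U_{J_n}$ and $U_{G_1-\mathcal J_1}$ built from the $\phi$-rotation are genuinely the sum and difference operators of that cover. That means verifying that the rotation factor $M_\phi$ acts on a tensor slot distinct from both the sheet index and the coin slot on which $\mathcal E$ acts. I would handle this by ordering the tensor factors as sheet $\otimes\, M_\phi$-slot $\otimes$ vertex $\otimes$ coin, noting that $H\otimes I$ and $I\otimes\mathcal E$ act on disjoint factors and hence commute. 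A secondary check is that minimality of $k_{\min}$ carries over. It does, since by the displayed formula PST for the cover at time $k$ forces $U_{J_n}^kx$ and $U_{G_1-\mathcal J_1}^kx$ to be equal or opposite unimodular multiples of one state, which reduces to the $U$-level condition.
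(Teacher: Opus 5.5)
Your proposal is correct and follows the paper's proof step for step. Theorem~\ref{thm:pst-internal} fixes $k_{\min}\in\{q,\ldots,(s-1)q\}$; Corollary~\ref{cor_U_Uprime_relation}(d) transports the transfer to $U_{G_1-\mathcal J_1}$ with phase $\varepsilon_0\varepsilon_1\gamma$; and the block formula \eqref{eq.41} of Theorem~\ref{lem_double_Cover} splits the two cases. Absorbing the sign $\cos(k_{\min}\phi)=\pm1$ into $\gamma$ is a detail the paper glosses over.

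The one thing to correct is the justification in your last paragraph. With literally the same coin on both sheets, the lower $H$-block would be $(I_2\otimes I_n\otimes\mathcal E)U_{J_n}$, not $(I_2\otimes I_n\otimes\mathcal E)U_{J_n}(I_2\otimes I_n\otimes\mathcal E)$, because Lemma~\ref{PST_Double_Cover} gives the difference walk the coin $C\mathcal E$. So, exactly as the paper does, you should take the $4n$-level operator to be the one whose $H$-conjugate is $\operatorname{diag}(U_{J_n},U_{G_1-\mathcal J_1})$. Trying to derive this from the commutation of $H\otimes I$ with $I\otimes\mathcal E$ does not work.
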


\begin{proof}
Let
\(
e_{a_i}^{(1)'}=e_{a_i}^{(1)}\otimes\phi_0,
\quad
e_{a_j}^{(1)'}=e_{a_j}^{(1)}\otimes\phi_1.
\) Since $U$ exhibits PST at time $k$,
by Theorem~\ref{thm:pst-internal},  the corresponding
\(\phi\)-coupled operator satisfies
\begin{equation}\label{dash_J_}
U_{J_n}^{k_{\min}}
\begin{pmatrix}
e_{a_i}^{(1)'}\\
0
\end{pmatrix}
=
\gamma
\begin{pmatrix}
e_{a_j}^{(1)'}\\
0
\end{pmatrix},
\qquad
|\gamma|=1
\end{equation} where
\(
k\equiv0\pmod q
\)
is possible only when $s\geq2$, and the possible times within one
period are
\(
q,2q,\ldots,(s-1)q.
\)
Hence the minimum PST time satisfies
\(
k_{\min}\in\{q,2q,\ldots,(s-1)q\}.
\)
By the $\phi$-analogue of
Corollary~\ref{cor_U_Uprime_relation}(d),
\begin{equation}\label{dash_2_J}
U_{G_1-\mathcal{J}_1}^{\,k_{\min}}
\begin{pmatrix}
e_{a_i}^{(1)'}\\
0
\end{pmatrix}
=
\varepsilon_0\varepsilon_1\gamma
\begin{pmatrix}
e_{a_j}^{(1)'}\\
0
\end{pmatrix}.
\end{equation}
By the double-cover construction in
Lemma~\ref{lem_double_Cover}, the $4n$-level transition operator
satisfies
\begin{equation}
U_{G_1\ltimes \mathcal{J}_1}^{\,k_{}}
\begin{pmatrix}
(e_{a_i}^{(1)'},0)^\top\\
0
\end{pmatrix}
=
\frac12
\begin{pmatrix}
U_{J_n}^k(e_{a_i}^{(1)'},0)^\top
+
U_{G_1-\mathcal{J}_1}^k(e_{a_i}^{(1)'},0)^\top
\\[2mm]
U_{J_n}^k(e_{a_i}^{(1)'},0)^\top
-
U_{G_1-\mathcal{J}_1}^k(e_{a_i}^{(1)'},0)^\top
\end{pmatrix}.
\end{equation}
Setting \(k=k_{\min}\) and using~\eqref{dash_J_} and~\eqref{dash_2_J}, we obtain
\begin{equation}
U_{G_1\ltimes \mathcal{J}_1}^{\,k_{\min}}
\begin{pmatrix}
(e_{a_i}^{(1)'},0)^\top\\
0
\end{pmatrix}
=
\frac12
\begin{pmatrix}
(1+\varepsilon_0\varepsilon_1)
\gamma(e_{a_j}^{(1)'},0)^\top
\\[2mm]
(1-\varepsilon_0\varepsilon_1)
\gamma(e_{a_j}^{(1)'},0)^\top
\end{pmatrix}.
\end{equation}
If $\varepsilon_0=\varepsilon_1$, then
\(
\varepsilon_0\varepsilon_1=1,
\)
and therefore
\begin{equation}
U_{G_1\ltimes \mathcal{J}_1}^{\,k_{\min}}
\begin{pmatrix}
(e_{a_i}^{(1)'},0)^\top\\
0
\end{pmatrix}
=
\begin{pmatrix}
\gamma(e_{a_j}^{(1)'},0)^\top\\
0
\end{pmatrix}.
\end{equation}
This gives case~(1).
If $\varepsilon_0\neq\varepsilon_1$, then
\(
\varepsilon_0\varepsilon_1=-1,
\)
and hence
\begin{equation}
U_{G_1\ltimes \mathcal{J}_1}^{\,k_{\min}}
\begin{pmatrix}
(e_{a_i}^{(1)'},0)^\top\\
0
\end{pmatrix}
=
\begin{pmatrix}
0\\
\gamma(e_{a_j}^{(1)'},0)^\top
\end{pmatrix}.
\end{equation}
This gives case~(2). Thus, in both cases, the $4n$-level operator
exhibits PST at time $k_{\min}$.

It remains to determine a period of the $4n$-level operator. By the
definition of $\tau_{\min}$, the $2n$-level operator $U_{J_n}$ satisfies
\(
U_{J_n}^{\tau_{\min}}=\eta'I_{2n}
\)
for some $|\eta'|=1$. Indeed, the eigenvalues of $U_{J_n}$ are of the
form
\(
e^{i(\theta_r\pm\phi)},
\)
where $e^{i\theta_r}$ are the eigenvalues of $U$ by Proposition~\ref{prop:general-angle}. Since
$\tau_{\min}$ is a period of the underlying coupled walk,
\(
e^{i\tau_{\min}\theta_r}
\)
is constant across $r$, and since
\(
\tau_{\min}=qs,
\)
we have
\(
\tau_{\min}\phi=sp\pi\in\pi\mathbb Z.
\)
Therefore,
\(
e^{\pm i\tau_{\min}\phi}
=
(-1)^{sp},
\)
so all eigenvalues of $U_J^{\tau_{\min}}$ are equal to the same
unimodular constant. Hence,
\[
e^{i\tau_{\min}(\theta_r\pm\phi)} = e^{i\tau_{\min}\theta_r}\,e^{\pm i\tau_{\min}\phi} = \eta\,(-1)^{sp} = \eta'
\]
is a single unimodular constant, independent of both $r$ and the choice of sign, giving
$U_{J_n}^{\,\tau_{\min}}=\eta' I_{2n}$.
By the $\phi$-analogue of Corollary~\ref{cor_U_Uprime_relation}(c), the
same argument applied to $U'$ gives $U_{G_1-\mathcal{J}_1}^{\,\tau_{\min}}=\eta' I_{2n}$ as well,
independent of the sign match between $\varepsilon_0,\varepsilon_1$. Substituting these into
Lemma~\ref{lem_double_Cover}(1) at $k=\tau_{\min}$ gives $U_{G_1\ltimes \mathcal{J}_1}^{\,\tau_{\min}}=\eta' I_{4n}$,
so ${G_1\ltimes \mathcal{J}_1}$ is periodic at $\tau_{\min}$ in both cases. 

\end{proof}

\begin{remark}[Special case $\phi=\pi/4$]
At $\phi=\pi/4$ ($p=1,q=4$, so $\tau_{\min}=8=qs$ with $s=2$), Corollary~\ref{cor_combine_thm_and_switching_phi}
specializes to; if $U^k(e_{a_i}^{(1)}\otimes\phi_0)=\gamma(e_{a_j}^{(1)}\otimes\phi_1), |\gamma|=1$  at some time
$k\equiv0\pmod4$, then ${G_1\ltimes \mathcal{J}_1}$ exhibits PST at time $k=4$ with minimal period
$\tau_{\min}=8$, within \(G_1\) or \(\mathcal{J}_1\) according to whether $\varepsilon_0=\varepsilon_1$. Here $sp=2$ is even, so the sign factor $(-1)^{sp}=1$ and $\eta'=\eta$ by
Proposition~\ref{prop:general-angle}, $U_J$'s eigenvalues are $\theta_r\pm\pi/4$ for $U$'s eigenvalues
$\theta_r$, so
\(
e^{i\cdot8(\theta_r\pm\pi/4)} = e^{i8\theta_r}\,e^{\pm i2\pi} = e^{i8\theta_r} = \eta
\)
for every $r$, giving $U_{J_n}^8=\eta I_{2n}$. By Corollary~\ref{cor_U_Uprime_relation}(c), the
same argument applied to $U'$ gives $U_{G_1-\mathcal{J}_1}^8=\eta I_{2n}$, independent of the sign match
between $\varepsilon_0,\varepsilon_1$. Substituting into Lemma~\ref{lem_double_Cover}(1) at
$k=8$ gives $U_{G_1\ltimes \mathcal{J}_1}^8=\eta I_{4n}$, confirming ${G_1\ltimes \mathcal{J}_1}$ is periodic at
$\tau_{min}=8$ in both cases.
\end{remark}
\begin{remark}
The two cases of Corollary~\ref{cor_combine_thm_and_switching_phi} give
qualitatively different types of PST, depending on whether
$\varepsilon_0$ and $\varepsilon_1$ are equal. In \textit{Case (1)},
where $\varepsilon_0=\varepsilon_1$, internal PST occurs. In
\textit{Case (2)}, where $\varepsilon_0\neq\varepsilon_1$, coupling PST
occurs. In both cases, the transfer is perfect and occurs at the same
time $k_{\min}$, with the same minimum period $\tau_{\min}$.
\end{remark}
\section{PST in the Partial Join of the Complete Graph \(K_n\), \(n=2^m\), \(m\geq2\)}\label{sec8}

 In this section, we show that although \(K_n\), \(n\geq4\), does not
exhibit PST for the shunt walk considered below, a
suitable partial join can exhibit PST through the double-cover
construction.

Let \(n\geq4\) be even. Then \(K_n\) is \((n-1)\)-regular. By the
\(1\)-factorization of \(K_n\), its adjacency matrix admits the
decomposition
\(
A(K_n)=P_1+\cdots+P_{n-1},
\)
where each \(P_r\) is the permutation matrix of a perfect matching.
Consequently,
\(
P_r^T=P_r,\quad P_r^{-1}=P_r,\quad P_r^2=I_n,
\)
and the corresponding shift operator
\(
S=\sum_{r=1}^{n-1}P_r\otimes E_{rr}
\)
satisfies
\(
S^2=I.
\)
Thus, the shift is an involution, as in the Grover-walk framework
\cite{KubotaSegawa2022}. The corresponding transition operator is
\(
U_{K_n}
=
S\left(\frac{2}{n-1}J_{n-1}-I_{n-1}\right).
\)
It is known that, in the Grover-walk setting, PST
among complete graphs occurs only for \(K_2\) and \(K_3\)
\cite[Sec.~4.3]{KubotaSegawa2022}. Hence \(K_n\) does not exhibit PST
for \(n\geq4\). In Proposition~\ref{prop:noPST_Kn_cyclic}, we further
establish this non-transfer property directly for the walk considered
here.
\begin{lemma}\label{prop:noPST_Kn_cyclic}
Let $K_n$, $n\ge4$, be the complete graph with the cyclic shunt decomposition
$A(K_n)=P+P^2+\cdots+P^{n-1}$, where $P$ is the cyclic shift matrix ($P^n=I_n$).
Let $d=n-1$ and
\[
S_{K_n}=\sum_{r=1}^{d}P^r\otimes E_{rr},\qquad
C_d=\tfrac2dJ_d-I_d,\qquad
U_{K_n}=S_{K_n}(I_n\otimes C_d).
\]
Let $u=\tfrac1{\sqrt d}\mathbf 1_d$ and $\psi_{a_i}=e_{a_i}\otimes u$. Then for distinct vertices ${a_i},a_j$,
\(
U_{K_n}^{\,k}\psi_{a_i}\neq\gamma\,\psi_{a_j}\qquad(k\ge1,\ |\gamma|=1).
\)
Hence, $K_n$ has no PST between distinct vertices $a_i$ and $a_j$.
\end{lemma}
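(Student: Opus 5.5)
The plan is to diagonalize the vertex part by the discrete Fourier transform, reduce the PST condition to a family of conditions on the coin space (one for each Fourier mode), and then show that these conditions cannot all hold.

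\textbf{Step 1: Fourier reduction.} Since every shunt is a power of the same cyclic shift $P$, the operator $U_{K_n}$ commutes with $P\otimes I_d$. Let $\omega=e^{2\pi i/n}$ and let $f_j$ be the Fourier vector with $Pf_j=\omega^j f_j$. Then $U_{K_n}$ leaves each subspace $f_j\otimes\mathbb C^d$ invariant, and on it acts as
\[
W_j=D_jC_d,\qquad D_j=\operatorname{diag}(\omega^{j},\omega^{2j},\ldots,\omega^{dj}).
\]
Expanding $\psi_{a_i}=e_{a_i}\otimes u$ in the Fourier basis gives
\[
\langle \psi_{a_j},U_{K_n}^k\psi_{a_i}\rangle=\frac1n\sum_{j'=0}^{n-1}\omega^{j'(a_i-a_j)}\,\langle u,W_{j'}^k u\rangle ,
\]
up to the sign convention in the exponent. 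For $j'=0$ we have $D_0=I$ and $C_du=u$, so that term equals $1$. Every term has modulus at most $1$. Hence $|\langle\psi_{a_j},U^k\psi_{a_i}\rangle|=1$ forces equality in the triangle inequality, which gives
\[
\langle u,W_{j'}^ku\rangle=\omega^{-j'(a_i-a_j)}\quad\text{for every } j'.
\]

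\textbf{Step 2: symmetry forces reality.} Let $\sigma$ be the coin permutation $r\mapsto n-r$; it permutes $\{1,\dots,d\}$. It fixes $u$, commutes with $C_d$, and satisfies $\sigma D_{j'}\sigma^{-1}=D_{-j'}=\overline{D_{j'}}$. Since $C_d$ and $u$ are real, it follows that $\langle u,W_{j'}^ku\rangle$ equals its own complex conjugate, so it is real. Therefore $\omega^{a_i-a_j}$ must be real, i.e.\ $\omega^{a_i-a_j}=\pm1$. For odd $n$ this already contradicts $a_i\neq a_j$. For even $n$, the only remaining case is the antipodal one, $a_j=a_i+n/2$.

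\textbf{Step 3: the mode $j'=n/2$ (the main obstacle).} This step handles the remaining even case, where we need $|\langle u,W_{n/2}^ku\rangle|=1$. Here $D:=D_{n/2}=\operatorname{diag}((-1)^r)$, so $D^2=I$. Since $d=n-1$ is odd, $\langle u,Du\rangle=-1/d$. The plane $\operatorname{span}\{u,Du\}$ is invariant under $W=DC_d$, because
\[
Wu=Du,\qquad W(Du)=-\tfrac2d\,Du-u .
\]
On this plane $W$ has trace $-2/d$ and determinant $1$, so its eigenvalues are $e^{\pm i\theta}$ with $\cos\theta=-1/d$. The quantity $\langle u,W^ku\rangle$ is then computable explicitly. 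It has modulus $1$ only if $u$ is an eigenvector of $W^k$, that is, only if $e^{ik\theta}=e^{-ik\theta}$, i.e.\ $k\theta\in\pi\mathbb Z$. This requires $\theta/\pi\in\mathbb Q$. But $2\cos\theta=-2/d$ is a rational number that is not an integer when $d\geq3$. By Niven's theorem (equivalently, $2\cos\theta$ would have to be an algebraic integer), $\theta/\pi$ is irrational. This is a contradiction for every $k\geq1$.

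Hence no PST exists between distinct vertices. The delicate points are checking that the $2\times2$ reduction really captures $\langle u,W^ku\rangle$, and that the phase $\gamma$ was legitimately normalised to $1$ by the $j'=0$ term.
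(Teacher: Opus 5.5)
Your proof is correct, and it takes a genuinely different route from the paper's. Both use the same Fourier block decomposition into $W_\ell=D_\ell C_d$, but the paper works only with the mode $\ell=1$:
\begin{itemize}
\item Because the entries $\omega,\dots,\omega^{d}$ of $D_1$ are distinct, every eigenvector of $W_1$ has nonzero overlap with $u$. So every eigenspace is one-dimensional and $u$ has a component in each, and $W_1^ku=\gamma_1u$ forces $W_1^k=\gamma_1 I_d$.
\item An explicit $(-1)$-eigenvector then gives $\gamma_1=(-1)^k$, so all eigenvalues are roots of unity and $\operatorname{tr}W_1=(n-3)/(n-1)$ would have to be an algebraic integer.
\end{itemize}
That argument is uniform in the parity of $n$ and never uses $a_i\neq a_j$, so it also excludes periodicity at $\psi_{a_i}$.

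You proceed differently. You fix every modal overlap exactly; your normalization $\gamma=1$ is legitimate, and comparing Fourier components of the vector identity $U^k\psi_{a_i}=\gamma\psi_{a_j}$ gives the same conclusion even more directly. You then use the reflection $r\mapsto n-r$ to force those overlaps to be real, which disposes of odd $n$. For even $n$ you exploit that $D_{n/2}$ is an involution to reduce to the $2\times2$ matrix $\left(\begin{smallmatrix}0&-1\\1&-2/d\end{smallmatrix}\right)$ on $\operatorname{span}\{u,Du\}$, and finish with Niven.

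The points you flagged are fine. The vector $u$ lies in the invariant plane, and $Wu=Du\notin\mathbb C u$ ensures $u$ has nonzero components on both eigenvectors. Hence $|\langle u,W^ku\rangle|=1$ really forces $e^{2ik\theta}=1$.

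Your route is more elementary. It needs no argument about one-dimensional eigenspaces and no identity $\sum_r(1-\omega^r)^{-1}=(n-1)/2$. The unknown phase also drops out, because only the eigenvalue ratio $e^{2i\theta}$ matters. The cost is a parity split. Note that for even $n$ your Step~3 alone suffices and also excludes periodicity, so Step~2 is needed only for odd $n$.
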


\begin{proof}
Let
\(
\omega=e^{2\pi i/n},\quad
z_r=\omega^r\quad(r=1,\ldots,d),\quad
D=\operatorname{diag}(z_1,\ldots,z_d).
\)
Let
\(
f_\ell=\frac1{\sqrt n}\,(\omega^{\ell j})_{j=1}^{n},\quad \ell=0,\ldots,n-1,
\)
so that \(Pf_\ell=\omega^{\ell}f_\ell\), and let \(F\) be the unitary matrix with columns
\(f_0,\ldots,f_{n-1}\). Suppose, for contradiction, that
\(
U_{K_n}^{\,k}\psi_{a_i}=\gamma\psi_{a_j},
\quad a_i\neq a_j,\quad k\ge1,\quad |\gamma|=1.
\)
The Fourier transform \(F^\dagger\otimes I_d\) on the vertex factor diagonalizes each power of \(P\)
and leaves \(I_n\otimes C_d\) invariant. Hence
\[
U_{K_n}\sim\bigoplus_{\ell=0}^{n-1}U_\ell,
\qquad
U_\ell=D_\ell C_d,
\qquad
D_\ell=\operatorname{diag}(\omega^{\ell},\omega^{2\ell},\ldots,\omega^{d\ell}).
\]
Since
\(
(F^\dagger\otimes I_d)\psi_{a_i}=\frac1{\sqrt n}\sum_{\ell=0}^{n-1}\omega^{-a\ell}\,(f_\ell\otimes u),
\)
the assumed relation \(U_{K_n}^{k}\psi_{a_i}=\gamma\psi_{a_j}\) gives,
\[
U_\ell^{\,k}u=\gamma\,\omega^{(a_i-a_j)\ell}u=\gamma_\ell u,\qquad|\gamma_\ell|=1.
\]
In particular \(U_1^{\,k}u=\gamma_1u\), \(D_1=\operatorname{diag}(\omega^{1},\omega^{2},\ldots,\omega^{d})\) and since
\(C_d=\frac2d\mathbf 1\mathbf 1^T-I_d\),
\(
U_1=-D_1+\frac2d\,D_1\mathbf 1\mathbf 1^{T}.
\)
Suppose \(U_1v=\lambda v\) with \(v\neq0\) and \(\mathbf 1^Tv=0\). Then \(U_1v=-D_1v\), so
\(
(\lambda+z_r)v_r=0,\quad r=1,\ldots,d.
\)
Since \(z_1,\ldots,z_d\) are distinct, \(\lambda=-z_r\) for at most one \(r\), so \(v\) has at most one
nonzero entry \(v_r\). Then \(\mathbf 1^Tv=v_r=0\), so \(v=0\), a contradiction.
Thus every eigenvector \(v\) of \(U_1\) satisfies \(\mathbf 1^Tv\neq0\), i.e.\ has nonzero overlap with
\(u\) (since \(u^\dagger v=\tfrac1{\sqrt d}\mathbf 1^Tv\)).
Since \(U_1\) is unitary, its eigenspaces are mutually orthogonal. If
an eigenspace had dimension greater than one, it would contain a
nonzero vector orthogonal to \(u\), which is impossible. Hence every
eigenspace is one-dimensional. Let
\(
v_1,\ldots,v_d
\)
be an orthonormal eigenbasis with
\(
U_1v_j=\lambda_jv_j.
\)
Since \(u\) has nonzero projection onto every eigenspace,
\(
u=\sum_{j=1}^d c_jv_j,
\quad c_j\neq0.
\)
Thus
\[
U_1^{\,k}u
=
\sum_{j=1}^d c_j\lambda_j^kv_j
=
\gamma_1\sum_{j=1}^dc_jv_j,
\]
and hence
\(
\lambda_j^k=\gamma_1
\quad
(j=1,\ldots,d).
\)
Therefore
\(
U_1^{\,k}=\gamma_1 I_d.
\)
Define
\(
\widetilde v=\Bigl(\frac{z_1}{z_1-1},\ldots,\frac{z_d}{z_d-1}\Bigr)^{T}.
\)
Using \(\sum_{r=1}^{n-1}\frac1{1-z_r}=\frac{n-1}2\) and \(\frac z{z-1}=1-\frac1{1-z}\), we get
\(\mathbf 1^T\widetilde v=\frac d2\). Hence
\[
(U_1\widetilde v)_r=-z_r\widetilde v_r+\frac2d\,z_r\,\mathbf 1^T\widetilde v
=-\frac{z_r^2}{z_r-1}+z_r=-\frac{z_r}{z_r-1}=-\widetilde v_r,
\]
so \(U_1\widetilde v=-\widetilde v\). Thus, \(-1\) is an eigenvalue of \(U_1\), and \(U_1^{\,k}=\gamma_1I_d\)
gives \(\gamma_1=(-1)^k\). Therefore
\(
(-\lambda_j)^k=(-1)^k\lambda_j^{k}=1\quad(j=1,\ldots,d),
\)
so every \(\lambda_j\) is a root of unity, hence an algebraic integer.
Since \(\sum_{r=1}^{n-1}\omega^r=-1\),
\[
\operatorname{tr}(U_1)=\Bigl(\frac2d-1\Bigr)\sum_{r=1}^{n-1}\omega^r=\frac{n-3}{n-1}=1-\frac2{n-1}.
\]
Since \(\operatorname{tr}(U_1)=\sum_j\lambda_j\), it must be an
algebraic integer. But
\(
\frac{n-3}{n-1}
=
1-\frac2{n-1}
\)
is rational and satisfies
\(
0<\frac{n-3}{n-1}<1
\quad (n\ge4),
\)
so it is not an integer and therefore cannot be an algebraic integer.
This contradiction proves that PST cannot occur. Hence
\(
{U_{K_n}^{\,k}\psi_{a_i}\neq\gamma\psi_{a_j}\quad(a_i\neq a_j,\ k\ge1,\ |\gamma|=1).}
\)
Thus, there is no PST from \(\psi_{a_i}=e_{a_i}\otimes\frac1{\sqrt d}\mathbf 1_d\) to \(\psi_{a_j}\) for \(a_i\neq a_j\).
\end{proof}
We next show that the complete graph with a loop at every vertex,
whose adjacency matrix is \(J_n\), exhibits PST when
\(n=2^m\). The proof follows the same argument as in
Lemma 8.2 of~\cite{Katuwal2026pst}.
\begin{lemma}\label{lem:PST_Jn_Grover}
Let $n=2^m, m\geq 2$ and let $P$ be the cyclic shift matrix $n\times n$, so that $J_n=\sum_{r=0}^{n-1}P^r$ is the adjacency
matrix of the complete graph with a loop at every vertex. With the shunt decomposition $P_r=P^{r-1}$
($r=1,\dots,n$), let
\[
S_{J_n}=\sum_{r=0}^{n-1}P^r\otimes E_{r+1,r+1},\qquad
C_n=\tfrac2nJ_n-I_n,\qquad
U_{J_n}=S_{J_n}(I_n\otimes C_n).
\]
Then
\(
U_{J_n}^{\,n}=P^{n/2}\otimes I_n .
\)
Consequently, $U_{J_n}^{\,n}(e_{a_i}\otimes\phi)=e_{a_{i+n/2}}\otimes\phi$ for every coin state $\phi\in\mathbb C^n$
(indices mod $n$), so the walk has perfect state transfer at time $k=n$ between the vertices $a_i$ and $a_{i+n/2}$.
\end{lemma}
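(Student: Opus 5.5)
The plan is to block-diagonalize $U_{J_n}$ with the discrete Fourier transform on the vertex factor, exactly as in the proof of Lemma~\ref{prop:noPST_Kn_cyclic}, and then show that every Fourier block raised to the $n$-th power is the scalar $(-1)^\ell$. Let $\omega=e^{2\pi i/n}$ and let $F$ have columns $f_\ell$ with $Pf_\ell=\omega^\ell f_\ell$. Conjugating by $F\otimes I_n$ fixes $I_n\otimes C_n$, so
\[
U_{J_n}\sim\bigoplus_{\ell=0}^{n-1}U_\ell,\qquad U_\ell=D_\ell C_n,\qquad D_\ell=\operatorname{diag}(1,\omega^{\ell},\omega^{2\ell},\ldots,\omega^{(n-1)\ell}).
\]
Since $P^{n/2}f_\ell=\omega^{\ell n/2}f_\ell=(-1)^\ell f_\ell$, we have $P^{n/2}\otimes I_n\sim\bigoplus_\ell(-1)^\ell I_n$. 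The claim is therefore equivalent to $U_\ell^{\,n}=(-1)^\ell I_n$ for every $\ell$. The PST statement $U_{J_n}^{\,n}(e_{a_i}\otimes\phi)=e_{a_{i+n/2}}\otimes\phi$ then follows at once.

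For $\ell=0$ we have $U_0=C_n$. Since $C_n^2=I$ and $n$ is even, $U_0^{\,n}=I$.

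For $\ell\neq0$, let $h=n/\gcd(\ell,n)$ be the order of $\omega^\ell$. Then $h=2^j$ with $j\ge1$, and the diagonal entries $z_r=\omega^{\ell r}$ run through the $h$-th roots of unity, each exactly $n/h$ times. Because $U_\ell$ is unitary, it is diagonalizable, so it suffices to list its eigenvalues. Writing $U_\ell=-D_\ell+\tfrac2n D_\ell\mathbf 1\mathbf 1^T$, there are two kinds of eigenvectors.

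\emph{Eigenvectors with $\mathbf 1^Tv=0$.} Here $U_\ell v=-D_\ell v$. A nonzero such $v$ is supported on the coordinates sharing one value $z$, with eigenvalue $-z$. This gives the eigenvalue $-z$ with multiplicity $n/h-1$ for each $h$-th root of unity $z$. These occur only when $n/h\ge2$.

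\emph{Eigenvectors with $\mathbf 1^Tv\ne0$.} Solving gives $v_r\propto z_r/(\lambda+z_r)$ together with the secular equation
\[
1=\frac2n\sum_r\frac{z_r}{z_r+\lambda}=\frac2n\cdot\frac nh\sum_{z^h=1}\frac{1}{1-tz^{-1}},\qquad t=-\lambda .
\]
Using $\sum_{z^h=1}(1-tz^{-1})^{-1}=h/(1-t^h)$, this becomes $1=2/(1-t^h)$, that is, $(-\lambda)^h=-1$. This yields $h$ distinct eigenvalues. Together with the first kind, the count is $h(n/h-1)+h=n$, so the spectrum is complete.

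Now raise everything to the $n$-th power. Eigenvalues of the first kind satisfy $\lambda^n=z^n=1$. Those of the second kind satisfy $\lambda^n=(-\lambda)^n=(-1)^{n/h}=(-1)^{\gcd(\ell,n)}$. This is where $n=2^m$ is essential, since then $\gcd(\ell,n)$ is a power of $2$. If $\ell$ is odd, then $h=n$, $\gcd(\ell,n)=1$, the first kind is absent, and every eigenvalue has $\lambda^n=-1=(-1)^\ell$. If $\ell$ is even, then $\gcd(\ell,n)\ge2$ is even, so both kinds give $\lambda^n=1=(-1)^\ell$. Hence $U_\ell^{\,n}=(-1)^\ell I_n$ in all cases, and conjugating back gives $U_{J_n}^{\,n}=P^{n/2}\otimes I_n$.

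The main obstacle is the spectral bookkeeping for $U_\ell$: evaluating the secular sum over roots of unity counted with multiplicity $n/h$, and confirming that the two families of eigenvectors exhaust all $n$ dimensions. A secondary point to check is that the parity matching $(-1)^{\gcd(\ell,n)}=(-1)^\ell$ really relies on $n$ being a power of two, which explains the hypothesis in the statement.
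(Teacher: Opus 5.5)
Your proof is correct and follows the paper's route. Both use Fourier block-diagonalization to $U_\ell=D_\ell C_n$, reduce to $U_\ell^{\,n}=(-1)^\ell I_n$, and split by the parity of $\ell$; the only difference is that you get the spectrum of $U_\ell$ by enumerating eigenvectors (the family with $\mathbf 1^Tv=0$ plus the secular equation) and counting dimensions, whereas the paper reads off the same spectrum from the matrix determinant lemma, $\det(\lambda I-U_\ell)=(\lambda^M-1)^{g-1}(\lambda^M+1)$ with $g=\gcd(\ell,n)$, $M=n/g$.

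One small correction to your closing remark: $(-1)^{\gcd(\ell,n)}=(-1)^\ell$ holds for every even $n$. The power-of-two hypothesis is really needed to force $\gcd(\ell,n)=1$ for odd $\ell$, so that no first-kind eigenvalues (with $\lambda^n=1$) appear, which is exactly how your odd-$\ell$ case uses it; for instance, with $n=6$, $\ell=3$ the block $U_3^{\,6}$ has both eigenvalues $1$ and $-1$.
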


\begin{proof}
Let
\(
\omega=e^{2\pi i/n},\quad
f_\ell=\frac1{\sqrt n}(\omega^{\ell j})_{j=0}^{n-1},\quad \ell=0,\ldots,n-1,
\)
so that $Pf_\ell=\omega^{\ell}f_\ell$. The Fourier transform in the vertex factor diagonalizes every
$P^r$ and leaves $I_n\otimes C_n$ unchanged. Hence, it blocks diagonalizes $U_{J_n}$ as
\[
U_{J_n}\sim\bigoplus_{\ell=0}^{n-1}U_\ell,\qquad
U_\ell=D_\ell C_n,\qquad
D_\ell=\operatorname{diag}(1,\omega^{\ell},\omega^{2\ell},\ldots,\omega^{(n-1)\ell}).
\]
We prove that for $n=2^m$, $\;U_\ell^{\,n}=(-1)^{\ell}I_n$ for $\ell=0,\ldots,n-1$.
Each $U_\ell$ is a product of a diagonal unitary and a reflection, hence unitary and therefore diagonalizable.
So $U_\ell^{\,n}$ is determined by the $n$-th powers of its eigenvalues.
\emph{Case $\ell=0$.} $U_0=C_n$ and $C_n^2=I_n$. Since $n$ is even, $U_0^{\,n}=I_n$.
\emph{Case $1\le\ell\le n-1$.} Let $g=\gcd(\ell,n)$ and $M=n/g$. Since $n=2^m$ and $1\le\ell\le n-1$,
$g$ is a power of $2$ with $g<n$. Hence $M\ge2$ is a power of $2$, in particular even. Moreover $g=1$ if $\ell$ is odd,
and $g$ is even if $\ell$ is even. The diagonal entries $d_j=\omega^{\ell j}$ of $D_\ell$ are the $M$-th roots of unity,
each repeated $g$ times.
We have $U_\ell=-D_\ell+\frac2nD_\ell\mathbf 1\mathbf 1^{T}$. By the matrix determinant lemma, for
\(\lambda\notin\{-d_1,\ldots,-d_n\}\),
\[
\det(\lambda I-U_\ell)
=
\det(\lambda I+D_\ell)
\left(
1-\frac2n\sum_{j=0}^{n-1}
\frac{d_j}{\lambda+d_j}
\right).
\]
Since $M$ is even, $\prod_{\zeta^M=1}(\lambda+\zeta)=\lambda^M-1$, so $\det(\lambda I+D_\ell)=(\lambda^M-1)^g$. Also
\[
\sum_{\zeta^M=1}\frac{\zeta}{\lambda+\zeta}=M-\lambda\frac{M\lambda^{M-1}}{\lambda^M-1}=-\frac{M}{\lambda^M-1},
\]
and each root occurs $g$ times, so $\sum_j\frac{d_j}{\lambda+d_j}=-\frac{n}{\lambda^M-1}$. Therefore
\[
\det(\lambda I-U_\ell)=(\lambda^M-1)^g\Bigl(1+\frac2{\lambda^M-1}\Bigr)=(\lambda^M-1)^{g-1}(\lambda^M+1).
\]
 If $\ell$ is odd, then $g=1$, $M=n$, and every eigenvalue satisfies $\lambda^n=-1$. Hence $U_\ell^{\,n}=-I_n$.
 If $\ell$ is even, then $g$ is even and every eigenvalue satisfies $\lambda^M=\pm1$, so
$\lambda^n=(\lambda^M)^g=1$. Hence $U_\ell^{\,n}=I_n$.
 Since $\omega^{\ell n/2}=(-1)^\ell$, we have $P^{n/2}f_\ell=(-1)^\ell f_\ell$. Thus
$U_{J_n}^{\,n}$ and $P^{n/2}\otimes I_n$ both act as the scalar $(-1)^\ell$ on $f_\ell\otimes\mathbb C^n$ for every $\ell$, so
\[
U_{J_n}^{\,n}=P^{n/2}\otimes I_n.
\]
Finally, $Pe_j=e_{j-1}$, so $P^{n/2}e_{a_i}=e_{a_{i-n/2}}=e_{a_{i+n/2}}$ (mod $n$), and
\(
U_{J_n}^{\,n}(e_{a_i}\otimes\phi)=e_{a_{i+n/2}}\otimes\phi\quad\text{for every }\phi\in\mathbb C^n .
\)
Thus, PST occurs at time $k=n$ between $a_i$ and $a_{i+n/2}$.
\end{proof}
\begin{Theorem}\label{thm:PST_partial_join_Kn}
Let \(n=2^m\), \(m\ge2\), and let
\(
G=K_n\overset{(1,1)}{\vec{\vee}}K_n,
\)
equivalently,
\(
G=K_n\ltimes\mathcal J_1,
\)
where
\(
\mathcal J_1=\mathcal J_2
\)
and each of \(\mathcal J_1,\mathcal J_2\) has one outgoing arc from
each vertex. Then \(G\) exhibits PST at time
\(k=n\).
\end{Theorem}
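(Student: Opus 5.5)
The plan is to realize $G=K_n\ltimes\mathcal J_1$ as a double cover of $K_n^{\circlearrowleft}$, and then combine Lemma~\ref{lem:PST_Jn_Grover} with Theorem~\ref{lem_double_Cover} and Lemma~\ref{PST_Double_Cover}.

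\textbf{Choice of coupling.} Since each vertex has exactly one outgoing coupling arc, we take $A(\mathcal J_1)=I_n$. This is the loop complementing the cyclic decomposition $A(K_n)=P+\cdots+P^{n-1}$. Then $A(K_n)\circ A(\mathcal J_1)=0$ and
\[
A(K_n)+A(\mathcal J_1)=\sum_{r=0}^{n-1}P^r=J_n,
\]
so $G$ is a double cover of $K_n^{\circlearrowleft}$ with $D=d+d_1=n$.

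\textbf{Ordering the shunts.} We order the combined shunt decomposition as $\{R_r\}=\{P^1,\dots,P^{n-1},P^0\}$ with the Grover coin $C_n$. After a relabeling of the coin basis, this gives exactly the operator $U_{J_n}$ of Lemma~\ref{lem:PST_Jn_Grover}. The Grover coin is invariant under coordinate permutation, so relabeling only conjugates $U_{J_n}$ by $I_n\otimes\Pi$ for a permutation $\Pi$. Because $P^{n/2}\otimes I_n$ commutes with this conjugation, we still have $U_{G_1+\mathcal J_1}^{\,n}=P^{n/2}\otimes I_n$.

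\textbf{Transferring to the difference walk.} By Lemma~\ref{PST_Double_Cover},
\[
U_{G_1-\mathcal J_1}=(I_n\otimes\mathcal E)\,U_{G_1+\mathcal J_1}\,(I_n\otimes\mathcal E),
\]
where $\mathcal E=\mathrm{diag}(1,\dots,1,-1)$. Since $P^{n/2}\otimes I_n$ commutes with $I_n\otimes\mathcal E$ and $\mathcal E^2=I$, it follows that $U_{G_1-\mathcal J_1}^{\,n}=P^{n/2}\otimes I_n$ as well. This holds for every coin state, so no eigenvector restriction on $\phi_0$ is needed.

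\textbf{Conclusion via the double cover.} Theorem~\ref{lem_double_Cover}(1) gives
\[
(H\otimes I)\,U_G^{\,n}\,(H\otimes I)=I_2\otimes P^{n/2}\otimes I_n,
\]
and hence $U_G^{\,n}=I_2\otimes P^{n/2}\otimes I_n$. Applying this to $(e_{a_i}\otimes\phi,0)^\top$ yields $(e_{a_{i+n/2}}\otimes\phi,0)^\top$. This is the hypothesis of Theorem~\ref{lem_double_Cover}(2) with $\gamma=1$, so $G$ has internal PST at time $k=n$ between $a_i$ and $a_{i+n/2}$, and likewise between $b_i$ and $b_{i+n/2}$. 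By contrast, Lemma~\ref{prop:noPST_Kn_cyclic} shows that $K_n$ alone has no PST.

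\textbf{Main obstacle.} The delicate step is making the shift of $G$ exactly $I_2\otimes S_{K_n}+\sigma_x\otimes S_{\mathcal J_1}$ with a single $n$-dimensional coin. That is, the $(n-1)$-dimensional walk on $K_n$ and the $1$-dimensional coupling walk must be embedded in a common coin space of dimension $D=n$, as in Proposition~\ref{cor_shift_block_explicit_general}. I would handle this by defining $U_G$ directly as $S_G(I_{2n}\otimes C_n)$. The Hadamard conjugation then acts only on the copy index, and the remaining identification is checked by the relabeling argument above.
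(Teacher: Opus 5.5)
Your route is the same as the paper's: realize $G$ as the double cover of $K_n^{\circlearrowleft}$ with $A(\mathcal J_1)=I_n$, take the sum block from Lemma~\ref{lem:PST_Jn_Grover}, take the difference block from Lemma~\ref{PST_Double_Cover}, and recombine with Theorem~\ref{lem_double_Cover}. Your coin-relabeling remark is correct. However, the step you flag as the main obstacle is a genuine gap, and the paper's proof has the same gap when it combines \eqref{eq.41} with Lemma~\ref{PST_Double_Cover}.

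With $U_G=S_G(I_{2n}\otimes C_n)$ and $S_G=I_2\otimes S_{K_n}+\sigma_x\otimes S_{\mathcal J_1}$, the Hadamard conjugation commutes with the coin. So the second diagonal block of $(H\otimes I)U_G(H\otimes I)$ is
\[
(S_{K_n}-S_{\mathcal J_1})(I_n\otimes C_n)=(I_n\otimes\mathcal E)\,U_{G_1+\mathcal J_1}.
\]
The operator of Lemma~\ref{PST_Double_Cover} is instead $(I_n\otimes\mathcal E)\,U_{G_1+\mathcal J_1}\,(I_n\otimes\mathcal E)$; that is, it carries the coin $C_n\mathcal E$ rather than $C_n$. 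No vertex-local coin on $G$ fixes this. Block-diagonality after the Hadamard conjugation forces both blocks to carry the same coin. A block coin $\operatorname{diag}(I_n\otimes C_n,\,I_n\otimes C_n\mathcal E)$ pulls back to an operator with copy-mixing part $\tfrac12 I_n\otimes C_n(I-\mathcal E)\neq0$, which moves amplitude from $a_i$ to $b_i$ and so is not a coin on $G$. Hence Theorem~\ref{lem_double_Cover}(1) and Lemma~\ref{PST_Double_Cover} never describe the same $U_G$, and $U_G^{\,n}=I_2\otimes P^{n/2}\otimes I_n$ does not follow.

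In fact that identity is false. After Fourier-diagonalizing the vertex factor, the $\ell$-th block of $(I_n\otimes\mathcal E)U_{G_1+\mathcal J_1}$ is $\widetilde D_\ell C_n$, where $\widetilde D_\ell=\operatorname{diag}(\omega^{\ell},\dots,\omega^{(n-1)\ell},-1)$. For $\ell=0$, the matrix determinant lemma gives
\[
\det(\lambda I-\mathcal E C_n)=(\lambda+1)^{n-2}\Bigl(\lambda^2-\tfrac{2(n-2)}{n}\lambda+1\Bigr),
\]
so $\mathcal E C_n$ has eigenvalues $e^{\pm i\theta}$ with $\cos\theta=1-2/n$.

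\begin{itemize}
\item For $n=4$, $\theta=\pi/3$, so $e^{\pm4i\theta}\neq1$.
\item For $n\ge8$, $2\cos\theta=2-4/n$ is a non-integral rational, so $e^{i\theta}$ is not even a root of unity.
\end{itemize}

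Thus $(\mathcal E C_n)^n\neq I$, even though $P^{n/2}$ fixes $f_0$.

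For $n=4$ the conclusion of the theorem itself fails for this walk. The $\ell=1$ block $\operatorname{diag}(i,-1,-i,-1)C_4$ has characteristic polynomial $(\lambda-1)(\lambda+1)(\lambda^2-\lambda+1)$. So both the $\ell=0$ and $\ell=1$ blocks have fourth powers with spectrum $\{1,1,e^{\pm4\pi i/3}\}$, which contains no $-1$. The sum block already sends $e_{a_i}\otimes\phi$ to $e_{a_{i+2}}\otimes\phi$, which fixes the only possible target. Transfer into the first copy would need the fourth power of the $\ell=1$ block to act as $-1$ on $\phi$. Transfer into the second copy would need the fourth power of the $\ell=0$ block to act as $-1$ on $\phi$. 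Neither is possible.

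So with the Grover coin and cyclic shunts, $U_G^{\,4}$ sends no $(e_{a_i}\otimes\phi,0)^\top$ to a vertex state. A correct proof would need a genuinely different walk on $G$, with its difference block $S_{K_n-\mathcal J_1}(I_n\otimes C)$ analysed directly rather than obtained by conjugation with $\mathcal E$.
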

\begin{proof}
Consider the complementary coupling with loops defined by
\(
A(\mathcal J_1)=J_n-A(K_n).
\)
Since
\(
A(K_n)=J_n-I_n,
\)
we have
\(
A(\mathcal J_1)=I_n.
\)
Thus, each of \(\mathcal J_1\) and \(\mathcal J_2\) is \(1\)-regular,
while the coupling graph
\(\mathcal J=\mathcal J_1\cup\mathcal J_2\) is \(1\)-regular directed graph. In
particular,
\(
2\neq n-1=\deg(K_n),
\)
so the coupling degree need not equal the degree of the component
graphs.
Moreover,
\(
A(K_n)+A(\mathcal J_1)=J_n,
\)
 hence, the corresponding double-cover decomposition in
\eqref{eq.41} gives
\[
U_G^{\,n}
=
\frac12
\begin{pmatrix}
U_{J_n}^{\,n}+U_{K_n-\mathcal J_1}^{\,n}
&
U_{J_n}^{\,n}-U_{K_n-\mathcal J_1}^{\,n}
\\[4pt]
U_{J_n}^{\,n}-U_{K_n-\mathcal J_1}^{\,n}
&
U_{J_n}^{\,n}+U_{K_n-\mathcal J_1}^{\,n}
\end{pmatrix}.
\]
By Lemma~\ref{lem:PST_Jn_Grover}, the Grover walk on \(J_n\),
for \(n=2^m\), satisfies
\(
U_{J_n}^{\,n}(e_{a_i}\otimes\phi)
=
e_{a_{i+n/2}}\otimes\phi
\)
for every \(\phi\in\mathbb C^n\). By
Lemma~\ref{PST_Double_Cover},
the signed component satisfies the same transfer relation,
\[
U_{K_n-\mathcal J_1}^{\,n}(e_{a_i}\otimes\phi)
=
e_{a_{i+n/2}}\otimes\phi,
\]
because \(\mathcal E^2=I_n\) and the PST relation for \(J_n\) holds for
arbitrary coin states.

Therefore,
\[
U_G^{\,n}
\begin{pmatrix}
e_{a_i}\otimes\phi\\
0
\end{pmatrix}
=
\frac12
\begin{pmatrix}
U_{J_n}^{\,n}(e_{a_i}\otimes\phi)
+
U_{K_n-\mathcal J_1}^{\,n}(e_{a_i}\otimes\phi)
\\
U_{J_n}^{\,n}(e_{a_i}\otimes\phi)
-
U_{K_n-\mathcal J_1}^{\,n}(e_{a_i}\otimes\phi)
\end{pmatrix}
\\
=
\frac12
\begin{pmatrix}
2(e_{a_{i+n/2}}\otimes\phi)
\\
0
\end{pmatrix}
\\
=
\begin{pmatrix}
e_{a_{i+n/2}}\otimes\phi\\
0
\end{pmatrix}.
\]
Hence
\(
{
U_G^{\,n}
\begin{pmatrix}
e_{a_i}\otimes\phi\\
0
\end{pmatrix}
=
\begin{pmatrix}
e_{a_{i+n/2}}\otimes\phi\\
0
\end{pmatrix}.
}
\)
Thus, the partial join \(G\) exhibits perfect state transfer at time
\(k=n\) between the antipodal vertices \(a_i\) and
\(a_{i+n/2}\). In contrast, by
Lemma~\ref{prop:noPST_Kn_cyclic}, the individual \(K_n\) does not
exhibit PST for \(n\ge4\).
\end{proof}
Hence, this construction provides an example in which a suitable
partial join exhibits PST even though the individual component
\(K_n\) does not (see Fig.~\ref{fig:K4-partial-join-matching}).
\begin{figure}[t]
\centering
\begin{tikzpicture}[
scale=1.0, transform shape,
every node/.style={circle, draw, fill=white, inner sep=1.2pt, minimum size=15pt, font=\small},
AtoB/.style={blue, semithick, ->, >=stealth, dash pattern=on 2pt off 1.2pt},
BtoA/.style={red,  semithick, ->, >=stealth, dash pattern=on 2pt off 1.2pt},
Gedge/.style={thick, <->, >=stealth}
]
\node (a1) at (0, 1.5)  {$a_1$};
\node (a2) at (1, 0)    {$a_2$};
\node (a3) at (0, -1.5) {$a_3$};
\node (a4) at (-1, 0)   {$a_4$};
\draw[Gedge] (a1)--(a2);
\draw[Gedge] (a2)--(a3);
\draw[Gedge] (a3)--(a4);
\draw[Gedge] (a4)--(a1);
\draw[Gedge] (a1)--(a3);   
\draw[Gedge] (a2)--(a4);   
\node (b1) at (4, 1.5)  {$b_1$};
\node (b2) at (5, 0)    {$b_2$};
\node (b3) at (4, -1.5) {$b_3$};
\node (b4) at (3, 0)    {$b_4$};
\draw[Gedge] (b1)--(b2);
\draw[Gedge] (b2)--(b3);
\draw[Gedge] (b3)--(b4);
\draw[Gedge] (b4)--(b1);
\draw[Gedge] (b1)--(b3);   
\draw[Gedge] (b2)--(b4);   
\draw[AtoB] (a1) to[bend left=15] (b1);
\draw[BtoA] (b1) to[bend left=15] (a1);
\draw[AtoB] (a3) to[bend left=15] (b3);
\draw[BtoA] (b3) to[bend left=15] (a3);
\draw[AtoB] (a2) to[bend left=30] (b2);
\draw[BtoA] (b2) to[bend left=30] (a2);
\draw[AtoB] (a4) to[bend left=30] (b4);
\draw[BtoA] (b4) to[bend left=30] (a4);
\begin{scope}[shift={(6.2,0.5)}]
    \draw[Gedge] (0,0.8) -- (0.6,0.8) node[draw=none, fill=none, right, font=\scriptsize] {edges of $G_i$};
    \draw[blue, semithick, ->, >=stealth] (0,0.4) -- (0.6,0.4) node[draw=none, fill=none, right, font=\scriptsize] {$a_i \to b_i$};
    \draw[red, semithick, ->, >=stealth] (0,0) -- (0.6,0) node[draw=none, fill=none, right, font=\scriptsize] {$b_i \to a_i$};
\end{scope}
\end{tikzpicture}
\caption{The directed partial join $K_4\overset{(1,1)}{\vec\vee}K_4$ with matching
coupling $a_i\to b_i$ and $b_i\to a_i$ ($i=1,\dots,4$). Each vertex of $G_1$ and $G_2$ sends and
receives exactly one arc, so the graph $\mathcal{J}=\mathcal{J}_1\cup\mathcal{J}_2$ is
perfect matchings ($1$-regular bipartite subgraphs of $K_{4,4}$), matching the $(1,1)$ degree label.}
\label{fig:K4-partial-join-matching}
\end{figure}
\begin{ex}{The case \(n=4\)}
Let
\(
P=
\begin{pmatrix}
0&1&0&0\\
0&0&1&0\\
0&0&0&1\\
1&0&0&0
\end{pmatrix},
\quad
P^4=I_4.
\)
Then
\(
J_4=I_4+P+P^2+P^3.
\)
Thus, we use the cyclic shunt decomposition
\(
P_1=I_4,\quad P_2=P,\quad P_3=P^2,\quad P_4=P^3,
\)
with shift operator
\(
S_{J_4}
=
I_4\otimes E_{11}
+P\otimes E_{22}
+P^2\otimes E_{33}
+P^3\otimes E_{44}.
\)
Let
\(
C_4=\frac12J_4-I_4
=
\frac12
\begin{pmatrix}
-1&1&1&1\\
1&-1&1&1\\
1&1&-1&1\\
1&1&1&-1
\end{pmatrix},
\)
and define
\(
U_{J_4}
=
S_{J_4}(I_4\otimes C_4).
\)
Equivalently,
\(
U_{J_4}
=
\frac12
\begin{pmatrix}
-I_4&I_4&I_4&I_4\\
P&-P&P&P\\
P^2&P^2&-P^2&P^2\\
P^3&P^3&P^3&-P^3
\end{pmatrix}.
\)
A direct multiplication gives
\[
{
U_{J_4}^{\,4}
=
P^2\otimes I_4
=
\begin{pmatrix}
0&0&I_4&0\\
0&0&0&I_4\\
I_4&0&0&0\\
0&I_4&0&0
\end{pmatrix}.
}
\]
Since
\(
P^2e_{a_i}=e_{a_{i+2}},
\quad i\pmod 4,
\)
we obtain, for every \(\phi\in\mathbb C^4\),
\(
{
U_{J_4}^{\,4}(e_{a_i}\otimes\phi)
=
e_{a_{i+2}}\otimes\phi.
}
\)
Hence the walk on \(J_4\) exhibits PST at time
\(k=4\), with
\(
a_1\leftrightarrow a_3,
\quad
a_2\leftrightarrow a_4.
\)
Now take the complementary-loop coupling
\(
A(\mathcal J_1)=I_4.
\)
Then
\[
A(K_4-\mathcal J_1)
=
A(K_4)-A(\mathcal J_1)
=
J_4-2I_4
=
P+P^2+P^3-I_4.
\]
Let
\(
\mathcal E=
\operatorname{diag}(1,1,1,-1),
\)
and define the signed shift
\[
S_{K_4-\mathcal J_1}
=
P\otimes E_{11}
+P^2\otimes E_{22}
+P^3\otimes E_{33}
-I_4\otimes E_{44}.
\]
Following Lemma~\ref{PST_Double_Cover}, the corresponding
transition operator is
\(
U_{K_4-\mathcal J_1}
=
S_{K_4-\mathcal J_1}
(I_4\otimes C_4\mathcal E).
\)
Since
\(
S_{K_4-\mathcal J_1}
=
(I_4\otimes\mathcal E)S_{J_4},
\)
we have
\[
U_{K_4-\mathcal J_1}
=
(I_4\otimes\mathcal E)
S_{J_4}
(I_4\otimes C_4)
(I_4\otimes\mathcal E)
=
(I_4\otimes\mathcal E)
U_{J_4}
(I_4\otimes\mathcal E).
\]
Therefore
\(
U_{K_4-\mathcal J_1}^{\,4}
=
(I_4\otimes\mathcal E)
U_{J_4}^{\,4}
(I_4\otimes\mathcal E).
\)
Using
\(
U_{J_4}^{\,4}=P^2\otimes I_4
\)
and \(\mathcal E^2=I_4\), we obtain
\[
{
U_{K_4-\mathcal J_1}^{\,4}
=
P^2\otimes I_4
=
\begin{pmatrix}
0&0&I_4&0\\
0&0&0&I_4\\
I_4&0&0&0\\
0&I_4&0&0
\end{pmatrix}.
}
\]
Consequently,
\(
{
U_{K_4-\mathcal J_1}^{\,4}
(e_{a_i}\otimes\mathcal E\phi)
=
e_{a_{i+2}}\otimes\mathcal E\phi
}
\)
for every \(\phi\in\mathbb C^4\). Hence, the signed component also
exhibits PST at time \(k=4\) from
\(
a_1\leftrightarrow a_3,
\quad
a_2\leftrightarrow a_4.
\)
\end{ex}
\section{Conclusion}\label{sec9}
In this paper, we construct transition operators for directed partial
join graphs with signed coupling within the shunt-decomposition
framework. We establish necessary and sufficient conditions for internal
PST, coupling PST, and periodicity. We further develop a double-cover
construction to derive conditions for PST and periodicity in the
resulting graphs, including cases in which the associated transition
operators do not necessarily commute. The framework applies to several
important graph families, including complete graphs with loops,
circulant partial joins, complete bipartite graphs, tensor powers of
complete bipartite graphs, and double covers of complete graphs with
loops. In particular, we provide an example in which the complete graph
\(K_n\), \(n\geq4\), does not exhibit PST, whereas a suitable partial
join of \(K_n\) exhibits PST. Moreover, the partial join operation can induce PST even when neither constituent graph admits PST individually.
 A summary of the graph families
considered in this work, together with their transition-operator
structures and the corresponding PST results is given in
Table~\ref{tab:graph-families-pst}.

\begin{table}[h]
\centering
\small
\begin{tabular}{|p{3.6cm}|p{4.6cm}|p{5.6cm}|}
\hline
\textbf{Transition operator} & \textbf{PST results} &
\textbf{Graph families } \\
\hline
$U_{G_1}=U_{G_2}=U_{\mathcal J_1}=U_{\mathcal J_2}$
&
Theorems~\ref{thm:pst-coupling} and~\ref{thm:pst-internal}
&
$K_n^{\circlearrowleft}$~\eqref{signed adjacency matrices};
circulant partial joins (Prop.~\ref{prop:general-even-n})
\\
\hline
$U_{G_1}=U_{G_2}=0$, \; $U_{\mathcal J_1}=U_{\mathcal J_2}$
&
Theorems~\ref{thm:pst-coupling-phi-pi2} and~\ref{thm:pst-internal-phi-pi2}
&
$K_{n,n}$~\eqref{signed adjacency matrices};
$(K_{n,n})^{\otimes n}$ (Prop.~\ref{prop:offdiag-perm-sum});
$K_2^{\otimes n}$ (Cor.~\ref{cor.4.8})
\\
\hline
$U_{G_1}=U_{G_2}$, \; $U_{\mathcal J_1}=U_{\mathcal J_2}$
\newline (with $A(\mathcal J_1)+A(G_1)=J_n$)
&
If $d_{G_1}=d_{\mathcal J_1}$: Corollary~\ref{cor_combine_thm_and_switching_phi} (signed coupling).
\newline
If $d_{G_1}\neq d_{\mathcal J_1}$: Theorem~\ref{thm:PST_partial_join_Kn} (not signed).
&
Double cover of $K_n^{\circlearrowleft}$ (Eq.~\eqref{coupling_loops_complementary}).
\newline
The directed partial join
\(
G=K_n\overset{(1,1)}{\vec{\vee}}K_n
\)
\\
\hline
\end{tabular}
\caption{Transition-operator structures, corresponding PST results, and graph families
in the directed partial join framework for even $n$. Here $d_{G_1}$ and $d_{\mathcal J_1}$
denote the degrees of $G_1$ and $\mathcal J_1$, and $A(\mathcal J_1)+A(G_1)=J_n$ in the third row.}
\label{tab:graph-families-pst}
\end{table}
\printbibliography

@article{angeles2009perfect,
  title={Perfect state transfer, integral circulants and join of graphs},
  author={R. J. Angeles-Canul and R. Norton and M. Opperman and C. Paribello and M. Russell and C. Tamon},
  journal={arXiv preprint arXiv:0907.2148},
  year={2009}
}

@book{Godsil_Zhan_2023,
  title     = {Discrete Quantum Walks on Graphs and Digraphs},
  author    = {C. Godsil and H. Zhan},
  year      = {2023},
  series    = {London Mathematical Society Lecture Note Series},
  publisher = {Cambridge University Press},
  address   = {Cambridge},
}

@book{GodsilRoyle2001,
  author    = {C. Godsil and G. Royle},
  title     = {Algebraic Graph Theory},
  publisher = {Springer},
  year      = {2001},
}

@article{godsil2021sedentary,
  title={Sedentary quantum walks},
  author={C. Godsil},
  journal={Linear Algebra and its Applications},
  volume={614},
  pages={356--375},
  year={2021},
}

@article{stiebitz1993colouring,
  title={On colouring partial joins of a complete graph and a cycle},
  author={M. Stiebitz and W. Wessel},
  journal={Mathematische Nachrichten},
  volume={163},
  number={1},
  pages={109--116},
  year={1993},
}

@inproceedings{Aharonov2001,
  author    = {D. Aharonov and A. Ambainis and J. Kempe and U. Vazirani},
  title     = {Quantum Walks on Graphs},
  booktitle = {Proceedings of the 33rd Annual ACM Symposium on Theory of Computing (STOC)},
  pages     = {50--59},
  year      = {2001},
  publisher = {ACM},
}

@article{GodsilZhan2019,
  author  = {C. Godsil and H. Zhan},
  title   = {Discrete-time Quantum Walks and Graph Structures},
  journal = {Journal of Combinatorial Theory, Series A},
  volume  = {167},
  pages   = {181--212},
  year    = {2019},
  issn    = {0097-3165},
}

@article{coutinho2016perfect,
  title={Perfect state transfer in products and covers of graphs},
  author={G. Coutinho and C. Godsil},
  journal={Linear and Multilinear Algebra},
  volume={64},
  number={2},
  pages={235--246},
  year={2016},
}

@article{vlachou2018quantum,
  title={Quantum key distribution with quantum walks: C. Vlachou et al.},
  author={C. Vlachou and W. Krawec and P. Mateus and N. Paunkovi{\'c} and A. Souto},
  journal={Quantum Information Processing},
  volume={17},
  number={11},
  pages={288},
  year={2018},
}

@book{portugal2013quantum,
  title={Quantum walks and search algorithms},
  author={R. Portugal},
  volume={19},
  year={2013},
  publisher={Springer}
}

@article{childs2009universal,
  title={Universal computation by quantum walk},
  author={A. M. Childs},
  journal={Physical Review Letters},
  volume={102},
  number={18},
  pages={180501},
  year={2009},
  publisher={APS}
}

@article{lovett2010universal,
  title={Universal quantum computation using the discrete-time quantum walk},
  author={N. B. Lovett and S. Cooper and M. Everitt and M. Trevers and V. Kendon},
  journal={Physical Review A---Atomic, Molecular, and Optical Physics},
  volume={81},
  number={4},
  pages={042330},
  year={2010},
  publisher={APS}
}

@article{childs2013universal,
  title={Universal computation by multiparticle quantum walk},
  author={A. M. Childs and D. Gosset and Z. Webb},
  journal={Science},
  volume={339},
  number={6121},
  pages={791--794},
  year={2013},
  publisher={American Association for the Advancement of Science}
}

@inproceedings{childs2003exponential,
  title={Exponential algorithmic speedup by a quantum walk},
  author={A. M. Childs and R. Cleve and E. Deotto and E. Farhi and S. Gutmann and D. A. Spielman},
  booktitle={Proceedings of the thirty-fifth annual ACM symposium on Theory of computing},
  pages={59--68},
  year={2003}
}

@article{Ambainis2003,
  author  = {A. Ambainis},
  title   = {Quantum Walks and Their Algorithmic Applications},
  journal = {International Journal of Quantum Information},
  volume  = {1},
  number  = {4},
  pages   = {507--518},
  year    = {2003},
}

@article{Vlachou2015,
  author  = {C. Vlachou and J. Rodrigues and P. Mateus and N. Paunkovi{\'c} and A. Souto},
  title   = {Quantum Walk Public-Key Cryptographic System},
  journal = {International Journal of Quantum Information},
  volume  = {13},
  number  = {7},
  pages   = {1550050},
  year    = {2015},
}

@book{CoutinhoGodsil2016,
  author       = {G. Coutinho and C. Godsil},
  title        = {Graph Spectra and Continuous Quantum Walks},
  year         = {2016},
  note         = {Monograph/Book draft},
  institution  = {University of Waterloo},
}

@article{FarhiGutmann1998,
  author  = {E. Farhi and S. Gutmann},
  title   = {Quantum Computation and Decision Trees},
  journal = {Physical Review A},
  volume  = {58},
  number  = {2},
  pages   = {915--928},
  year    = {1998},
}

@article{zhan2021quantum,
  title={Quantum walks on embeddings},
  author={H. Zhan},
  journal={Journal of Algebraic Combinatorics},
  volume={53},
  number={4},
  pages={1187--1213},
  year={2021},
}

@article{WingBocanegra2023,
  author    = {A. Wing-Bocanegra and S. E. Venegas-Andraca},
  title     = {Circuit Implementation of Discrete-Time Quantum Walks via the Shunt Decomposition Method},
  journal   = {Quantum Information Processing},
  volume    = {22},
  number    = {3},
  pages     = {146},
  year      = {2023},
}

@inproceedings{sato2024circuit,
  title={Circuit Implementation of Discrete-Time Quantum Walks on Complex Networks},
  author={R. Sato and K. Saito},
  booktitle={2024 IEEE International Conference on Quantum Computing and Engineering (QCE)},
  volume={2},
  pages={376--377},
  year={2024},
}

@article{WingBocanegra2025,
  author    = {A. Wing-Bocanegra and C. E. Quintero-Narvaez and S. E. Venegas-Andraca},
  title     = {Circuit Implementation and Analysis of a Quantum-Walk Based Search Complement Algorithm},
  journal   = {Scientific Reports},
  volume    = {15},
  number    = {1},
  pages     = {4865},
  year      = {2025},
}

@inproceedings{Katuwal2026,
  author    = {B. Katuwal and S. M. S. Srinath and Y. Lakshmi Naidu},
  title     = {Graph-Decomposition Based Shift Matrices: Applications to Quantum Channels and Circuit Implementations},
  booktitle = {2026 International Conference on Next-Gen Quantum and Advanced Computing (NQComp)},
  year      = {2026},
  pages     = {753--760},
  publisher = {IEEE},
}

@article{panda2023quantum,
  title={Quantum direct communication protocol using recurrence in k-cycle quantum walks},
  author={S. S. Panda and P. A. A. Yasir and C. M. Chandrashekar},
  journal={Physical Review A},
  volume={107},
  number={2},
  pages={022611},
  year={2023},
}

@article{mulken2011continuous,
  title={Continuous-time quantum walks: Models for coherent transport on complex networks},
  author={O. M{\"u}lken and A. Blumen},
  journal={Physics Reports},
  volume={502},
  number={2-3},
  pages={37--87},
  year={2011},
}

@article{christandl2004perfect,
  title={Perfect state transfer in quantum spin networks},
  author={M. Christandl and N. Datta and A. Ekert and A. J. Landahl},
  journal={Physical Review Letters},
  volume={92},
  number={18},
  pages={187902},
  year={2004},
}

@article{godsil2012state,
  title={State transfer on graphs},
  author={C. Godsil},
  journal={Discrete Mathematics},
  volume={312},
  number={1},
  pages={129--147},
  year={2012},
}

@article{Godsil2011,
  author    = {C. Godsil},
  title     = {Periodic Graphs},
  journal   = {The Electronic Journal of Combinatorics},
  volume    = {18},
  number    = {1},
  pages     = {P23},
  year      = {2011},
}

@article{ChanZhan2023,
  author    = {A. Chan and H. Zhan},
  title     = {Pretty Good State Transfer in Discrete-Time Quantum Walks},
  journal   = {Journal of Physics A: Mathematical and Theoretical},
  volume    = {56},
  number    = {16},
  pages     = {165305},
  year      = {2023},
}

@article{Katuwal2026pst,
  author       = {B. Katuwal and S. M. S. Srinath and Y. Lakshmi Naidu and S. Dutta},
  title        = {Perfect State Transfer on Quotient Graphs in Shunt Decomposition-Based Quantum Walks},
  journal      = {arXiv preprint},
  volume       = {arXiv:2606.24440},
  year         = {2026},
}

@article{Bose2009,
  author    = {S. Bose and A. Casaccino and S. Mancini and S. Severini},
  title     = {Communication in XYZ All-to-All Quantum Networks with a Missing Link},
  journal   = {International Journal of Quantum Information},
  volume    = {7},
  number    = {4},
  pages     = {713--723},
  year      = {2009},
}

@article{ge2011perfect,
  title={Perfect state transfer, graph products and equitable partitions},
  author={Y. Ge and B. Greenberg and O. Perez and C. Tamon},
  journal={International Journal of Quantum Information},
  volume={9},
  number={03},
  pages={823--842},
  year={2011},
}

@article{angeles2009quantum,
  title={Quantum perfect state transfer on weighted join graphs},
  author={R. J. Angeles-Canul and R. M. Norton and M. C. Opperman and C. C. Paribello and M. C. Russell and C. Tamon},
  journal={International Journal of Quantum Information},
  volume={7},
  number={08},
  pages={1429--1445},
  year={2009},
}

@article{alvir2016perfect,
  title={Perfect state transfer in Laplacian quantum walk},
  author={R. Alvir and S. Dever and B. Lovitz and J. Myer and C. Tamon and Y. Xu and H. Zhan},
  journal={Journal of Algebraic Combinatorics},
  volume={43},
  number={4},
  pages={801--826},
  year={2016},
}

@article{arezoomand2023perfect,
  title={Perfect state transfer on semi-Cayley graphs over abelian groups},
  author={M. Arezoomand},
  journal={Linear and Multilinear Algebra},
  volume={71},
  number={14},
  pages={2337--2353},
  year={2023},
}

@article{2481614.2481624,
  author = {J. Brown and C. Godsil and D. Mallory and A. Raz and C. Tamon},
  title = {Perfect state transfer on signed graphs},
  year = {2013},
  issue_date = {May 2013},
  volume = {13},
  number = {5--6},
  issn = {1533-7146},
  journal = {Quantum Info. Comput.},
  month = {may},
  pages = {511--530},
  numpages = {20},
}

@article{kirkland2026quantum,
  title={Quantum walks on join graphs},
  author={S. Kirkland and H. Monterde},
  journal={Discrete Mathematics},
  volume={349},
  number={3},
  pages={114832},
  year={2026},
}

@article{godsil2025perfect,
  title={Perfect state transfer between real pure states},
  author={C. Godsil and S. Kirkland and H. Monterde},
  journal={SIAM Journal on Matrix Analysis and Applications},
  volume={46},
  number={3},
  pages={2093--2115},
  year={2025},
}

@book{brouwer2012spectra,
  title={Spectra of Graphs, Universitext, Springer, New York},
  author={A. E. Brouwer and W. H. Haemers},
  year={2012}
}

@article{zhang2022polarity,
  title={Polarity-based graph neural network for sign prediction in signed bipartite graphs},
  author={X. Zhang and H. Wang and J. Yu and C. Chen and X. Wang and W. Zhang},
  journal={World Wide Web},
  volume={25},
  number={2},
  pages={471--487},
  year={2022},
  publisher={Springer}
}

@article{acharya2016lict,
  title={On $\bullet$-lict signed graphs $L_{\bullet}^{c}(S)$ and
         $\bullet$-line signed graphs $L_{\bullet}(S)$},
  author={M. Acharya and R. Jain and S. Kansal},
  journal={Transactions on Combinatorics},
  volume={5},
  number={1},
  pages={37--48},
  year={2016},
  publisher={Transactions on Combinatorics}
}

@article{KubotaSegawa2022,
  author  = {S. Kubota and E. Segawa},
  title   = {Perfect state transfer in Grover walks between states associated to vertices of a graph},
  journal = {Linear Algebra and its Applications},
  volume  = {646},
  pages   = {238--251},
  year    = {2022}
}
\end{document}